\documentclass[sigconf, 10pt, nonacm]{acmart}
\usepackage{blindtext}
\usepackage[utf8]{inputenc}
\usepackage{epsfig,xspace,url,xcolor}
\usepackage{amsmath}
\usepackage{mathtools}
\usepackage{amsthm}
\usepackage{tikz}
\usetikzlibrary{positioning,calc}
\usepackage{caption}

\usepackage{graphicx}
\usepackage[inline]{enumitem}
\usepackage{tikz}
\usepackage{thmtools}
\usepackage{thm-restate}
\usepackage{scalerel}
\usepackage[ruled, lined, linesnumbered, commentsnumbered, noend]{algorithm2e}
\usepackage[noend]{algpseudocode}
\usepackage{graphicx}
\usepackage{blindtext}
\usepackage{url}
\usepackage{subcaption}
\usepackage{mathtools}
\usepackage{xcolor}
\usepackage{color, colortbl}
\usepackage{tablefootnote}
\usetikzlibrary{positioning, shapes.geometric, shadows}
\usepackage[subtle]{savetrees}
\usepackage{graphbox}
\usepackage{url}

\usepackage[normalem]{ulem}
\usepackage{wrapfig}
\usepackage{wrapstuff}
\usepackage{breakurl}
\renewcommand\footnotetextcopyrightpermission[1]{} % removes footnote with conference info
\let\ACMmaketitle=\maketitle
\renewcommand{\maketitle}{\begingroup\let\footnote=\thanks \ACMmaketitle\endgroup}
\let\height\relax

\newcommand{\name}{\textsc{Harvest}\xspace}

\usepackage[framemethod=TikZ]{mdframed}
\usepackage{enumitem}
\colorlet{rowHighlight}{green!10}
\definecolor{darkgreen}{rgb}{0.01, 0.75, 0.24}
\definecolor{darkcyan}{rgb}{0.0, 0.55, 0.55}
\definecolor{mycolor}{rgb}{0.01, 0.75, 0.24}
\definecolor{takeawaycolor}{rgb}{0.75, 0, 0}
\definecolor{mygray}{gray}{0.9}
\definecolor{myred}{rgb}{0.75, 0, 0}

\definecolor{alg1base}{rgb}{0.957, 1, 0.961}
\colorlet{alg1}{alg1base!90!white}
\definecolor{alg2base}{rgb}{0.957, 0.988, 1}
\colorlet{alg2}{alg2base!90!white}
\definecolor{alg3base}{rgb}{0.824, 0.914, 1}
\colorlet{alg3}{alg3base!90!white}

\mathchardef\hyphen="2D

\newtheoremstyle{sltheorem}
{}                % Space above
{}                % Space below
{\slshape}        % Theorem body font % (default is "\upshape")
{}                % Indent amount
{\bfseries}       % Theorem head font % (default is \mdseries)
{.}               % Punctuation after theorem head % default: no punctuation
{ }               % Space after theorem head
{}                % Theorem head spec
\theoremstyle{sltheorem}

\newtheorem{theorem}{Theorem}
\newtheorem{observation}{Observation}

\newtheorem{Lemma}{Lemma}

\newmdenv[
  backgroundcolor=gray!20,
  linecolor=gray,
  linewidth=1pt,
  roundcorner=5pt,
  skipabove=6pt,
  skipbelow=6pt,
  innertopmargin=1pt,
  innerbottommargin=4pt,
  innerleftmargin=6pt,   % tighter left margin inside the box
  innerrightmargin=6pt   % tighter right margin inside the box
]{graybox}

\newcommand{\myitem}[1]{\vspace{0.6mm}\noindent\textbf{#1}}

\newcommand{\remove}[1]{}

\newcommand{\first}{\emph{(i)}\xspace}
\newcommand{\second}{\emph{(ii)}\xspace}
\newcommand{\third}{\emph{(iii)}\xspace}
\newcommand{\fourth}{\emph{(iv)}\xspace}

\usepackage{xcolor}
\definecolor{light-gray}{gray}{0.9}
\usepackage{tcolorbox}

\newcommand{\cref}[1]{Chapter~\ref{#1}}

\newcommand{\dpv}{\ensuremath{DP}\xspace}
\newcommand{\nextv}{\ensuremath{next}\xspace}

\DeclareMathOperator*{\argmin}{arg\,min}

\newcommand*{\affaddr}[1]{#1}
\newcommand*{\affmark}[1][*]{\textsuperscript{#1}}
\DeclareSymbolFont{extraup}{U}{zavm}{m}{n}
\DeclareMathSymbol{\varheart}{\mathalpha}{extraup}{86}
\DeclareMathSymbol{\vardiamond}{\mathalpha}{extraup}{87}
\title{Harvest: Adaptive Photonic Switching Schedules\\for Collective Communication in Scale-up Domains
}

\author{%
{\rm Mahir Rahman\affmark[$\blacklozenge$]\ \ Samuel Joseph\affmark[$\blacklozenge$]\ \ Nihar Kodkani\affmark[$\blacklozenge$]\ \ Behnaz Arzani\affmark[$\spadesuit$]\ \ Vamsi Addanki\affmark[$\blacklozenge$]
}\\
\affaddr{{\rm \affmark[$\blacklozenge$]}\textit{Purdue University}}\ \ \
\affaddr{{\rm \affmark[$\spadesuit$]}\textit{Microsoft Research}}
}

\renewcommand{\shortauthors}{Rahman et al.}

\begin{document}

\begin{abstract}

As chip-to-chip silicon photonics gain traction for their bandwidth and energy efficiency, their circuit-switched nature raises a fundamental question for collective communication: \emph{when} and \emph{how} should the interconnect be reconfigured to realize these benefits?
Establishing direct optical paths can reduce congestion and propagation delay, but each reconfiguration incurs non-negligible overhead, making naive per-step reconfiguration impractical.

We present \name, a systematic approach for synthesizing topology reconfiguration schedules that minimize collective completion time in photonic interconnects.
Given a collective communication algorithm and its \emph{fixed} communication schedule, \name determines how the interconnect should evolve over the course of the collective, explicitly balancing reconfiguration delay against congestion and propagation delay. We reduce the synthesis problem into a dynamic program with an underlying topology optimization subproblem and show that the approach applies to arbitrary collective communication algorithms. Furthermore, we exploit the algorithmic structure of a well-known AllReduce algorithm (Recursive Doubling) to synthesize \emph{optimal} reconfiguration schedules without using any optimizers.
By parameterizing the formulation using reconfiguration delay, \name naturally adapts to various photonic technologies. Using packet-level and flow-level evaluations, as well as hardware emulation on commercial GPUs, we show that the schedules synthesized by \name significantly reduce collective completion time across multiple collective algorithms compared to static interconnects and reconfigure-every-step baselines.

\end{abstract}

\maketitle
\pagestyle{plain}
\thispagestyle{plain}

\section{Introduction}
\label{sec:intro}

The explosive growth of AI/ML workloads~\cite{NEURIPS2020_1457c0d6,shoeybi2019megatron,10.1145/3458817.3476209,10.1145/3579371.3589350,295551,10.1145/3651890.3672265,10.1145/3651890.3672233}, together with the increasing scale of distributed computing infrastructure~\cite{10.1145/3696348.3696893,10.1145/3651890.3672265,10.1145/3651890.3672233}, has led to rapidly rising demands on network bandwidth and energy efficiency. The performance of these workloads critically depends on collective communication among GPUs, such as AllReduce and All-to-All~\cite{295653,10.1145/3437801.3441620,285084,10.1145/3651890.3672249}. Modern hyperscale systems consist of large numbers of multi-GPU servers interconnected by packet-switched networks that support GPU-to-GPU communication~\cite{nvidia2023superpod}.
Despite their widespread adoption, these interconnects face fundamental limitations. Electrical links are power-hungry and generate significant heat~\cite{10946778}, raising concerns for scalability and sustainability. At the same time, intra-server GPU interconnects rely on CMOS-based technologies whose bandwidth has not kept pace with GPU compute growth~\cite{10.1145/3579371.3589350,10.1145/3387514.3406221,10.1145/3098822.3098838,10.1145/3651890.3672273}. As the slowdown of Moore's Law continues to widen the gap between computation and communication~\cite{10.1145/3387514.3406221}, these constraints become especially pronounced in scale-up systems, where limited interconnect bandwidth, such as PCIe, increasingly bottlenecks collective performance.

Silicon photonics promise substantially higher bandwidth and improved energy efficiency~\cite{9007742,10.1145/3696348.3696856,10.1145/3452296.3472900,Ding:25}, making them an attractive alternative.
At the same time, their circuit-switched nature introduces new challenges for collective communication due to non-negligible reconfiguration delays. A static photonic topology avoids reconfiguration overhead, but inevitably suffers from congestion caused by multi-hop forwarding between GPUs. Conversely, a dynamically reconfigurable topology can, in principle, eliminate congestion by establishing direct optical paths between communicating GPUs, but only at the cost of reconfiguration delay. Balancing this fundamental tradeoff is essential to realizing the practical benefits of silicon photonics.

Prior work on optical circuit-switched networks largely falls into three categories.
One line of work advocates one-shot or infrequent reconfiguration when reconfiguration overhead is high~\cite{285119}, effectively treating the topology as static during execution.
Another line assumes that reconfiguration overhead is negligible and relies on periodic or demand-aware reconfiguration~\cite{10.1145/3651890.3672248,10.1145/3519935.3520020,10.1145/3579312,10.1145/2934872.2934911,10.1145/2486001.2486007,10.1145/3651890.3672222}, often using Birkhoff--von Neumann (BvN) decompositions~\cite{birkhoff1946three} of aggregate traffic matrices.
A third category explicitly incorporates reconfiguration delay into the optimization objective~\cite{10.1145/2896377.2901479,1230204,10.1145/2716281.2836126}, but still adopts a traffic-matrix abstraction and restricts routing to single-hop paths within each topology choice\footnote{These works permit multi-hop forwarding only across reconfiguration events~\cite{10.1145/2896377.2901479}, or rely on auxiliary electrical interconnects when reconfiguration delays are high~\cite{10.1145/2716281.2836126}.}.

While appropriate for bulk or steady-state traffic, these abstractions fundamentally ignore the step-wise structure of collective communication.
Unlike bulk traffic, collective communication is \emph{staged}: progress unfolds over a sequence of steps, and the communication pattern at each step has \emph{known} dependencies on prior ones.
This structure creates an opportunity to plan reconfigurations across multiple steps, exploiting future knowledge to amortize reconfiguration delay against reductions in congestion and propagation delay.
However, existing approaches, whether they ignore reconfiguration overhead, restrict reconfiguration during execution, or optimize over aggregate traffic matrices, fail to capture these step-level dependencies.
As a result, they forgo a significant opportunity to reduce collective completion time.
This gap in the literature raises a natural question:

\smallskip
\emph{To what extent can reconfigurability be exploited to reduce collective completion time?}
\smallskip

Answering this question requires deciding \emph{when} and \emph{how} to reconfigure a photonic interconnect \emph{during} a collective, while balancing reconfiguration delay against congestion and propagation delay.
Our approach is grounded in two observations that together provide a principled way to reason about this trade-off.

First, unlike prior work that applies Birkhoff--von Neumann (BvN) decompositions to \emph{aggregate} traffic matrices, many collective communication algorithms admit a natural BvN representation at the level of individual communication steps~\cite{birkhoff1946three}.
Each step corresponds to a matching, and the collective as a whole can be viewed as a weighted sequence of such matchings.
This structure arises directly from the staged nature of collective primitives, which are traditionally designed around point-to-point communication patterns\footnote{We generalize beyond the point-to-point communication model later in the paper (\S\ref{sec:cct-model}).}~\cite{https://doi.org/10.1002/cpe.1206}. Second, this representation connects naturally to performance analysis.

Based on these observations, we can express the completion time of each step through the lens of maximum concurrent flow~\cite{10.1145/77600.77620},which permits multi-hop forwarding under a chosen topology. Interestingly, this representation uncovers the classic $\alpha$--$\beta$ cost model for collective communication, explicitly accounting for network congestion.

The model can then be extended for reconfigurable interconnects by explicitly accounting for reconfiguration overhead, where each topology change contributes an additional delay $\alpha_r$ to the overall completion time. This new formulation quantifies collective completion time in a way that explicitly captures reconfiguration overhead.

Finally, we cast this formulation as an optimization problem that synthesizes circuit-switching schedules which adapt to the underlying reconfiguration delay and determine \emph{when} and \emph{how} the interconnect should reconfigure to minimize total collective completion time.

We present \name, a framework that synthesizes optimal photonic switching schedules for any \emph{given} collective communication algorithm.
The key insight underlying \name is that topology synthesis exhibits a natural recurrence over contiguous ranges of communication steps, which enables a dynamic programming formulation.
Each subproblem selects an optimal topology for a sequence of steps executed without reconfiguration.
The resulting subproblem resembles degree-bounded, demand-aware network design, with a crucial distinction: communication demand is not available upfront, but is revealed sequentially, as each step of the collective depends on the completion of prior ones.
To capture this temporal structure, we formulate the subproblem as a Mixed-Integer Second-Order Conic Program (MISOCP) and integrate it with the dynamic program to construct a globally optimal reconfiguration schedule.
We synthesize topologies offline, computing the schedule once and reusing it across all executions of the collective.
Our framework applies to arbitrary collective communication algorithms and is suitable for scale-up domains with typical network sizes ranging from $8$ to $64$ GPUs.

We further apply \name to the recursive doubling AllReduce algorithm~\cite{10.1007/978-3-540-24685-5-1}, which exhibits additional structure that substantially simplifies schedule synthesis.
Exploiting this structure, we show that optimal topology reconfigurations can be computed with polylogarithmic complexity, and empirically within tens of microseconds, even for interconnects with up to $1024$ GPUs.

We evaluate \name using extensive packet-level simulations in Astra-Sim~\cite{10158106}, numerical evaluations using Gurobi~\cite{gurobi}, and testbed emulation.
Across a range of collective algorithms, including Swing~\cite{295653}, recursive doubling~\cite{kolmakov2020generalization}, Bine butterflies~\cite{10.1145/3712285.3759835}, binomial trees, and Bruck's algorithms~\cite{642949}, the topologies synthesized by \name reduce collective completion time by up to $\approx 2$x even compared to the best strategy among static topologies and BvN schedules that reconfigure at every step.
We further measure the synchronization overhead introduced by in-collective reconfigurations on a testbed with NVIDIA GPUs and find that these overheads are negligible, on the order of a few microseconds.

\smallskip
In summary, our main contributions are:
\begin{itemize}[label=\small{\textcolor{black}{$\blacksquare$}}, leftmargin=*]
	\item \name, a systematic approach for navigating the trade-off between congestion, propagation delay, and reconfiguration delay in photonic interconnects for collective communication. \name synthesizes optimal photonic switching schedules by combining dynamic programming with topology optimization and applies to arbitrary collective algorithms.
	\item Structural insights into recursive doubling AllReduce that enable optimal schedule synthesis within polylogarithmic complexity, eliminating the need for mixed-integer optimization in this special case.
	\item An extensive evaluation using Astra-Sim, flow-level simulations, and hardware emulation on commercial GPUs, demonstrating significant performance improvements over static and reconfigure-every-step baselines.
	\item Public release of all artifacts as open-source.
\end{itemize}

\emph{This work does not raise any ethical issues.}

\section{Background \& Motivation}
\label{sec:bottlenecks}

\begin{figure}[t]
  \centering
  \includegraphics[width=0.8\linewidth]{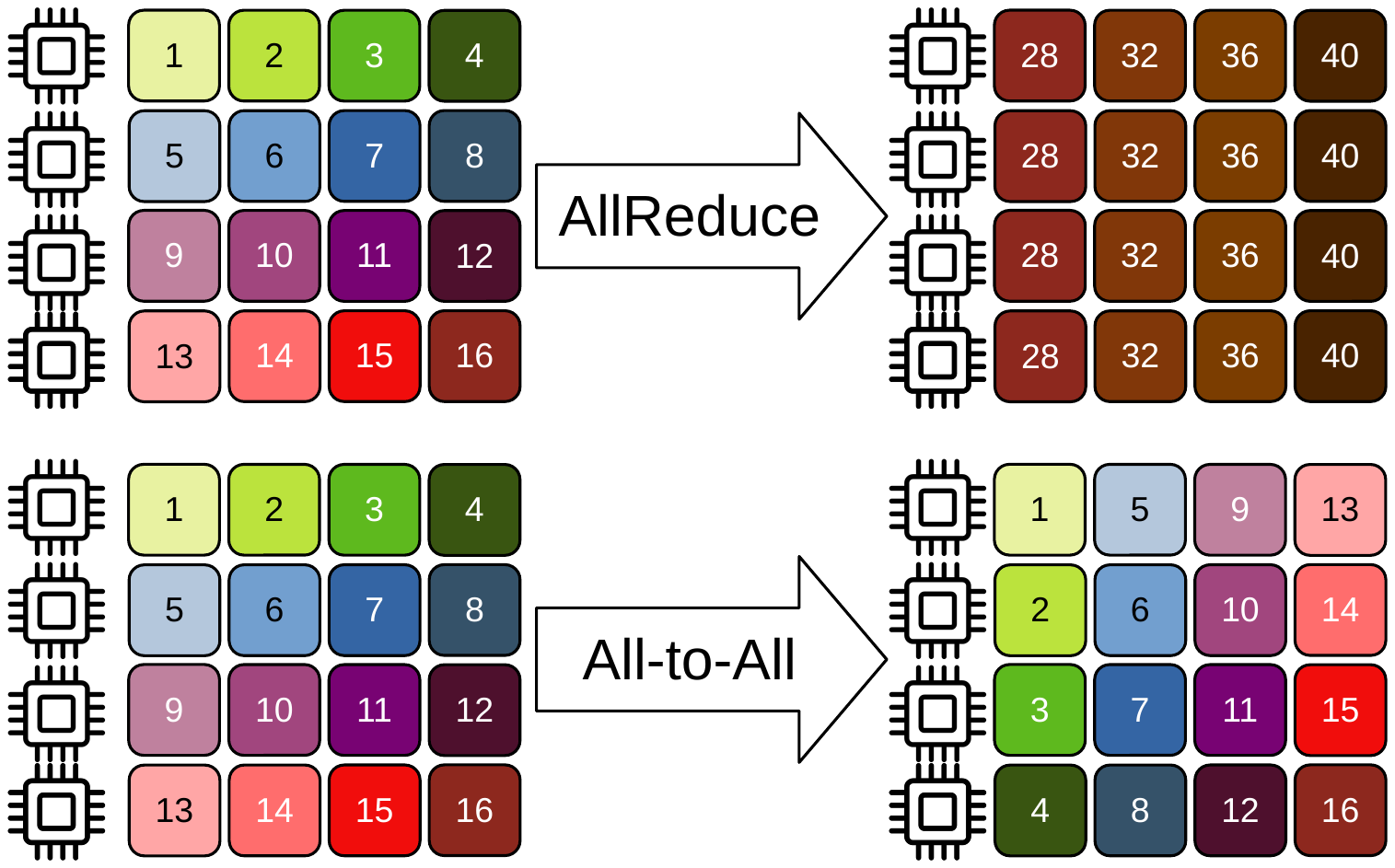}
  \vspace{-6mm}
  \caption{Collective communication primitives}
  \label{fig:allreduce-alltoall}
  \vspace{-4mm}
\end{figure}

Unlike traditional datacenter applications, the collective operations among GPUs in AI/ML workloads result in staged and highly structured communication patterns. Among these collectives, AllReduce and All-to-All are especially prevalent~\cite{shoeybi2019megatron,10.1145/3458817.3476209,10.1145/3651890.3672265,9355301,lepikhin2021gshard,NEURIPS2022_2f00ecd7,pmlr-v139-lewis21a,pmlr-v162-rajbhandari22a}.

All reduce, as the name suggests ``reduces'' (e.g., sum) the data each GPU holds and then distributes the results to all others. In All-to-All (Figure~\ref{fig:allreduce-alltoall}) each GPU $j$ delivers a portion of the data it holds (block $i$) to another GPU ($i$).
A wide range of algorithms exist for both of these collectives~\cite{295653,10.1007/978-3-540-24685-5-1,10.1145/2686882,642949}. Both primitives are bandwidth-intensive and latency-sensitive.

\myitem{Limits of topology-aware collectives:} Many prior work design collective algorithms for~\emph{specific} network topologies using the classic $\alpha$--$\beta$ cost model~\cite{10.1145/3437801.3441620,285084,10.1145/3651890.3672249}.
While these topology-aware algorithms can improve collective efficiency on fixed interconnects, they inherit the rigidity of the static networks on which they run.
Multi-step collectives~\cite{10.1007/978-3-540-24685-5-1}, repeatedly exchange data across different pairs of communication partners.
Under a fixed topology, some of these exchanges must traverse longer or congested paths, increasing both latency and bandwidth consumption~\cite{295653}.
As a result, static networks must provision for the worst-case demand over the entire lifetime of the collective, which often leads to underutilization when communication is sparse or staged.
Techniques such as pipelining or mirroring can partially mitigate these effects~\cite{295653} but they cannot fully overcome the limitations.

\myitem{Throughput modeling and BvN decompositions:}
The maximum concurrent flow framework~\cite{10.1145/77600.77620} is a standard tool for reasoning about network throughput and congestion~\cite{jyothi2016measuring,highthroughputSingla,10.1145/3452296.3472913,10.1145/3579312,10.1145/3519935.3520020}.
It connects naturally to Birkhoff--von Neumann (BvN) decompositions, which express an aggregate traffic matrix as a convex combination of matchings~\cite{10.1145/3452296.3472913}.
This abstraction underlies many approaches for synthesizing circuit-switching schedules in demand-aware networks~\cite{10.1145/2486001.2486007,285119,10.1145/3452296.3472900}.
However, BvN decompositions and traffic-matrix-based formulations inherently assume that all communication demand is available simultaneously.
Collective communication violates this assumption: collectives generate and consume data in a strict sequence, and later communication steps cannot begin until earlier steps complete.
As a result, static traffic-matrix decompositions fail to capture the temporal dependencies that are intrinsic to many collective algorithms.

\myitem{Programmable but costly reconfiguration:}
Reconfigurable photonic interconnects enable the network topology to adapt to the communication pattern of each collective step and can potentially reduce congestion and improve throughput~\cite{Ding:25,10.1145/3696348.3696856,10.1145/3748273.3749203}. But this flexibility has a cost. The reconfiguration delay is high in photonic interconnects which can negate any gains if we are not careful~\cite{Ding:25}.
Much of the existing literature either assumes that reconfiguration overheads are negligible or avoids reconfiguration altogether when they are not.

We advocate a more principled perspective that bridges the staged structure of collective algorithms, the limits imposed by network throughput, and the real cost of reconfiguration.

\begin{figure*}
\centering
\begin{minipage}{0.68\linewidth}
\begin{subfigure}{0.33\linewidth}
\centering
\begin{tikzpicture}[scale=1]
  \def\n{8}              % number of nodes
  \def\r{1.15cm}            % radius of the big circle
  \def\start{90}         % start angle (90 = top)

  % Big circle
  \draw[thick] (0,0) circle (\r);

  % Place nodes on the circumference
  \foreach \i in {1,...,\n} {
    \pgfmathsetmacro{\ang}{\start - 360*(\i-1)/\n}
    \node[circle,draw,fill=white,inner sep=1pt,font=\small] (N\i) at (\ang:\r) {\i};
  }

  % Arrows inside the circle: i -> i+1 (wrap 8 -> 1)
  \foreach \i [evaluate=\i as \j using {int(mod(\i,\n)+1)}] in {1,...,\n} {
    \draw[->,thick,color=red,shorten >=1pt,shorten <=1pt] (N\i) -- (N\j);
  }
\end{tikzpicture}
\caption{Step $1$ - $G_1$\\Congestion $=1$}
\label{fig:step1-g1}
\end{subfigure}
\begin{subfigure}{0.33\linewidth}
\centering
\begin{tikzpicture}[scale=1]
  \def\n{8}              % number of nodes
  \def\r{1.15cm}            % radius of the big circle
  \def\start{90}         % start angle (90 = top)

  % Big circle
  \draw[thick] (0,0) circle (\r);

  % Place nodes on the circumference
  \foreach \i in {1,...,\n} {
    \pgfmathsetmacro{\ang}{\start - 360*(\i-1)/\n}
    \node[circle,draw,fill=white,inner sep=1pt,font=\small] (N\i) at (\ang:\r) {\i};
  }

  % Arrows inside the circle: i -> i+1 (wrap 8 -> 1)
  \foreach \i [evaluate=\i as \j using {int(mod(\i+1,\n)+1)}] in {1,...,\n} {
    \draw[->,thick,color=red,shorten >=1pt,shorten <=1pt] (N\i) -- (N\j);
  }
\end{tikzpicture}
\caption{Step $2$ - $G_1$\\Congestion $=2$}
\label{fig:step2-g1}
\end{subfigure}
\begin{subfigure}{0.33\linewidth}
\centering
\begin{tikzpicture}[scale=1]
  \def\n{8}              % number of nodes
  \def\r{1.15cm}            % radius of the big circle
  \def\start{90}         % start angle (90 = top)

  % Big circle
  \draw[thick] (0,0) circle (\r);

  % Place nodes on the circumference
  \foreach \i in {1,...,\n} {
    \pgfmathsetmacro{\ang}{\start - 360*(\i-1)/\n}
    \node[circle,draw,fill=white,inner sep=1pt,font=\small] (N\i) at (\ang:\r) {\i};
  }

  % Arrows inside the circle: i -> i+1 (wrap 8 -> 1)
  \foreach \i [evaluate=\i as \j using {int(mod(\i+3,\n)+1)}] in {1,...,\n} {
    \draw[->,thick,color=red,shorten >=1pt,shorten <=1pt] (N\i) -- (N\j);
  }
\end{tikzpicture}
\caption{Step $3$ - $G_3$\\Congestion $=4$}
\label{fig:step3-g1}
\end{subfigure}
\begin{subfigure}{0.33\linewidth}
\centering
\begin{tikzpicture}[scale=1]
  \def\n{4}        % number of nodes
  \def\r{0.75cm}    % radius
  \def\start{90}   % start angle (90 = top)

  % Draw the circle
  \draw[thick] (0,0) circle (\r);

  % Place nodes with custom INTEGER labels (in this exact order)
  % We count i=1..n while iterating labels directly.
  \foreach [count=\i from 1] \lab in {1,3,5,7} {
    \pgfmathsetmacro{\ang}{\start - 360*(\i-1)/\n}
    \node[circle,draw,fill=white,inner sep=1pt,font=\small]
         (N\i) at (\ang:\r) {\lab};
  }

  % (Optional) arrows in label order: 10->20->30->40->10
  \foreach \i [evaluate=\i as \j using {int(mod(\i,\n)+1)}] in {1,...,\n} {
    \draw[->,thick,shorten >=1pt,shorten <=1pt,color=red] (N\i) -- (N\j);
  }
\end{tikzpicture}
\begin{tikzpicture}[scale=1]
  \def\n{4}
  \def\r{0.75cm}
  \def\start{90}
  \draw[thick] (0,0) circle (\r);
  \foreach [count=\i from 1] \lab in {2,4,6,8} {
    \pgfmathsetmacro{\ang}{\start - 360*(\i-1)/\n}
    \node[circle,draw,fill=white,inner sep=1pt,font=\small]
         (N\i) at (\ang:\r) {\lab};
  }
  \foreach \i [evaluate=\i as \j using {int(mod(\i,\n)+1)}] in {1,...,\n} {
    \draw[->,thick,color=red,shorten >=1pt,shorten <=1pt] (N\i) -- (N\j);
  }
\end{tikzpicture}
\caption{Step 2 - $G_2$\\Congestion $=1$}
\label{fig:step2-g2}
\end{subfigure}
\begin{subfigure}{0.33\linewidth}
\centering
\begin{tikzpicture}[scale=1]
  \def\n{4}
  \def\r{0.75cm}
  \def\start{90}
  \draw[thick] (0,0) circle (\r);
  \foreach [count=\i from 1] \lab in {1,3,5,7} {
    \pgfmathsetmacro{\ang}{\start - 360*(\i-1)/\n}
    \node[circle,draw,fill=white,inner sep=1pt,font=\small]
         (N\i) at (\ang:\r) {\lab};
  }
  \foreach \i [evaluate=\i as \j using {int(mod(\i+1,\n)+1)}] in {1,...,\n} {
    \draw[->,thick,color=red,shorten >=1pt,shorten <=1pt] (N\i) -- (N\j);
  }
\end{tikzpicture}
\begin{tikzpicture}[scale=1]
  \def\n{4}
  \def\r{0.75cm}
  \def\start{90}
  \draw[thick] (0,0) circle (\r);
  \foreach [count=\i from 1] \lab in {2,4,6,8} {
    \pgfmathsetmacro{\ang}{\start - 360*(\i-1)/\n}
    \node[circle,draw,fill=white,inner sep=1pt,font=\small]
         (N\i) at (\ang:\r) {\lab};
  }
  \foreach \i [evaluate=\i as \j using {int(mod(\i+1,\n)+1)}] in {1,...,\n} {
    \draw[->,thick,color=red,shorten >=1pt,shorten <=1pt] (N\i) -- (N\j);
  }
\end{tikzpicture}
\caption{Step 3 - $G_2$\\Congestion $=2$}
\label{fig:step3-g2}
\end{subfigure}
\begin{subfigure}{0.33\linewidth}
\centering
\begin{tikzpicture}[scale=1]
  \def\n{2}
  \def\r{0.375cm}
  \def\start{90}
  \draw[thick] (0,0) circle (\r);
  \foreach [count=\i from 1] \lab in {1,5} {
    \pgfmathsetmacro{\ang}{\start - 360*(\i-1)/\n}
    \node[circle,draw,fill=white,inner sep=1pt,font=\small]
         (N\i) at (\ang:\r) {\lab};
  }
  \foreach \i [evaluate=\i as \j using {int(mod(\i,\n)+1)}] in {1,...,\n} {
    \draw[->,thick,color=red,shorten >=1pt,shorten <=1pt] (N\i) -- (N\j);
  }
\end{tikzpicture}
\begin{tikzpicture}[scale=1]
  \def\n{2}
  \def\r{0.375cm}
  \def\start{90}
  \draw[thick] (0,0) circle (\r);
  \foreach [count=\i from 1] \lab in {2,6} {
    \pgfmathsetmacro{\ang}{\start - 360*(\i-1)/\n}
    \node[circle,draw,fill=white,inner sep=1pt,font=\small]
         (N\i) at (\ang:\r) {\lab};
  }
  \foreach \i [evaluate=\i as \j using {int(mod(\i,\n)+1)}] in {1,...,\n} {
    \draw[->,thick,color=red,shorten >=1pt,shorten <=1pt] (N\i) -- (N\j);
  }
\end{tikzpicture}
\begin{tikzpicture}[scale=1]
  \def\n{2}
  \def\r{0.375cm}
  \def\start{90}
  \draw[thick] (0,0) circle (\r);
  \foreach [count=\i from 1] \lab in {3,7} {
    \pgfmathsetmacro{\ang}{\start - 360*(\i-1)/\n}
    \node[circle,draw,fill=white,inner sep=1pt,font=\small]
         (N\i) at (\ang:\r) {\lab};
  }
  \foreach \i [evaluate=\i as \j using {int(mod(\i,\n)+1)}] in {1,...,\n} {
    \draw[->,thick,color=red,shorten >=1pt,shorten <=1pt] (N\i) -- (N\j);
  }
\end{tikzpicture}
\begin{tikzpicture}[scale=1]
  \def\n{2}
  \def\r{0.375cm}
  \def\start{90}
  \draw[thick] (0,0) circle (\r);
  \foreach [count=\i from 1] \lab in {4,8} {
    \pgfmathsetmacro{\ang}{\start - 360*(\i-1)/\n}
    \node[circle,draw,fill=white,inner sep=1pt,font=\small]
         (N\i) at (\ang:\r) {\lab};
  }
  \foreach \i [evaluate=\i as \j using {int(mod(\i,\n)+1)}] in {1,...,\n} {
    \draw[->,thick,color=red,shorten >=1pt,shorten <=1pt] (N\i) -- (N\j);
  }
\end{tikzpicture}
\caption{Step 3 - $G_3$\\Congestion $=1$}
\label{fig:step3-g3}
\end{subfigure}
\caption{Various configurations of topology and communication patterns during recursive doubling allreduce algorithm. Black lines indicate the physical topology, and the red arrows indicate the communication between nodes.}
\label{fig:rd-configurations}
\end{minipage}\hfill
\begin{minipage}{0.31\linewidth}
\centering
\includegraphics[width=1\linewidth]{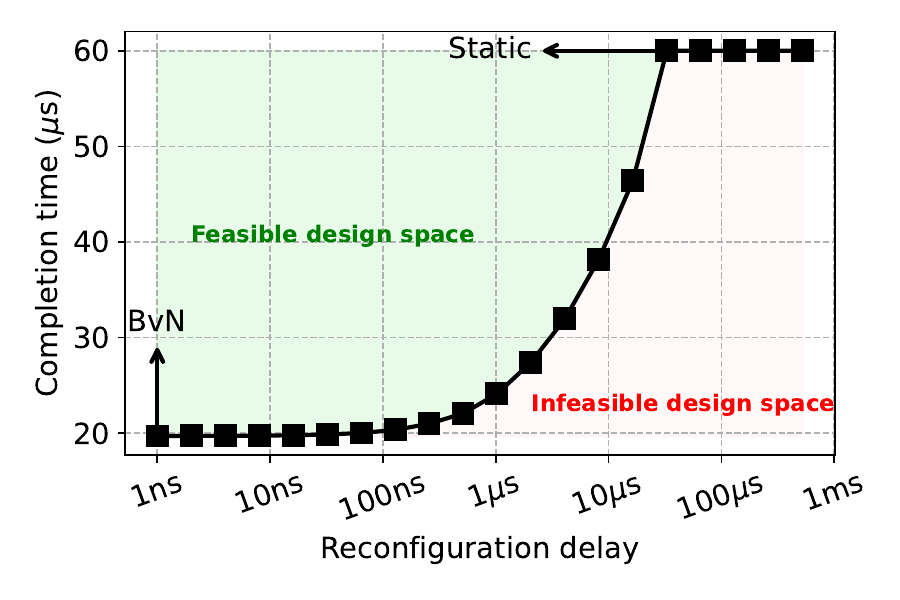}
\vspace{-4mm}
\caption{Reconfiguration delay--aware circuit-switching schedules for recursive doubling reveal the full design spectrum between BvN and static topologies. The black curve denotes a lower bound on completion time, beyond which no reconfiguration schedule can achieve further improvement.}
\vspace{-5mm}
\label{fig:intro}
\end{minipage}
\end{figure*}

\section{A Theory for Adaptive Scaleup Domains}
\label{sec:modeling}
We first describe our architecture (\S\ref{sec:architecture}), and motivate the case for reconfiguration delay-aware circuit-switching with an example (\S\ref{sec:example}). We then revisit modeling the completion time of collectives (\S\ref{sec:cct-model}), revealing an optimization opportunity to account for interconnect reconfiguration delays (\S\ref{sec:opportunity}). 

\subsection{Architecture and Assumptions}
\label{sec:architecture}

\myitem{Interconnect:}
We consider a scale-up domain with $n$ GPUs, each equipped with an electrical-to-optical transceiver (e.g., TeraPhy~\cite{9007742}) of capacity $c$.
All $n$ transceivers connect to a photonic interconnect with $n$ ports, which can establish direct optical paths between pairs of ports, thereby enabling GPU-to-GPU communication~\cite{torrijos2026industry}.
The interconnect is programmable and supports dynamic reconfiguration of optical paths on demand~\cite{10.1145/3696348.3696856}. Either a central controller controls the interconnect or it is passive (transceivers can tune the wavelength of the emitted light).
In the latter case, wavelength-selective switching within the photonic fabric establishes direct paths between ports without centralized control.
In both designs, reconfiguring the interconnect incurs a non-negligible delay, denoted by $\alpha_r$.
Several photonic technologies incur reconfiguration delays that depend on the number of ports involved in the reconfiguration~\cite{Ding:25}.

We assume the reconfiguration delay, $\alpha_r$, is constant (e.g., based on the total port count).
Our framework can be extended to capture port-dependent or topology-dependent reconfiguration delays.
We assume that all GPUs reside within a single scale-up domain and have fast access to shared memory, as in modern systems such as DGX-class servers~\cite{9895480}.
This enables GPUs to synchronize efficiently using a barrier before a collective step, perform reconfiguration synchronously if needed, and then proceed with communication. We assume collectives with $n$ GPUs. We can also selectively apply our framework to subsets of GPUs (where we reconfigure a subset of ports when necessary).

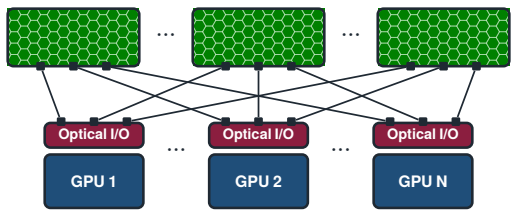
\begin{figure}[t]
\centering
\resizebox{0.8\columnwidth}{!}{
\begin{tikzpicture}
% ---- Color Palette ----
\definecolor{imgBlue}{HTML}{1F4E79}   % Original Light Blue
\definecolor{imgRed}{HTML}{8B1E3F}    % Original Salmon Red
\definecolor{imgDark}{HTML}{1F2933}   % Dark border color
\tikzset{
    % General "Button" Style matching the image blocks
    block/.style={
        draw=imgDark, 
        line width=1.8pt, 
        rounded corners=4pt, 
        font=\sffamily\bfseries,
        text=imgDark
    },
    % Specific Component Styles - Now at 50% saturation for a "deep whitish" look
    gpu/.style={
        block,
        fill=imgBlue,         % 50% Blue tint
        minimum width=2.4cm, 
        minimum height=1.3cm,
        font=\fontfamily{qhv}\selectfont\bfseries\large
    },
    opticalio/.style={
        block,
        fill=imgRed,          % 50% Red tint
        minimum width=2.4cm, 
        minimum height=0.6cm,
        font=\fontfamily{qhv}\selectfont\bfseries
    },
    % Links
    link/.style={
        draw=imgDark, 
        line width=1.2pt, 
        >=latex
    },
    port/.style={
        rectangle, 
        rounded corners=1pt, 
        fill=imgDark,
        minimum width=6pt, 
        minimum height=6pt, 
        inner sep=0pt
    }
}

% ---------------- GPUs (Bottom Layer) ----------------
\node[gpu,text=white] (gpu1) at (-4,0) {GPU 1};
\node[opticalio, text=white, above=3pt of gpu1] (io1) {Optical I/O};

\node[gpu,text=white] (gpu2) at (0,0) {GPU 2};
\node[opticalio, text=white, above=3pt of gpu2] (io2) {Optical I/O};

\node[gpu,text=white] (gpu3) at (4,0) {GPU N};
\node[opticalio, text=white, above=3pt of gpu3] (io3) {Optical I/O};

% Ellipsis dots
\node[font=\sffamily\bfseries\huge, color=imgDark] at (2,0.7) {$\cdots$};
\node[font=\sffamily\bfseries\huge, color=imgDark] at (-2,0.7) {$\cdots$};

% ---------------- Ports ----------------
\foreach \io in {io1,io2,io3}{
    \node[port] (\io p1) at ([xshift=-0.8cm] \io.north) {};
    \node[port] (\io p2) at (\io.north) {};
    \node[port] (\io p3) at ([xshift=0.8cm] \io.north) {};
}

% ---------------- Photonic tiles (Green with Honeycomb) ----------------
\def\hexSize{0.16} 
\foreach \x/\name in {-4.5/T1, 0/T2, 4.5/Tn}{
    \coordinate (A\name) at (\x-1.6,2.8);
    \coordinate (B\name) at (\x+1.6,4.2);
    
    % Background fill
     \fill[green!50!black] (A\name) rectangle (B\name);
    
    % Honeycomb pattern
    \begin{scope}
      \clip[rounded corners=4pt] (A\name) rectangle (B\name);
      \foreach \i in {-15,...,15}{ 
        \foreach \j in {-5,...,5}{
          \pgfmathsetmacro{\hx}{\hexSize*(3*\i/2)}
          \pgfmathsetmacro{\hy}{\hexSize*(sqrt(3)*(\j + \i/2))}
          \begin{scope}[shift={(\x+\hx,3.5+\hy)}] 
            \draw[green!8, opacity=0.45, line width=0.6pt]
              (0:\hexSize)--(60:\hexSize)--(120:\hexSize)--(180:\hexSize)--(240:\hexSize)--(300:\hexSize)--cycle;
          \end{scope}
        }
      }
    \end{scope}
    
    % Border
    \draw[imgDark, line width=1.8pt, rounded corners=4pt] (A\name) rectangle (B\name);
    
    % Ports
    \node[port] (p\name1) at (\x-0.8,2.8) {};
    \node[port] (p\name2) at (\x,2.8) {};
    \node[port] (p\name3) at (\x+0.8,2.8) {};
}

% Labels
% \node[font=\fontfamily{qhv}\selectfont\bfseries\Large, color=imgDark] at (0,4.6) {Photonic Interconnect Mesh};
\node[font=\sffamily\bfseries\huge, color=imgDark] at (2.25,3.5) {$\cdots$};
\node[font=\sffamily\bfseries\huge, color=imgDark] at (-2.25,3.5) {$\cdots$};

% ---------------- Links ----------------
\draw[link] (io1p1) -- (pT11); \draw[link] (io1p2) -- (pT21); \draw[link] (io1p3) -- (pTn1);
\draw[link] (io2p1) -- (pT12); \draw[link] (io2p2) -- (pT22); \draw[link] (io2p3) -- (pTn2);
\draw[link] (io3p1) -- (pT13); \draw[link] (io3p2) -- (pT23); \draw[link] (io3p3) -- (pTn3);

\end{tikzpicture}
}
\caption{GPUs with on-chip optical I/O (with one or more transceivers) connect to a photonic interconnect that establishes direct optical paths between them.}
\vspace{-6mm}
\end{figure}

\myitem{GPU forwarding:}
We assume that GPUs are equipped with an in-built router, similar to those used in Google's TPUs~\cite{10.1145/3579371.3589350}, and support cut-through forwarding at intermediate nodes.
In particular, GPUs can begin forwarding data before the entire message has been received.

\myitem{Communication steps:}
Throughout this paper, we adopt the standard notion of \emph{steps} used in prior work on collective algorithms~\cite{295653,https://doi.org/10.1002/cpe.1206,10.1145/2686882,doi:10.1177/1094342005051521,10.1145/3712285.3759835,10.1007/978-3-540-24685-5-1}.
A step denotes a communication phase during which each GPU exchanges data with a predetermined set of peers according to the collective algorithm.
Communication within a step may involve multi-hop forwarding through intermediate GPUs, consistent with prior works~\cite{10.1145/3579371.3589350,295653}.

\subsection{Example Walkthrough}
\label{sec:example}

Recursive doubling~\cite{10.1007/978-3-540-24685-5-1,295653,10.1145/2686882} is a bandwidth-optimal algorithm for AllReduce and completes in a logarithmic number of steps. For a network with $n$ nodes, recursive doubling proceeds in $\log_2(n)$ steps. In step $i$ (starting from $i = 1$), node $j$ communicates with node $(j + 2^{i-1}) \bmod n$, so the communication distance doubles at each step under a static topology.\footnote{Throughout this paper, we follow the cyclic variant of recursive doubling~\cite{kolmakov2020generalization}, which retains the same asymptotic properties as the pairwise-exchange formulation.} We first illustrate recursive doubling on a static one-dimensional interconnect, where a single ring topology supports communication across all steps (Figures~\ref{fig:step1-g1},~\ref{fig:step2-g1}, and~\ref{fig:step3-g1}). As the algorithm progresses, multiple communication pairs overlap on shared links in later steps.
In particular, congestion increases to $2$ in step~2 and to $4$ in step~3.
Higher congestion reduces the effective bandwidth available to each flow, increasing per-step transfer time and, consequently, the overall collective completion time.

\begin{table}[h]
  \begin{center}
    \begin{tabular}{|c|c|c|c|}
      \hline
      \textbf{Schedule}                                               & \textbf{Congestion} & \textbf{Propagation} & \textbf{Reconf.}
      \\
      \hline
      $G_1, G_1, G_1$ & $4$x & $4$x & $0$ \\
      $G_1, G_2, G_2$ & $2$x & $2$x & $1$x \\
      $G_1, G_2, G_3$ & $1$x & $1$x & $2$x \\
      \hline
    \end{tabular}
  \end{center}
  % \vspace{-5mm}
  \caption{Different sequence of topologies lead to significantly different congestion, propagation, and reconfiguration delays, that affect collective completion times.}
  \label{table:reconf-combinations}
  \vspace{-4mm}
\end{table}

We can reconfigure the topology before step~2 to establish direct optical paths between the GPUs that communicate in this step (Figure~\ref{fig:step2-g2}). At the final step, step~3, the interconnect faces a choice.
It can either remain in the same topology, which results in congestion of~$2$ (Figure~\ref{fig:step3-g2}), or reconfigure again to further reduce congestion (Figure~\ref{fig:step3-g3}).
The resulting reconfiguration options (Table~\ref{table:reconf-combinations}) highlight a fundamental trade-off: reducing congestion comes at the cost of incurring reconfiguration delay.

Congestion is not the only factor that increases completion time.
As communication distance grows, propagation delay also increases, further amplifying the impact of static or poorly chosen topologies (Table~\ref{fig:intro}).
Birkhoff--von Neumann schedules (e.g., $G_1$, $G_2$, $G_3$) that reconfigure before every communication step eliminate congestion and achieve the lowest completion time when reconfiguration delays are small.
As reconfiguration delays increase, static topologies become optimal, avoiding reconfiguration overhead at the cost of higher congestion and longer paths. Between these two extremes lies a broad design space in which carefully chosen, reconfiguration-aware schedules outperform both static interconnects and reconfigure-every-step baselines.
Our goal is to expose this spectrum and provide a systematic framework for reasoning about \emph{when} and \emph{how} reconfiguration should be used to minimize collective completion time.

\subsection{Modeling Collective Completion Time}
\label{sec:cct-model}

To reason about the impact of topology reconfigurations on collective completion time, we revisit the classic $\alpha$--$\beta$ cost model and extend it to explicitly account for network-level effects, such as propagation delay, congestion, and reconfiguration delay.
We model a collective communication algorithm that runs across $n$ GPUs as a sequence of $s$ communication steps.
In each step $i$, the collective exchanges a fixed amount of data, $m_i$, between pairs of GPUs. We use a communication matrix $\mathcal{M}_i$ to represent this.
An entry $\mathcal{M}_i(j,k)=1$ indicates GPU $j$ sends data to GPU $k$ during step $i$ (all other entries are zero). We describe a collective algorithm in this notation as a sequence $\langle \mathcal{M}_1, \mathcal{M}_2, \ldots, \mathcal{M}_s \rangle$ together with the data volumes $\langle m_1, m_2, \ldots, m_s \rangle$ asssociated with each step.

We use the \emph{aggregate demand matrix} $\mathcal{M}$ to capture the total communication across all steps, where each entry $\mathcal{M}(j,k)$ denotes the total volume of data sent from GPU $j$ to GPU $k$ over the entire collective. We sum the stepwise communication matrices weighted by their corresponding data volumes to compute this matrix:
\begin{equation}\label{eq:bvn-representation}
\mathcal{M}
= m_1 \cdot \mathcal{M}_1 + m_2 \cdot \mathcal{M}_2 + \ldots + m_s \cdot \mathcal{M}_s .
\end{equation}

\myitem{Point-to-point communication model:} Most prior work designs collective communication algorithms under the point-to-point communication model
~\cite{doi:10.1177/1094342005051521,https://doi.org/10.1002/cpe.1206,295653,10.1007/978-3-540-24685-5-1}, where in each step every node sends to and receives from at most one other node. As a result, the communication matrix $\mathcal{M}_i$ in each step is a permutation matrix. Many well-known algorithms, including ring, recursive doubling~\cite{10.1007/978-3-540-24685-5-1}, and swing~\cite{295653}\footnote{The Swing algorithm also admits a multi-port variant, which departs from the strict point-to-point model and captured by our one-to-many model.}, follow this abstraction. Under this model, Equation~\ref{eq:bvn-representation} corresponds by construction to a \emph{Birkhoff--von Neumann (BvN) decomposition} of $\mathcal{M}$, namely a convex combination
\footnote{We use the term combination loosely here. Specifically, $\sum_i m_i = m$, where~$m$ is the total data volume. Normalizing by $m$ yields a strict convex combination.} 
of permutation matrices. From this perspective, the steps of the collective algorithm correspond directly to the matchings in the decomposition, where each $m_i$ represents the data volume the algorithm transferrred during step $i$. 

\begin{graybox}
\begin{observation}[Relevance of BvN Decompositions]
Collective communication algorithms that proceed via a sequence of matchings naturally induce a BvN decomposition of their aggregate demand matrix.
\end{observation}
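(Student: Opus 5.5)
The plan is to argue directly from Equation~\ref{eq:bvn-representation} together with the point-to-point model. A Birkhoff--von Neumann decomposition of a nonnegative matrix $\mathcal{M}$ is an expression $\mathcal{M} = \sum_i \lambda_i P_i$ with $\lambda_i \ge 0$ and each $P_i$ a permutation matrix. After normalizing by the total volume, the weights also sum to one. So I only need to check three things: each $\mathcal{M}_i$ is (or can be taken to be) a permutation matrix, the weights $m_i$ are nonnegative, and after normalization the weights form a convex combination that reproduces the aggregate matrix.

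\emph{Step 1 (matchings are permutation matrices).} In the point-to-point model each GPU sends to at most one peer and receives from at most one peer per step. Hence every row and every column of $\mathcal{M}_i$ contains at most one $1$, so $\mathcal{M}_i$ is a partial permutation matrix. If some GPUs are idle in step $i$, there are as many empty rows as empty columns. I would complete $\mathcal{M}_i$ to a full permutation matrix $P_i$ by matching the idle senders to the idle receivers in any way, for instance by adding self-loops on the diagonal when a node is both an idle sender and an idle receiver. I would note that such padding edges carry no traffic in the completion-time model. This completion is the one point that needs care. The main obstacle is that the padded entries slightly change the matrix sum, so I would handle it in one of two ways: either state the observation for partial permutation (sub-doubly-stochastic) matrices, or add the padding contributions as a zero-demand correction, noting that the classical BvN theorem applies to the doubly stochastic completion anyway.

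\emph{Step 2 (weights).} Each $m_i \ge 0$ is a data volume. Set $m = \sum_i m_i$ and $\lambda_i = m_i/m$. Then $\lambda_i \ge 0$ and $\sum_i \lambda_i = 1$, and Equation~\ref{eq:bvn-representation} gives $\mathcal{M}/m = \sum_i \lambda_i \mathcal{M}_i$. This is exactly a convex combination of (partial) permutation matrices. It is therefore a BvN decomposition of the normalized aggregate demand matrix, and the steps of the collective are the matchings of that decomposition.

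\emph{Step 3 (consistency remark).} Finally I would note that every row and column sum of $\mathcal{M}/m$ equals the fraction of total volume that the corresponding node sends or receives. When every node is active in every step, these sums are all exactly $1$, so $\mathcal{M}/m$ is doubly stochastic. This agrees with Birkhoff's theorem and confirms that the induced decomposition is genuine rather than an artifact of padding.
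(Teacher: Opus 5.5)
Your proposal is correct and follows the paper's own argument: under the point-to-point model each $\mathcal{M}_i$ is a permutation matrix, so Equation~\ref{eq:bvn-representation} is already a nonnegative combination of permutation matrices. Normalizing by $m=\sum_i m_i$, as the paper's footnote does, turns it into a convex combination whose matchings are exactly the collective's steps. Your handling of idle nodes is a refinement the paper skips, since it simply asserts that each $\mathcal{M}_i$ is a full permutation matrix; of your two fixes, only the partial-permutation (doubly substochastic) version keeps the identity exact, because padding adds weight $m_i$ to every padded entry and so decomposes a matrix that dominates $\mathcal{M}$ rather than $\mathcal{M}$ itself.
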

\end{graybox}

The converse does not hold. Not every BvN decomposition corresponds to a valid collective algorithm. BvN decompositions fail to capture the \emph{temporal structure} inherent in collective communication. In practical algorithms, the ordering of communication steps matters, and we cannot arbitrarily rearrange steps.
The data exchanged in step~$i$ is often generated as a consequence of the computation or communication performed in step $i-1$, which induces a strict sequence of dependencies.

We see a key limitation of aggregated demand matrices through these temporal and data-flow constraints. 
 While such matrices are useful in demand-aware network design~\cite{10.1145/2934872.2934911,10.1145/2716281.2836126,10.1145/2486001.2486007}, they implicitly assume that all traffic between source-destination pairs is simultaneously available. But this assumption does not hold for many collectives because communication steps must follow a fixed temporal order and we cannot assume data is always available to send at any given point in time. This is why we need to reason beyond static demand matrices and BvN decompositions when we design interconnects for collective communication.

\myitem{One-to-many communication model:}
Recent work proposes multi-port AllReduce algorithms~\cite{295653,10.1145/2686882} to better utilize network bandwidth when nodes are equipped with multiple links, such as in multi-dimensional Torus networks~\cite{10.1145/3579371.3589350}. Under this model, the communication matrix $\mathcal{M}_i$ is not necessarily a permutation matrix and is often composed of multiple permutations, for example one per dimension. We do not further decompose these matrices, and instead focus on the dependencies between successive matrices $\mathcal{M}_i$ and $\mathcal{M}_{i+1}$. We assume that each port participates in at most one point-to-point communication per step: the number of ports upper bounds the sum of each row and column of each $\mathcal{M}_i$. We can still use Equation~\ref{eq:bvn-representation} to express the communication model with this generalization~---~we extend each $\mathcal{M}_i$ to the one-to-many setting.

\myitem{All-to-All communication model:}
In many All-to-All implementations, multiple send and receive operations are grouped into a single logical step. Here, we can represent the communication with a single bulk demand, through an aggregate demand matrix. We can use a BvN decomposition to decompose the aggregate matrix and represent the communication via Equation~\ref{eq:bvn-representation} as a sequence of $s$ point-to-point steps. We sort the coefficients of the resulting decomposition such that $m_i$ is the $i^{th}$ largest coefficient, and $\mathcal{M}_i$ the corresponding point-to-point communication performed in the $i^{th}$ step.

The matrix decompositions induced by collective algorithms, as we show next, reveal a useful connection to both network throughput and the classic $\alpha$--$\beta$ cost model.

Consider a graph $G_i = (V, E_i)$, where $V$ is the set of $n\ $ GPUs and $E_i$ represents the photonic links between them during step $i$ of the collective. We can express the total completion time $t_c(1, s)$ of the collective communication algorithm from step $1$ through step $s$ (inclusive) as:
\begin{equation}
\begin{aligned}
t_c(1,s) ={}& DCT(m_1 \cdot \mathcal{M}_1,\ G_1)
            + DCT(m_2 \cdot \mathcal{M}_2,\ G_2) \\
           &\quad + \ldots
            + DCT(m_s \cdot \mathcal{M}_s,\ G_s) + \sum_{i=1}^{s} \underbrace{\alpha_r \cdot \mathbb{I}(E_i \not = E_{i-1})}_{\text{reconfiguration delay}}
\end{aligned}
\end{equation}
where $DCT(m_i \cdot \mathcal{M}_i,\ G)$ denotes the \emph{demand completion time} of step $i$, corresponding to a data volume $m_i$ and communication pattern $\mathcal{M}_i$, served by the underlying topology $G_i$. Each step incurs additional $\alpha_r$ delay if the topology differs from the previous step, representing a reconfiguration event.

$DCT(m_i \cdot \mathcal{M}_i,\ G_i)$ depends on the structure and capacity of the underlying graph $G_i$. We define the \emph{maximum concurrent flow} $\theta(G_i, \mathcal{M}_i)$ as the largest fraction of the communication matrix $\mathcal{M}_i$\footnote{Here, we consider that $\mathcal{M}_i$ is scaled proportionally to the node capacity.} the network can route simultaneously without exceeding any link capacities. $\theta(G_i, \mathcal{M}_i)$ is the achievable throughput for that step's communication pattern. This implies we can write demand completion time as:
\[
DCT(m_i \cdot \mathcal{M}_i,\ G_i) = \frac{m_i}{c} \cdot \frac{1}{\theta(G_i, \mathcal{M}_i)},
\vspace{-1mm}
\]
where $c$ is the link capacity. Here, $\frac{m_i}{c}$ represents the ideal transmission time assuming full throughput, while the factor $\frac{1}{\theta(G_i, \mathcal{M}_i)}$ accounts for congestion. By definition of the maximum concurrent flow, the effective capacity available for this communication is $c \cdot \theta(G_i, \mathcal{M}_i)$. Therefore, the actual transmission time scales inversely with the achievable throughput.

Each communication step $i$ incurs a fixed overhead $\alpha$, which captures startup latencies such as data preparation. In each step, we also have to account for the latency of the longest route that went through the most congested links in each step.
This latency is given by  $\delta \cdot \ell_i(G_i)$, where $\delta$ is the per-link propagation delay and $\ell_i(G_i)$ is the length of that longest path. The latency term is often neglected and absorbed into the constant $\alpha$. 
If the network offers capacity $c$ per node, we define $\beta = \frac{1}{c}$. We can write demand completion time for step-$i$ as:

\begin{align}\label{eq:dct}
DCT(m_i \mathcal{M}_i,\ G_i) =  
\underbrace{\alpha + \delta \cdot \ell_i(G_i)}_{\text{latency\ factor}} + 
\overbrace{\beta}^{\substack{bandwidth\\factor}}
 m_i  
\underbrace{\frac{1}{{\theta(G_i, \mathcal{M}_i)}}}_{\substack{congestion\\factor}}
\end{align}
We can now express the total completion time of the collective for all steps $s$ as:
\begin{align}
t_c(1,s) 
% &= \sum_{i=1}^{s} DCT(m_i \cdot \mathcal{M}_i,\ G_i) + \sum_{i=1}^{s} \alpha_r \cdot \mathbb{I}(E_i \not = E_{i-1}) \nonumber \\
& = \sum_{i=1}^{s} \left( \alpha + \delta\cdot \ell_i(G_i) +  \frac{\beta \cdot m_i}{\theta(G_i, \mathcal{M}_i)} + \alpha_r \cdot \mathbb{I}(E_i \not = E_{i-1}) \right)
% &= s \cdot \alpha + \sum_{i=1}^s \delta\cdot \ell_i(G_i) + \beta \cdot \sum_{i=1}^{s} m_i \cdot \frac{1}{\theta(G_i, \mathcal{M}_i)} + \sum_{i=1}^{s} \alpha_r \cdot \mathbb{I}(E_i \not = E_{i-1})
\end{align}
\noindent 
\begin{graybox}
\begin{observation}[Collective Completion Time as $\alpha$--$\beta$ Cost]
The classic $\alpha$--$\beta$ cost model emerges naturally when collective completion time is expressed in terms of network latency, bandwidth, propagation delay, and congestion, where congestion is captured by the inverse of maximum concurrent flow.
This perspective grounds the model in network throughput and exposes its dependence on both the interconnect topology and the staged structure of the collective.
\end{observation}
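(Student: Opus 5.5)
The plan is to show that, once the per-step demand completion time of Equation~\ref{eq:dct} is summed over steps, the total collapses to the familiar $\alpha$--$\beta$ form. The only changes are that each term is weighted by topology-dependent factors and a reconfiguration term is added. Concretely, I would start from the expression for $t_c(1,s)$ just derived and regroup it by linearity of the sum:
\begin{align*}
t_c(1,s) = s\,\alpha + \delta \sum_{i=1}^{s} \ell_i(G_i) + \beta \sum_{i=1}^{s} \frac{m_i}{\theta(G_i,\mathcal{M}_i)} + \alpha_r \sum_{i=1}^{s} \mathbb{I}(E_i \neq E_{i-1}).
\end{align*}
The first term is the classic per-step startup cost. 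The third term is a bandwidth cost $\beta \cdot m$ in which each step's volume is inflated by the congestion factor $1/\theta$. The second and fourth terms are the propagation and reconfiguration additions.

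Next I would check that the classic model is recovered in the ideal case. Take a topology in which every pair communicating in step $i$ is joined by a direct link; under the point-to-point model each $\mathcal{M}_i$ is a permutation, so such a graph is a matching. Then every demand is routed on its own link, giving $\theta(G_i,\mathcal{M}_i)=1$ and $\ell_i(G_i)=1$. If reconfiguration costs are ignored, or a single static topology serves all steps, the expression reduces to $s(\alpha+\delta) + \beta \sum_i m_i = s\,\alpha' + \beta m$ with $\alpha' = \alpha + \delta$. This is exactly the $\alpha$--$\beta$ cost of the algorithm. It confirms the observation's claim that the model \emph{emerges} from throughput reasoning rather than being assumed.

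For the direction that grounds the congestion factor, I would argue from the definition of maximum concurrent flow. A fraction $\theta$ of the scaled demand can be routed simultaneously, so routing all of $m_i\mathcal{M}_i$ takes time $m_i/(c\,\theta)$. Conversely, no routing finishes faster, since otherwise a larger concurrent flow would exist. This justifies $1/\theta$ as the exact congestion multiplier on $\beta$.

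The main obstacle is the latency term. Asserting that the tail latency is exactly $\delta\,\ell_i(G_i)$ and that it adds to, rather than overlaps with, transmission time requires the cut-through forwarding assumption. I would invoke it to argue that the pipeline-fill delay along the longest congested path is incurred once per step, and I would present this as a modeling upper bound rather than a tight equality. Finally, I would note that every term depends on the chosen topologies $G_i$ and on the step structure $\mathcal{M}_i$. This exhibits the claimed dependence and sets up the optimization over topology sequences.
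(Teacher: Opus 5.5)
Your proposal is correct and follows essentially the paper's route. The paper obtains the observation by writing each step's demand completion time as $\frac{m_i}{c}\cdot\frac{1}{\theta(G_i,\mathcal{M}_i)}$ from the definition of maximum concurrent flow, adding the latency factor $\alpha+\delta\,\ell_i(G_i)$, and summing over steps together with the $\alpha_r$ reconfiguration term; this is exactly the expression you regroup. Your explicit reduction to $s(\alpha+\delta)+\beta\sum_i m_i$ when $\theta=\ell_i=1$, and your converse argument that no routing beats $m_i/(c\,\theta)$, are sensible additions that the paper leaves implicit. One caveat on that check: its static variant only works if the degree bound allows one topology to contain every step's communicating pairs at once.
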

\end{graybox}

While the $\alpha$--$\beta$ model is widely used in practice, network throughput, propagation delay, and congestion are rarely made explicit in its formulation. A few exceptions relate congestion to communication distance or to the number of messages traversing a link in structured topologies~\cite{https://doi.org/10.1002/cpe.1206,10.1145/2686882,295653}, but these approaches are typically limited to specific communication patterns or architectures and often assume unsplittable flows. TE-CCL~\cite{10.1145/3651890.3672249} explored the relationship between the $\alpha$--$\beta$ model and multi-commodity flow in the context of collective algorithm synthesis by mapping collective communication patterns to demand matrices and interpreting the cost model through that formulation. In a similar vein, our approach explicitly links the $\alpha$--$\beta$ model to \emph{network throughput} via maximum concurrent flow in the context of optical interconnect configurations. This yields a more comprehensive understanding of performance that accounts for communication structure, congestion, propagation delay, and network topology. Our formulation applies to arbitrary topologies, making it broadly applicable beyond structured or hierarchical networks.

\subsection{An Optimization Opportunity}
\label{sec:opportunity}

We observe, the completion time of a collective communication algorithm is tied to the path lengths\footnote{Note that we assume GPUs are equipped with an in-built router (see \S\ref{sec:architecture}).}, congestion, and throughput of the underlying topology in each step. This is especially helpful in the context of circuit switching photonic interconnects: we can reduce congestion and path lengths to~$1$ (i.e., full throughput) and establish direct, high-bandwidth, optical paths that exactly match the communication pattern $\mathcal{M}_i$ for each step $i$.

But to realize these direct paths we need to reconfigure the interconnect which results in a reconfiguration delay $\alpha_r$. This creates a clear trade-off: we can reduce congestion if we reconfigure and improve throughput but this will increase latency; or we can maintain a static topology and avoid the reconfiguration cost but then we may suffer from higher congestion.

This tension opens up an opportunity for optimization: given a collective communication schedule, how should we schedule interconnect reconfigurations to minimize the total completion time for any given collective? For example, one might choose to maintain a static topology to avoid reconfiguration overhead but pay persistent congestion costs, or reconfigure before every step to eliminate it while incurring the maximum reconfiguration penalty. An effective circuit switching schedule must strike a balance, where we reconfigure only in steps when the throughput gain outweighs the cost.

\noindent This paper focuses on the \emph{topology synthesis} problem:

\myitem{Input:}
We are given a predefined collective communication algorithm i.e., a communication schedule. The schedule specifies, the amount of data each source-destination pair exchanges in each step of the collective. The input also includes topology constraints, such as link bandwidth, node degree (number of links), and fixed propagation latencies.

\myitem{Output:}
The output is a topology reconfiguration schedule that specifies a network topology for each communication step of the collective. When the topology differs between consecutive steps we incur a reconfiguration delay.

A related but orthogonal line of work studies the \emph{collective synthesis} problem, in which the input is a fixed topology and the output is a communication schedule~\cite{10.1145/3651890.3672249,285084,10.1145/3437801.3441620}. This setting is the converse of \emph{topology synthesis} problem we consider in this paper.

\section{\name}
% : Synthesizing Photonic Switching Schedules
% How we synthesize schedules.
\label{sec:harvest}
At its core, topology synthesis consists of two tightly coupled components. 
First, given a contiguous range of collective steps $a$ through $b$, we must decide \emph{how} to reconfigure the interconnect i.e., find a static topology that minimizes the completion time of those steps when no reconfiguration is allowed within the range. Second, we must partition the full sequence of collective steps into such ranges to determine~\emph{when} to reconfigure.
This partitioning induces a recurrence that jointly determines both the reconfiguration events and the corresponding topologies.

%We next formalize the topology optimization subproblem (\S\ref{sec:subproblem}) and then show how we can combine it with dynamic programming to synthesize globally optimal reconfiguration schedules (\S\ref{sec:recurrence}).

\subsection{Subproblem}\label{sec:subproblem}

Given a collective communication algorithm with $s$ steps, where each step $i$ is characterized by a communication pattern $\mathcal{M}_i$ and data volume $m_i$, our subproblem is to find an optimal topology $G_{a,b}$ that minimizes the completion time of steps $a$ through $b$ (inclusive), without any reconfigurations during this interval.
\vspace{-3mm}
\begin{equation}\label{eq:g-subproblem}
G_{a,b} = \arg\min_{G \in \mathcal{G}} \sum_{i=a}^{b} DCT(m_i \cdot \mathcal{M}_i,\ G) 
\vspace{-3mm}
\end{equation}
\begin{equation}\label{eq:tc-subproblem}
t_c(a,b) = \sum_{i=a}^{b} DCT(m_i \cdot \mathcal{M}_i,\ G_{a,b}) 
\end{equation}
\vspace{-2mm}

\noindent
Here, $\mathcal{G}$ denotes the set of all feasible topologies that satisfy the per-node degree constraints, and $t_c(a,b)$ denotes the completion time for steps $a$ through $b$. We use Equation~\ref{eq:dct} to compute the demand completion time $DCT(m_i \cdot \mathcal{M}_i,\ G_{a,b})$.

We solve this subproblem using a Mixed-Integer Second-Order Conic Program (MISOCP). We introduce integer decision variables $x_{u,v}$ that denote the number of directed edges from node $u$ to node $v$. These variables are subject to degree constraints $\sum_{u} x_{u,v} \le d$ and $\sum_{v} x_{u,v} \le d$, and ensure each node has degree at most $d$. The objective is to minimize the total demand completion time for the sequence of communication matrices $\langle \mathcal{M}_a, \ldots, \mathcal{M}_b \rangle$.

The constraints follow a standard maximum concurrent flow formulation, with the key distinction that edge capacities are decision variables rather than fixed inputs. The objective minimizes a weighted sum of the reciprocals of the concurrent flow values across steps (Equation~\ref{eq:dct}). This yields a second-order conic optimization problem with integer variables. Unlike a classical concurrent flow formulation, the presence of sequential dependencies across steps induces the conic structure. We present the complete formulation in \S~\ref{app:misocp}.

% Let $g = \lvert \mathcal{G} \rvert$ denote the number of candidate topologies. For $1$-D topologies, $g = n!$. Note that $g$ is treated as a constant in our formulation. Without additional structural insights into the communication pattern between steps $a$ through $b$, the complexity of finding the optimal topology remains $O(n!)$. As we show later in our evaluations, restricting the search space to $O(n)$ already provides substantial speedups in completion times in $1$-D interconnects where each GPU is equipped with a single I/O interface.
% We consider the following restricted set of topologies in our search space:
% \begin{itemize}[label=\small{\textcolor{black}{$\blacksquare$}}, leftmargin=*]
% \item The $s$ direct-connect topologies that match the communication pattern $\mathcal{M}_i$ at each step $i \in \{1,\ldots,s\}$. These are optimal for subproblems whose interval consists of a single step.
% \item Coprime shifted rings. Each node $u$ connects to node $(u+k) \bmod n$, where
% \[
% k \in \{1,2,\ldots,n-1 : \gcd(k,n)=1\}.
% \]
% The number of such shifts is Euler's totient function $\varphi(n)$~\cite{euler1763theoremata}, which satisfies $\varphi(n) \le n-1$ with equality when $n$ is prime. Each choice of $k$ yields a connected directed ring, which is useful for subproblems spanning many steps where connectivity is required.
% \end{itemize}
% These choices reduce the effective search space to $O\big(s+n\big)$. In the next section, we show that the search space can be reduced to $O(1)$ for recursive doubling AllReduce, yielding optimal schedules.

\begin{algorithm}[!h]
\DontPrintSemicolon
\SetAlgoLined
\KwIn{Setup latency $\alpha$, bandwidth $b$, $\beta=\tfrac{1}{b}$, propagation delay $\delta$, reconfiguration delay~$\alpha_r$, sequence of steps and corresponding data volumes $\langle m_1 \cdot \mathcal{M}_1,\ m_2 \cdot \mathcal{M}_2 ,\ \ldots ,\ m_s \cdot \mathcal{M}_s\rangle$.
% search space for topologies $\mathcal{G}$. 
}
\KwOut{Reconfiguration schedule}

% $s \gets \lceil \log_2 n \rceil$\;

\SetKwFunction{compTime}{\textbf{\textsc{\textcolor{myred}{CompletionTime}}}}
% \SetKwFunction{Comm}{\textbf{\textsc{\textcolor{myred}{TransmissionDelay}}}}
\SetKwProg{Fn}{Function}{:}{}

\Fn{\compTime{$a,b$}}{
  
  Find $G_{a,b}$ 
  \Comment{\textcolor{gray}{Equation~\ref{eq:g-subproblem}}}

  Find $t_c(a,\ b)$ \Comment{\textcolor{gray}{Equation~\ref{eq:tc-subproblem}}}

  \Return{$\left(G_{a,b},\ t_c(a,\ b)\right)$}
}

\textcolor{gray}{\textit{To find the optimal reconfigurations for a given number of reconfigurations $k$.}}

\SetKwFunction{Compute}{\textbf{\textsc{\textcolor{myred}{SynthesizeSchedule}}}}
\Fn{\Compute{$s,k$}}{
  Initialize $\mathrm{DP}[a][t]\gets \infty$, $\mathrm{next}[a][t]\gets \varnothing$, $\mathrm{topo}[a][t]\gets \varnothing$ for $a\in\{1,\dots,s+2\}, t\in\{0,\dots,k\}$\;
  \For{$a \gets 1$ \KwTo $s$}{
    $\mathrm{DP}[a][0] \gets \FuncSty{\textbf{\textsc{\textcolor{myred}{CompletionTime}}}}(a,s)$
  }
 \For{$t \gets 0$ \KwTo $k$}{
    $\mathrm{DP}[s+1][k] \gets 0$
  }
  \For{$t \gets 1$ \KwTo $k$}{
    \For{$a \gets 1$ \KwTo $s$}{
      $\mathrm{best}\gets\infty$;\ $\mathrm{arg}\gets\varnothing$;\ $\mathrm{graph}\gets \varnothing$ \;
      \For{$b \gets a+1$ \KwTo $s+1$}{
        $(G, v) \gets \FuncSty{\textbf{\textsc{\textcolor{myred}{CompletionTime}}}}(a,b-1) + \mathrm{DP}[b][t-1]$ \Comment{\textcolor{gray}{Recurrence (Lemma~\ref{lem:dprecurrence})}}

        \If{$v < \mathrm{best}$}{ $\mathrm{best}\gets v$;\ $\mathrm{arg}\gets b$;\ $\mathrm{graph}\gets G$ }
      }
      $\mathrm{DP}[a][t]\gets \mathrm{best}$;\quad $\mathrm{next}[a][t]\gets \mathrm{arg}$;\ 
      
      $\mathrm{topo}[a][t]\gets graph$\;
    }
  }
  $r \gets []$;\ $a \gets 1$;\ $t \gets k$\;
  \While{$t>0$}{
    $b \gets \mathrm{next}[a][t]$;\ \lIf{$b=\varnothing$}{\textbf{break}}
    $G \gets \mathrm{topo[a][t]}$

    \If{$b \le s$}{append $(G,b)$ to $r$}
    $a \gets b$;\ $t \gets t-1$\;
  }
  \Return{$\big(\mathrm{DP}[1][k], r\big)$} \Comment{\textcolor{gray}{(Completion time w/o reconfiguration delays, schedule)}}
}

\textcolor{gray}{\textit{To find the optimal reconfigurations across all number of reconfigurations $k$.}}

$\mathrm{bestCost}\gets\infty$;\ $\mathrm{bestR}\gets []$\;
\For{$k \gets 0$ \KwTo $s$}{
  $(C, R) \gets \Compute(s,k)$ \Comment{\textcolor{gray}{Schedule with exactly $k$ reconfigurations}}

  $C \gets C + k\cdot \alpha_r$ 
  % \Comment{\textcolor{gray}{Completion time including reconfiguration delays}}

  \If{$C < \mathrm{bestCost}$}{ $\mathrm{bestCost}\gets C$;\ $\mathrm{bestR}\gets R$ }
}

\Return{$\mathrm{bestR}$} \Comment{\textcolor{gray}{Optimal schedule (Theorem~\ref{thm:optimality})}}

\caption{\name}
\label{alg:algorithm}
\end{algorithm}

\subsection{Recurrence and Dynamic Programming}
\label{sec:recurrence}

At a high level, our dynamic programming approach partitions the collective communication algorithm's steps into contiguous intervals, separated by $k$ reconfiguration events. Central to our approach are three variables:
\begin{itemize}[leftmargin=*,label=\small{\textcolor{black}{$\blacksquare$}}]
 \item $\dpv[a][t]$ denotes the optimal completion time with $t$ reconfigurations for the sequence of steps starting at $a$, until the end. We do not include the reconfiguration delays at this point. Our dynamic program takes a fixed number of reconfigurations as input, and finds a corresponding optimal schedule. We later find the best number of reconfigurations (\S\ref{sec:synthesis}).
 \item $\nextv[a][t]$ stores the next step after $a$ at which a reconfiguration occurs in the optimal schedule.
 \item $\mathrm{topo}[a][t]$ stores the optimal topology for the steps~$a$ through $\nextv[a][t]$. 
\end{itemize}

\begin{Lemma}[Recurrence]
\label{lem:dprecurrence}
For any starting step $a\in\{1,\dots,s\}$ and number of reconfigurations $k \ge 1$, the optimal completion time is
\begin{equation}\label{eq:dpat}
\dpv[a][k] = \min_{\,a < b \le s+1}\ \big(t_c(a,\ b-1) + \dpv[b][k-1]\big)
\end{equation}
where $t_c(\cdot,\cdot)$ is the completion time for steps $a-b$, given by Eq~\ref{eq:tc-subproblem}.
\end{Lemma}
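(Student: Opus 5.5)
We prove the recurrence by the standard optimal-substructure argument, via two inequalities. First we make the objects precise. A schedule for steps $a,\dots,s$ with exactly $k$ reconfigurations is a partition of $\{a,\dots,s\}$ into $k+1$ nonempty contiguous intervals $[a,b_1-1],[b_1,b_2-1],\dots,[b_k,s]$, where $a<b_1<\dots<b_k\le s$. Each interval is assigned a single topology in $\mathcal{G}$, used throughout that interval. Its cost, excluding reconfiguration delays as in the definition of $\dpv$, is the sum over intervals of $\sum_{i} DCT(m_i\cdot\mathcal{M}_i, G)$ for that interval's topology. $\dpv[a][k]$ is the minimum of this cost over all such schedules. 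We adopt the conventions $\dpv[a][0]=t_c(a,s)$, $\dpv[s+1][k]=0$ when $k=0$, and $\dpv[s+1][k]=\infty$ otherwise; this matches the exactly-$k$ semantics and makes the $b=s+1$ term in Eq.~\ref{eq:dpat} consistent.

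The key structural fact is that the objective is additive across intervals. The $DCT$ of step $i$ depends only on $m_i,\mathcal{M}_i$ and the topology in force at step $i$ (Equation~\ref{eq:dct}), and reconfiguration costs are excluded from $\dpv$. Therefore the cost of a schedule equals the cost of its first interval plus the cost of the remaining schedule. Moreover, once the breakpoints are fixed, the topologies of distinct intervals can be chosen independently. Hence the best topology for the first interval attains exactly $t_c(a,b_1-1)$, by the definitions in Eqs.~\ref{eq:g-subproblem} and \ref{eq:tc-subproblem}.

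($\ge$): Take an optimal schedule for $\dpv[a][k]$, with $k\ge1$, and let $b=b_1$ be its first breakpoint, so $a<b\le s$. Its first-interval cost is at least $t_c(a,b-1)$, by minimality of $G_{a,b-1}$. The remainder is a valid schedule for steps $b,\dots,s$ with exactly $k-1$ reconfigurations, so its cost is at least $\dpv[b][k-1]$. Hence $\dpv[a][k]\ge t_c(a,b-1)+\dpv[b][k-1]$, which is at least the minimum on the right-hand side.

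($\le$): For any $b$ with $a<b\le s+1$ and $\dpv[b][k-1]<\infty$, concatenate the interval $[a,b-1]$, using topology $G_{a,b-1}$, with an optimal $(k-1)$-reconfiguration schedule for $[b,s]$. This yields a feasible $k$-reconfiguration schedule for $[a,s]$ of cost $t_c(a,b-1)+\dpv[b][k-1]$. Minimizing over $b$ gives the reverse inequality.

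The main subtlety is not the exchange argument but the bookkeeping of the reconfiguration count. Consecutive intervals might receive identical topologies, since $G_{a,b-1}=G_{b,\cdot}$ is possible. In that case a ``reconfiguration'' would not actually change $E_i$, and the count would differ from the number of indicator terms $\mathbb{I}(E_i\neq E_{i-1})$. We handle this by defining $\dpv$ over partitions into exactly $k+1$ intervals, each with a freely chosen topology, rather than over actual topology changes. With this definition the recurrence is exact. The mismatch only matters when the outer loop adds $k\alpha_r$, and there it can only overcount the true cost, so it does not affect optimality. That argument belongs to the later optimality theorem, not to this lemma.
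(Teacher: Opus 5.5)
Your main argument (cut at the first reconfiguration, bound the prefix by $t_c(a,b-1)$ and the suffix by $\dpv[b][k-1]$, then concatenate for the reverse inequality) is the same first-breakpoint decomposition the paper uses. The paper wraps it in an induction on $k$, asserts achievability in one sentence, and only ranges $b$ over $a<b\le s$, so it ignores the $b=s+1$ term.

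Where you go wrong is exactly the boundary bookkeeping you single out. With your convention $\dpv[s+1][0]=0$, the $b=s+1$ term for $k=1$ equals $t_c(a,s)$. That is the cost of the schedule with \emph{no} breakpoint. So the sentence in your ($\le$) step that concatenation ``yields a feasible $k$-reconfiguration schedule'' is false for $b=s+1$, $k=1$. Your convention therefore does not match the exactly-$k$ semantics, and under your own definition the identity fails at $(a,k)=(s,1)$. The left side is a minimum over an empty set of schedules, i.e.\ $\infty$, while the right side is $t_c(s,s)$. For $a<s$ the equality survives only because of a fact you never state: refining a partition cannot increase the reconfiguration-free cost. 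Indeed $G_{a,s}$ is admissible on both pieces, hence $t_c(a,b-1)+t_c(b,s)\le t_c(a,s)$, and the spurious term is dominated.

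The repair is easy; there are two options.
\begin{itemize}
\item Set $\dpv[s+1][t]=\infty$ for every $t\ge 0$. The $b=s+1$ term becomes vacuous, and your two inequalities go through verbatim over $a<b\le s$.
\item Switch to at-most-$k$ semantics with $\dpv[s+1][t]=0$ for every $t$. Then the $b=s+1$ term is exactly the option of never reconfiguring again. Your ($\le$) step is correct as written, and your ($\ge$) step only needs the extra case of an optimal schedule with no breakpoint, which is matched by the $b=s+1$ term.
\end{itemize}
In the second reading, the refinement inequality above shows that the at-most-$k$ and exactly-$k$ optima coincide whenever $k\le s-a$, which is what Theorem~\ref{thm:optimality} needs.

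Your remark that consecutive intervals may receive identical topologies is correct and, as you say, belongs to the theorem rather than to this lemma.
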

% \behnaz{Just a thought: given this, can you prove something about the optimality of Liangyu's schedule? there seems to be some very intuitive relationship between the way he derives the schedule and the lemma above --- just curious, not an actionable comment.}

\begin{proof}
We prove the claim by induction on the number of reconfigurations $k \ge 0$.
For the base case $k=0$, no reconfigurations are allowed and the schedule consists of a single contiguous interval covering steps $a$ through $s$.
The collective completion time is therefore $t_c(a,s)$ by definition, and hence $\dpv[a][0]=t_c(a,s)$.
Now assume the claim holds for all numbers of reconfigurations $t<k$ and for all starting steps.
Fix $k\ge 1$ and a starting step $a$.
Consider any feasible schedule with exactly $k$ reconfigurations over steps $a$ through $s$.
Let $b$ denote the step at which the first reconfiguration occurs, where $a<b\le s$.
This choice partitions the schedule into two segments: steps $a$ through $b-1$, executed without reconfiguration, and steps $b$ through $s$, executed with the remaining $k-1$ reconfigurations.
The completion time of the first segment is $t_c(a,b-1)$.
By the inductive hypothesis, the minimum completion time achievable for steps $b$ through $s$ with $k-1$ reconfigurations is $\dpv[b][k-1]$.
Therefore, the total completion time of any such schedule is at least
$t_c(a,b-1) + \dpv[b][k-1]$.
Minimizing over all valid choices of $b$ yields the recurrence defining $\dpv[a][k]$, which completes the proof.
% By Observation~\ref{obs:opt-range}, the optimal topology for $[a,b)$ is the direct-connect pattern of step $a$, yielding
% \[
% cost(a,b) = \delta \sum_{i=a}^{b-1}2^{i-a} + \beta m \sum_{i=a}^{b-1}\frac{2^{i-a}}{2^i}
% \]
Thus any $b$ yields total cost $t_c(a,b-1)+\dpv[b][k-1]$, and minimizing over $b$ gives Equation~\ref{eq:dpat}.
\end{proof}

The optimal schedule with exactly $k$ reconfigurations is $\dpv[1][k]$. Algorithm~\ref{alg:algorithm} unrolls Equation~\ref{eq:dpat} for $\dpv[1][k]$ within a for loop. At each partition, we record $\nextv[a][t]$, which then allows us to reconstruct the optimal schedule.

\subsection{Synthesizing Switching Schedules}\label{sec:synthesis}

Lemma~\ref{lem:dprecurrence} gives the optimal topology sequence for a fixed number of reconfigurations $k$, excluding reconfiguration delays. It remains to find the number of reconfigurations for which the reconfiguration schedule minimizes the overall completion time, including reconfiguration delays.
Taking the minimum over $k=0,\dots,\ s$ and adding the reconfiguration delay $k\cdot\alpha_r$ corresponding to $k$ reconfigurations, yields the delay-aware schedule that minimizes overall completion time.

\begin{theorem}[Optimality of the schedule]
\label{thm:optimality}
Fix the number of steps $s$. For any $k\in\{0,\dots,s\}$, the schedule reconstructed from $\nextv[\cdot][\cdot]$ that attains $\dpv[1][k]$ is optimal among all schedules with exactly $k$ reconfigurations. Moreover,
$
\argmin_{k\in\{0,\dots,s\}}\ \big(\dpv[1][k] + k\ \alpha_r\big)
$
is optimal among all schedules that account for reconfiguration delay.
\end{theorem}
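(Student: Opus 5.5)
The plan is to prove the two claims in order, using Lemma~\ref{lem:dprecurrence} for the first and a partition argument over the number of reconfigurations for the second.

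\emph{Setting up.} I will fix the meaning of a schedule with exactly $k$ reconfigurations on steps $a,\dots,s$. Such a schedule is a sequence of topologies $G_a,\dots,G_s$ whose value changes exactly $k$ times. Equivalently, it is a partition of $\{a,\dots,s\}$ into $k+1$ maximal contiguous intervals, each served by one topology in $\mathcal{G}$.

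\emph{Feasibility.} First I will show that the reconstructed schedule is feasible and that its cost, excluding reconfiguration delays, equals $\dpv[1][k]$. Following $\nextv$ from $a=1$ with $t=k$ produces breakpoints $1=b_0<b_1<\dots<b_k<b_{k+1}=s+1$. On each interval $[b_j,b_{j+1}-1]$ the schedule uses $\mathrm{topo}[b_j][k-j]=G_{b_j,b_{j+1}-1}$. By Eq.~\ref{eq:tc-subproblem}, that interval costs $t_c(b_j,b_{j+1}-1)$, and unrolling Eq.~\ref{eq:dpat} shows the total is $\dpv[1][k]$.

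\emph{Optimality for fixed $k$.} Next I will show that every schedule with exactly $k$ reconfigurations costs at least $\dpv[1][k]$. I will argue by induction on $k$, mirroring the proof of Lemma~\ref{lem:dprecurrence}.
\begin{itemize}
\item For any schedule, let $b$ be its first reconfiguration point. The prefix $a,\dots,b-1$ runs on a single topology, so by the definition of $G_{a,b-1}$ as an argmin in Eq.~\ref{eq:g-subproblem} its cost is at least $t_c(a,b-1)$.
\item The suffix is a schedule with exactly $k-1$ reconfigurations starting at $b$, so by induction it costs at least $\dpv[b][k-1]$.
\item Hence the total is at least the right-hand side of Eq.~\ref{eq:dpat}.
\end{itemize}

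\emph{Main obstacle.} The hard step is making this exchange argument rigorous, because intervals are maximal. In the optimal-substructure solution, consecutive intervals might receive the same argmin topology, so the realized number of topology changes could be smaller than $k$. I will handle this by counting a reconfiguration at each declared breakpoint, matching how Algorithm~\ref{alg:algorithm} charges $k\alpha_r$. Under that convention, the DP minimizes over a superset that includes all schedules with exactly $k$ changes. Any reconstructed schedule whose realized changes fall below $k$ is then dominated by a schedule with fewer declared reconfigurations and the same step costs, and it is charged at most $k\alpha_r$. Feasibility requires $k\le s-1$; for $k=s$ I take $\dpv[1][s]=\infty$, which the initialization already gives.

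\emph{Second claim.} Every schedule has some number $k\in\{0,\dots,s\}$ of reconfigurations. Its total cost is its step cost plus $k\alpha_r$, which by the fixed-$k$ result is at least $\dpv[1][k]+k\alpha_r$. The reconstructed schedule for each $k$ attains this value, with the caveat above that if it realizes fewer changes its true cost is only smaller. Taking the minimum over $k$ therefore yields a globally optimal schedule. I will remark that the minimum over the finite set $\{0,\dots,s\}$ is always attained, since $k=0$ gives a finite value.
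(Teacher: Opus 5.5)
Your proposal is correct and follows the same route as the paper: induction on $k$ using the first-reconfiguration decomposition behind Lemma~\ref{lem:dprecurrence} shows $\dpv[1][k]$ is optimal for each fixed $k$, and then $\dpv[1][k]+k\,\alpha_r$ is minimized over $k$, which is valid because the reconfiguration delay is additive and depends only on $k$. Your extra care goes beyond the paper's terse proof without changing the argument: you verify that the reconstruction attains $\dpv[1][k]$, you distinguish declared from realized reconfigurations when consecutive intervals share the same argmin topology, and you treat the infeasible case $k=s$.
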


\begin{proof}
By Lemma~\ref{lem:dprecurrence} and induction on $k$, $\dpv[a][k]$ equals the optimal completion time for steps $a$ through $s$ with exactly $k$ reconfigurations; in particular $\dpv[1][k]$ is optimal for the full instance, and $\nextv$ reconstructs an optimal switching schedule. Since reconfiguration delay is additive and depends only on $k$, minimizing $\dpv[1][k]+k\,\alpha_r$ over $k\in\{0,\dots,s\}$ yields the delay-aware optimum.
\end{proof}

% Our objective is to minimize the total completion time of the collective communication algorithm, which consists of four components: ($\delta\cdot \ell_i$) the propagation delay as function of path lengths, ($\alpha)$ the fixed latency factor, ($\alpha_r$) the total reconfiguration delay incurred by the interconnect, and ($\frac{1}{\theta}$) the congestion factor across all steps. The congestion and propagation delay depend on whether we choose to reconfigure the interconnect to match the communication pattern $\mathcal{M}_i$ or maintain the base topology $G$. The constraints ensure that $z_i$ correctly captures whether a reconfiguration occurs between steps, and all variables are binary.

% Overall, this formulation is an integer linear program ( ILP), which is NP-hard in the general case~\cite{karp2009reducibility}. Interestingly, our model has a special sequential structure: the variables $x_i$ (interconnect state) and $z_i$ (reconfiguration event) depend only on the previous step. This structure admits an efficient dynamic programming solution and is polynomial-time solvable due to the principle of optimality~\cite{bertsekas2012dynamic}.
Overall, our framework captures the fundamental trade-off between reconfiguration delay and congestion in adaptive photonic interconnects. It provides a systematic way to synthesize circuit switching schedules for collective communication, balancing the benefits of reconfiguration against its costs. Notably, the synthesis is aware of data volume in each step, reconfiguration delay, propagation delay, and the underlying network throughput.

\subsection{Discussion}
\label{sec:roadmap}
Our synthesis framework combines a dynamic program over step intervals with a topology optimization subproblem to compute the optimal reconfiguration. The dynamic program has polynomial complexity in the number of collective steps, while the dominant computational cost arises from solving the topology subproblem via a MISOCP. The overall complexity is $O(s^4 \cdot g)$, where $s$ denotes the number of communication steps and $g$ captures the cost of solving a single MISOCP instance.

In practice, the number of steps $s$ is modest for many widely used collective algorithms. For example, recursive doubling and Swing have $s = \Theta(\log_2 n)$ steps. We can synthesize the schedule offline and cache it to avoid computations at runtime when messages arrive. As a result, the synthesis cost does not lie on the critical path of collective execution. 
We discuss the practical aspects of the computation cost in~\S\ref{sec:evaluation}.

The primary scalability challenge lies in the topology optimization subproblem. We can solve our MISOCP for interconnect sizes of up to $64$ GPUS, but we need to reduce the effective topology search space further to scale to larger interconnects. This observation motivates us to exploit the structure in collective communication patterns and to restrict our attention to a small set of candidate topologies that are likely to be optimal over contiguous intervals of steps.

It is interesting to understand the precise conditions under which restricted topology classes suffice for arbitrary collectives in future work.
In the next section, we address the following questions, which guide the design of practical synthesis algorithms that balance optimality and efficiency:

\smallskip
\noindent
\textit{\textbf{(Q1)} To what extent can we reduce the topology search space without compromising optimality?}

\smallskip
\noindent
\textit{\textbf{(Q2)} Can we synthesize optimal schedules within polynomial-time complexity in the number of nodes?}

\section{Optimal Photonic Switching Schedules for Recursive Doubling AllReduce}
Building on our observations in \S\ref{sec:modeling} and the synthesis technique in \S\ref{sec:harvest}, our goal is to efficiently synthesize an \emph{optimal} schedule. Our design centers on the recursive doubling algorithm for AllReduce. We make two new observations about recursive doubling that, as we later show in this section, enable synthesis of schedules within polylogarithmic-time complexity.

% We first give a brief overview and present novel observations on recursive doubling algorithm (\S\ref{sec:overview}). We present \name (\S\ref{sec:oroborus}), outline the key properties of \name schedules (\S\ref{sec:properties}) and discuss its practicality (\S\ref{sec:practicality}).

\subsection{Observations on Recursive Doubling}
\label{sec:overview}

We make two simple yet powerful observations about the recursive doubling algorithm, which directly guide the synthesis. We consider cyclic version of recursive doubling~\cite{kolmakov2020generalization} that retains the same properties as the pairwise exchange version. A node $u$ communicates with node $u+2^{i-1}$, and transmits $\frac{m}{2^{i}}$ chunk size in step $i$, during the reduce-scatter phase. The communication pattern reverses in the AllGather phase. 

We first check if we need additional reconfigurations to preserve reachability when we establish direct links based on the communication step $i$. This, in turn, helps find topologies that can serve multiple steps without frequent reconfiguration.

\begin{graybox}
\begin{observation}[Connectivity]\label{obs:connectivity}
The topology that establishes direct links between GPUs according to the communication pattern of step $i$ also preserves connectivity for all subsequent steps $j \geq i$ in recursive doubling.
\end{observation}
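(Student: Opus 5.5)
We would prove Observation~\ref{obs:connectivity} by making the topology of step $i$ explicit and reducing reachability to a statement about cosets in $\mathbb{Z}_n$. We take $n = 2^L$, label nodes by $\mathbb{Z}_n$, and use the cyclic recursive doubling pattern of \S\ref{sec:overview}. Under that pattern, step $i$ asks node $u$ to send to $u + 2^{i-1} \bmod n$. So the direct-connect topology $G_i$ has directed edge set $E_i = \{(u, u+2^{i-1}) : u \in \mathbb{Z}_n\}$, which is the functional graph of the translation $\tau_i(u) = u + 2^{i-1}$.

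The first step is to describe the components of $G_i$. A directed path of length $r$ from $u$ reaches $u + r\cdot 2^{i-1}$. The orbit of $u$ under $\tau_i$ is therefore the coset $u + \langle 2^{i-1}\rangle$ of the subgroup generated by $2^{i-1}$. This subgroup equals $2^{i-1}\mathbb{Z}_n$ and has order $n/2^{i-1}$. Consequently $G_i$ is a disjoint union of $2^{i-1}$ directed cycles, namely the residue classes $u \bmod 2^{i-1}$, each of length $n/2^{i-1}$. Within each class, every node reaches every other node along forward edges.

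The second step handles the communicating pairs of a later step $j \ge i$. These are $(u, u + 2^{j-1})$, and since $2^{j-1} = 2^{j-i}\cdot 2^{i-1}$, the target lies in the same coset $u + \langle 2^{i-1}\rangle$ as the source. Hence the target is reachable in $G_i$, explicitly by the directed path of $2^{j-i}$ hops $u \to u+2^{i-1} \to \cdots \to u+2^{j-1}$. This also gives $\ell_j(G_i) = 2^{j-i}$ and the congestion $2^{j-i}$ seen in Figure~\ref{fig:rd-configurations}. Every demand of step $j$ is routable, so $\theta(G_i,\mathcal{M}_j) > 0$ and $DCT(m_j\mathcal{M}_j, G_i)$ in Equation~\ref{eq:dct} is finite, which is exactly what ``preserves connectivity'' needs. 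For the AllGather phase, where the pattern is reversed, we use the same subgroup argument. There the required shifts are $\pm 2^{j-1}$ for steps whose exponent is at least $2^{i-1}$ in the reversed order, and $-2^{j-1} \equiv (n/2^{i-1} - 2^{j-i})\cdot 2^{i-1}$ is again a nonnegative multiple of $2^{i-1}$. So every such target is reachable by going forward around the cycle.

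We expect the main obstacle to be stating precisely what ``subsequent'' and ``connectivity'' mean, rather than any calculation. $G_i$ is not globally connected for $i \ge 2$, so the claim has to be read as reachability for every demand pair of step $j$; we would state that reading explicitly. We would also note what happens when $n$ is not a power of two. The argument still goes through whenever $2^{i-1}$ divides $2^{j-1}$, which always holds, because $\langle 2^{i-1}\rangle \supseteq \langle 2^{j-1}\rangle$ for any $n$. We would stress that this containment of subgroups is the single fact the whole proof rests on.
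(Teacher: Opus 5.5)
Your proof is correct and follows the paper's argument. The paper likewise shows that in the step-$i$ direct-connect topology, node $a$ reaches $a+2^{j-1}$ by following $2^{j-i}$ consecutive links of shift $2^{i-1}$; your coset formulation with the explicit path $u \to u+2^{i-1} \to \cdots \to u+2^{j-1}$ is a cleaner statement of the paper's chain, whose listed intermediate nodes ($a+1, a+2, a+4,\ldots$) are stated imprecisely. Your reading of ``connectivity'' as reachability of every step-$j$ demand pair is also exactly what the paper's argument establishes, since $G_i$ splits into $2^{i-1}$ cycles for $i \ge 2$.
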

\end{graybox}

In recursive doubling, step $i$ requires each node $a$ to communicate with $a+2^{i-1}$ (with steps indexed from $1$). Establishing these direct links does not break connectivity for any later step $j \geq i$. In step $j$, node $a$ must communicate with $a+2^{j-1}$, which lies at distance $2^{j-i}$ in this topology. 
This is because $a$ connects to $a+1$, which in turn connects through a chain of nodes $a+2, a+4, \ldots, a+2^{j-i}$, which ensures connectivity. The proof follows.

\begin{figure*}[!t]
     \centering
     % --- Row 1 ---
     \begin{subfigure}[b]{0.24\textwidth}
         \centering
         \includegraphics[width=\textwidth]{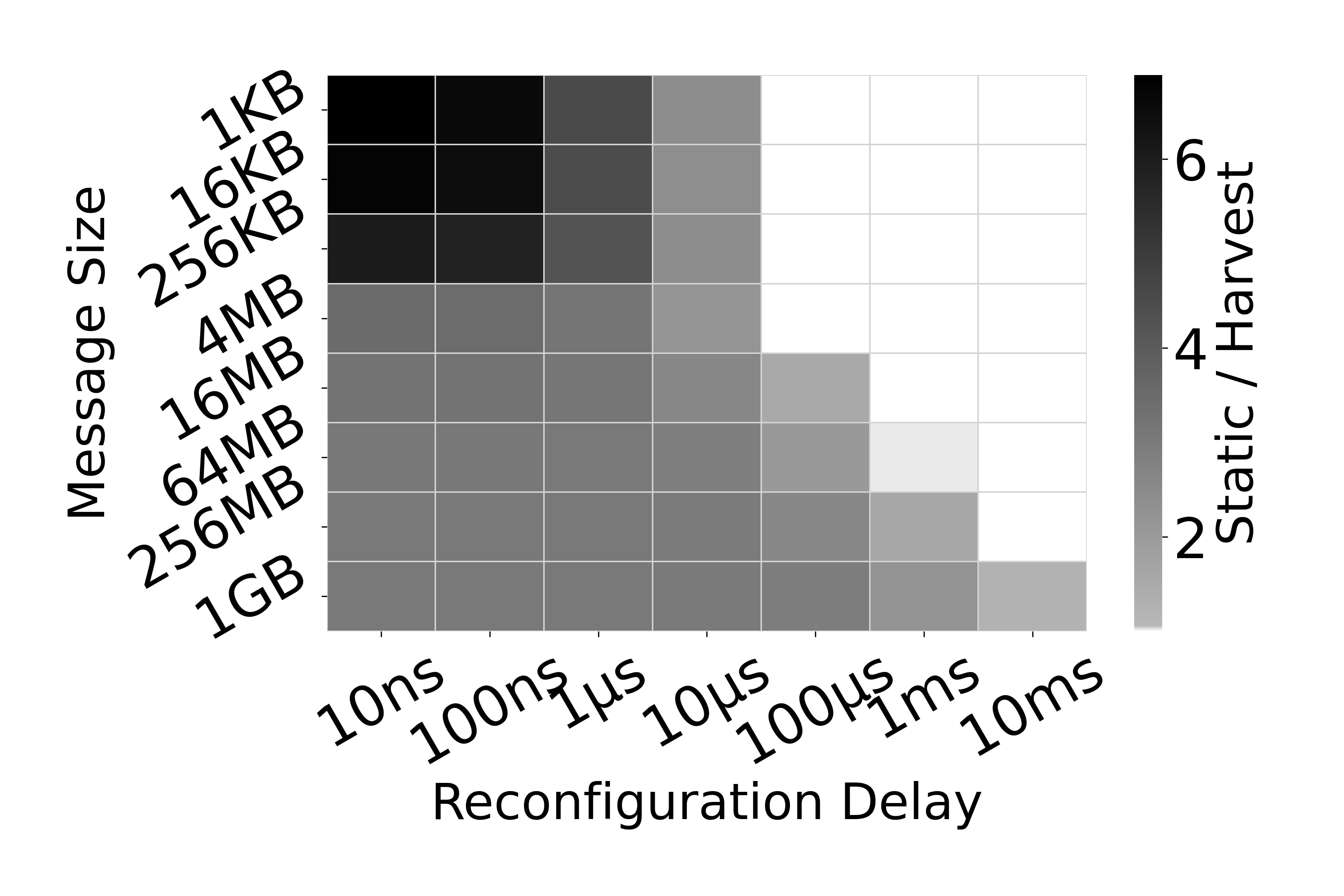}
         \subcaption{\small Recursive doubling}
         \label{fig:sim-hd-static}
     \end{subfigure}
     \hfill
     \begin{subfigure}[b]{0.24\textwidth}
         \centering
         \includegraphics[width=\textwidth]{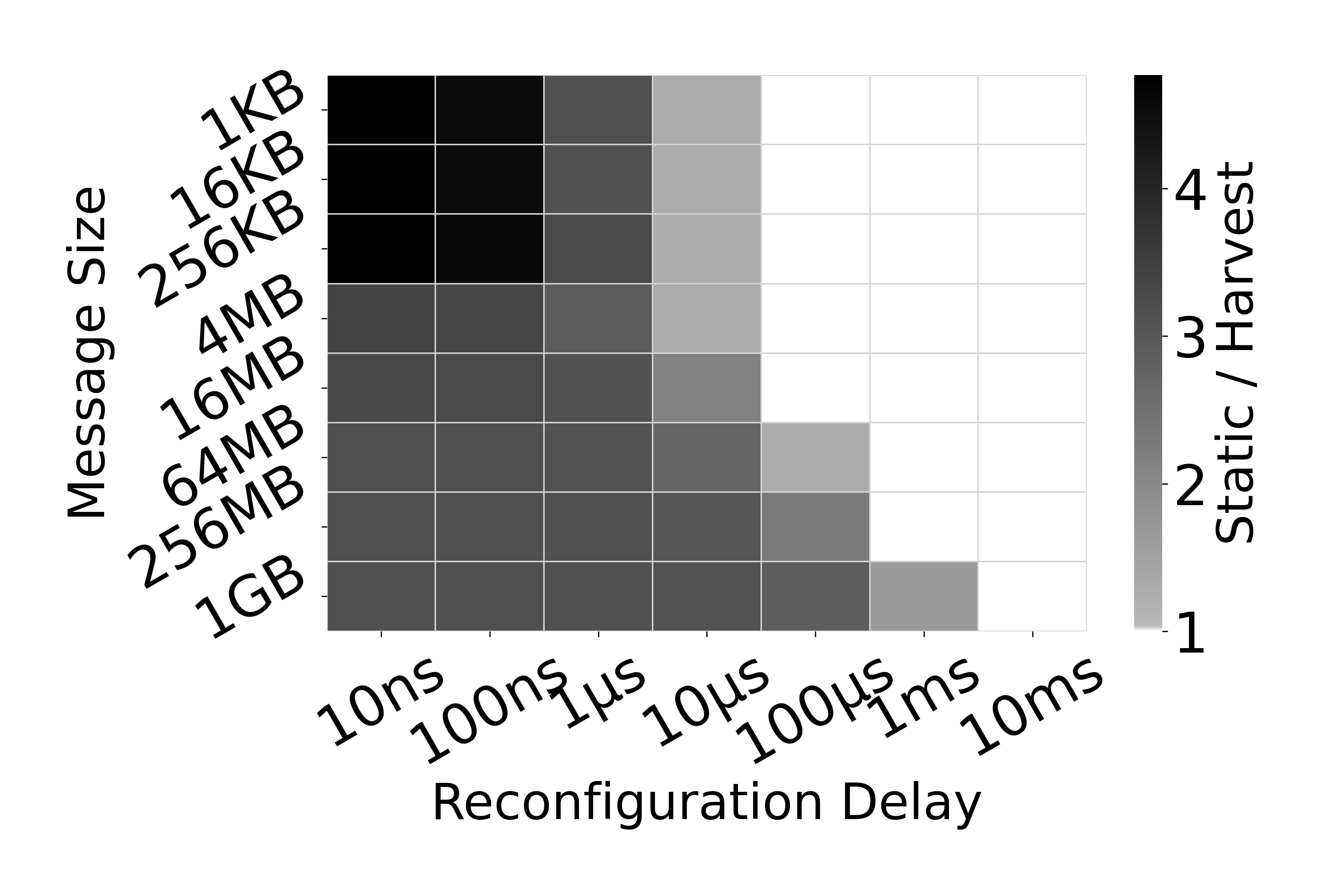}
         \subcaption{\small Swing}
         \label{fig:sim-swing-static}
     \end{subfigure}
     \hfill
     \begin{subfigure}[b]{0.24\textwidth}
         \centering
         \includegraphics[width=\textwidth]{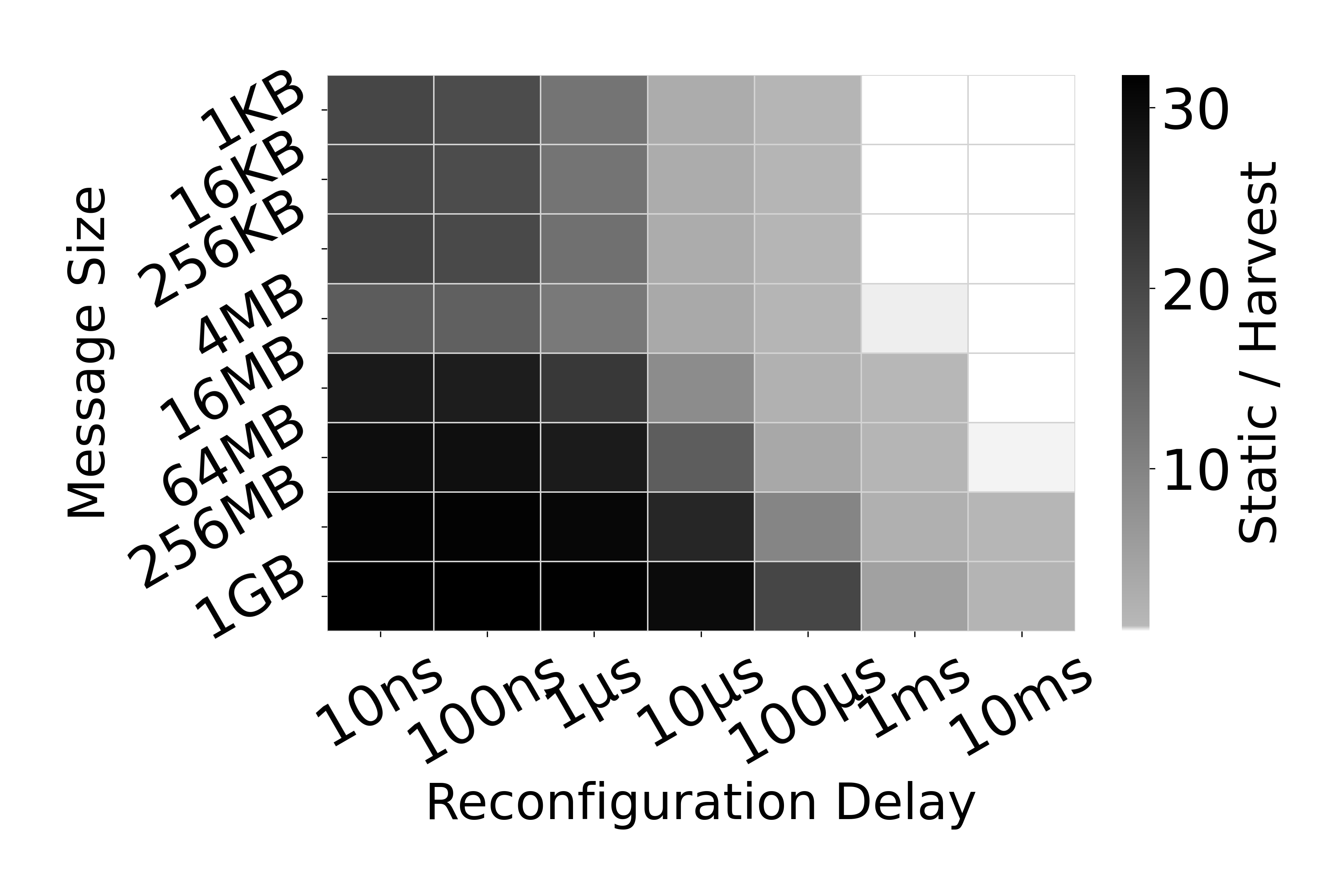}
         \subcaption{\small Direct All-to-All}
         \label{fig:sim-direct-static}
     \end{subfigure}
     \hfill
     \begin{subfigure}[b]{0.24\textwidth}
         \centering
         \includegraphics[width=\textwidth]{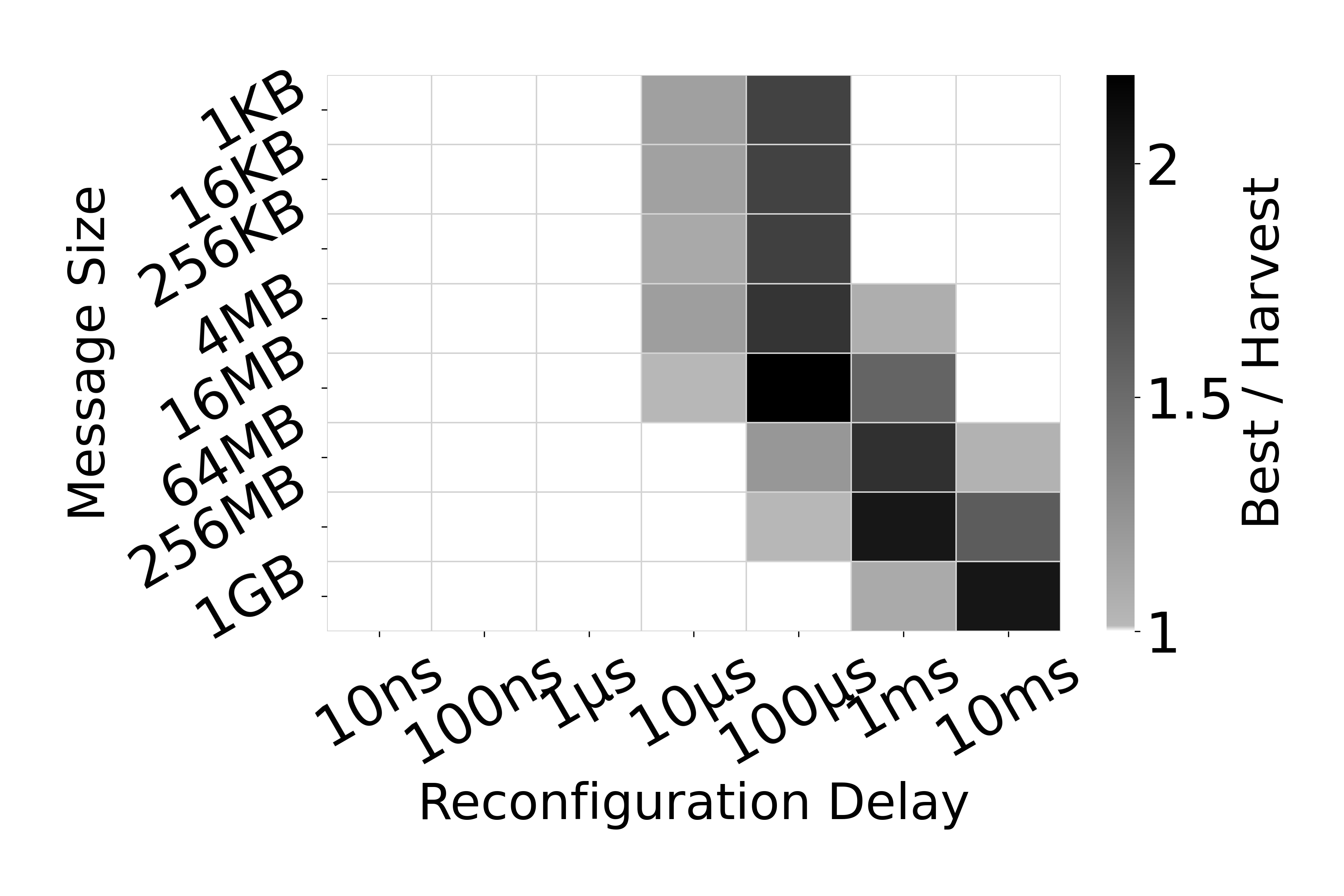}
         \subcaption{\small Direct All-to-All}
         \label{fig:sim-direct-best}
     \end{subfigure}

     \vspace{2pt} 

     % --- Row 2 ---
     \begin{subfigure}[b]{0.24\textwidth}
         \centering
         \includegraphics[width=\textwidth]{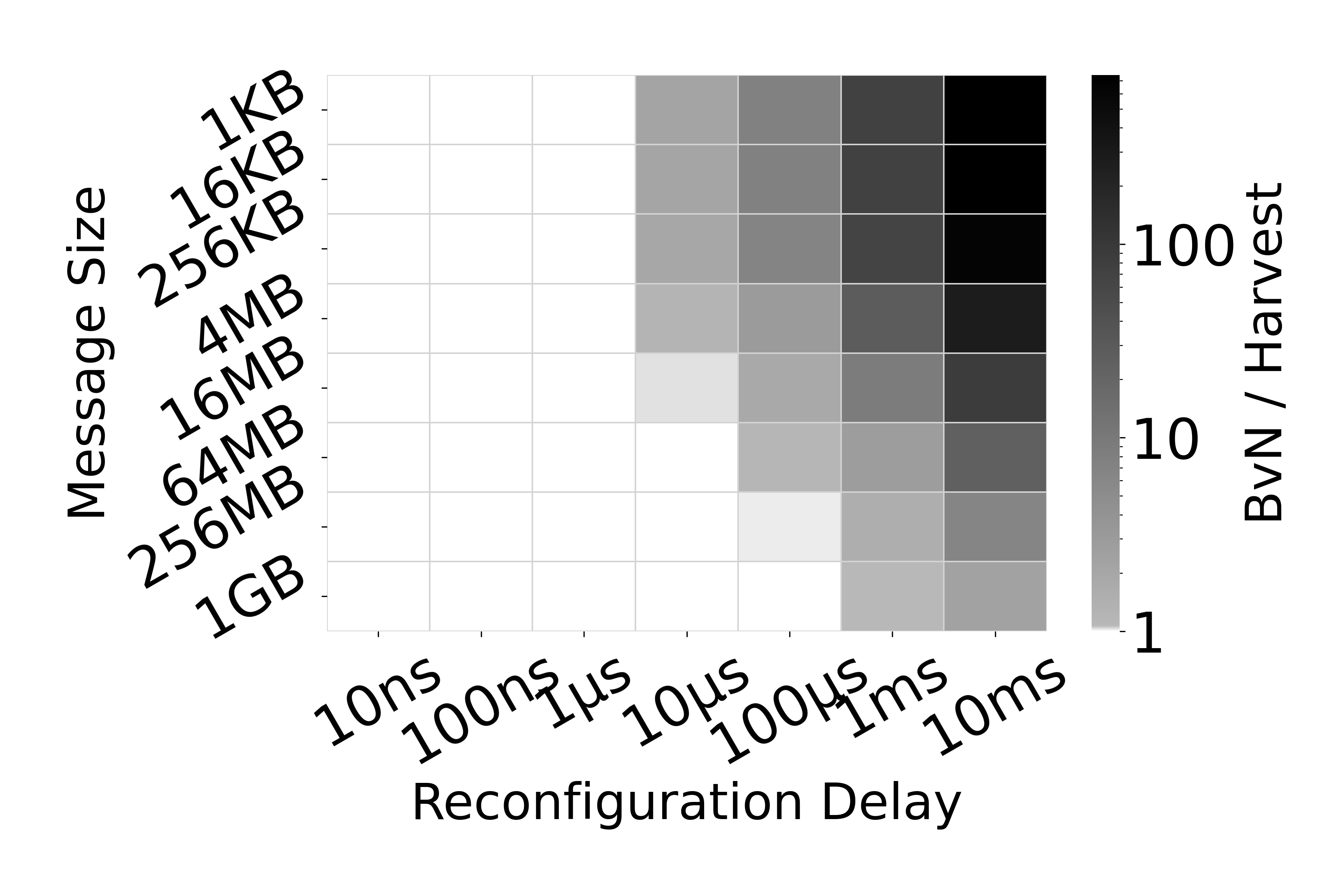}
         \subcaption{\small Recursive doubling}
         \label{fig:sim-hd-bvn}
     \end{subfigure}
     \hfill
     \begin{subfigure}[b]{0.24\textwidth}
         \centering
         \includegraphics[width=\textwidth]{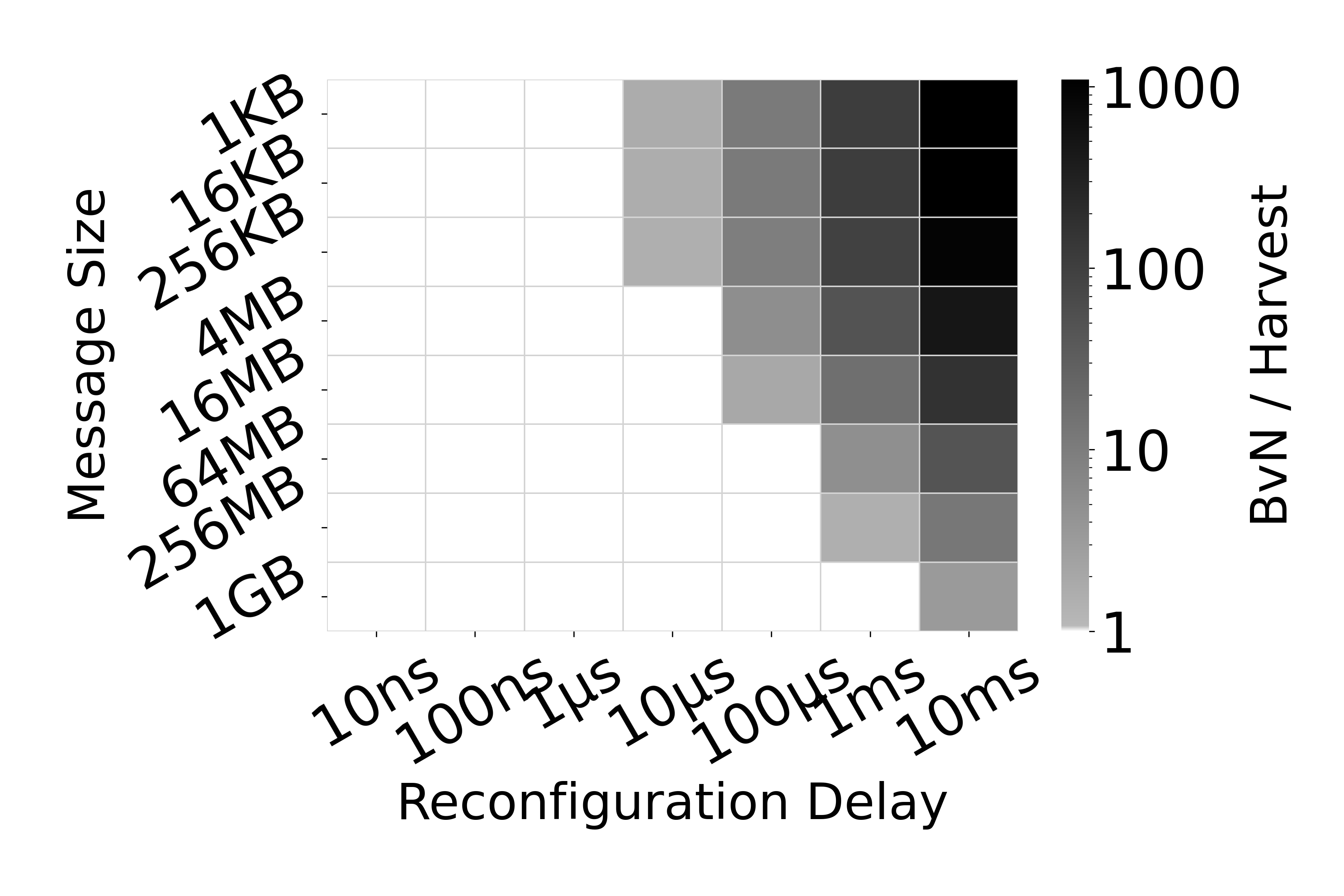}
         \subcaption{\small Swing}
         \label{fig:sim-swing-bvn}
     \end{subfigure}
     \hfill
     \begin{subfigure}[b]{0.24\textwidth}
         \centering
         \includegraphics[width=\textwidth]{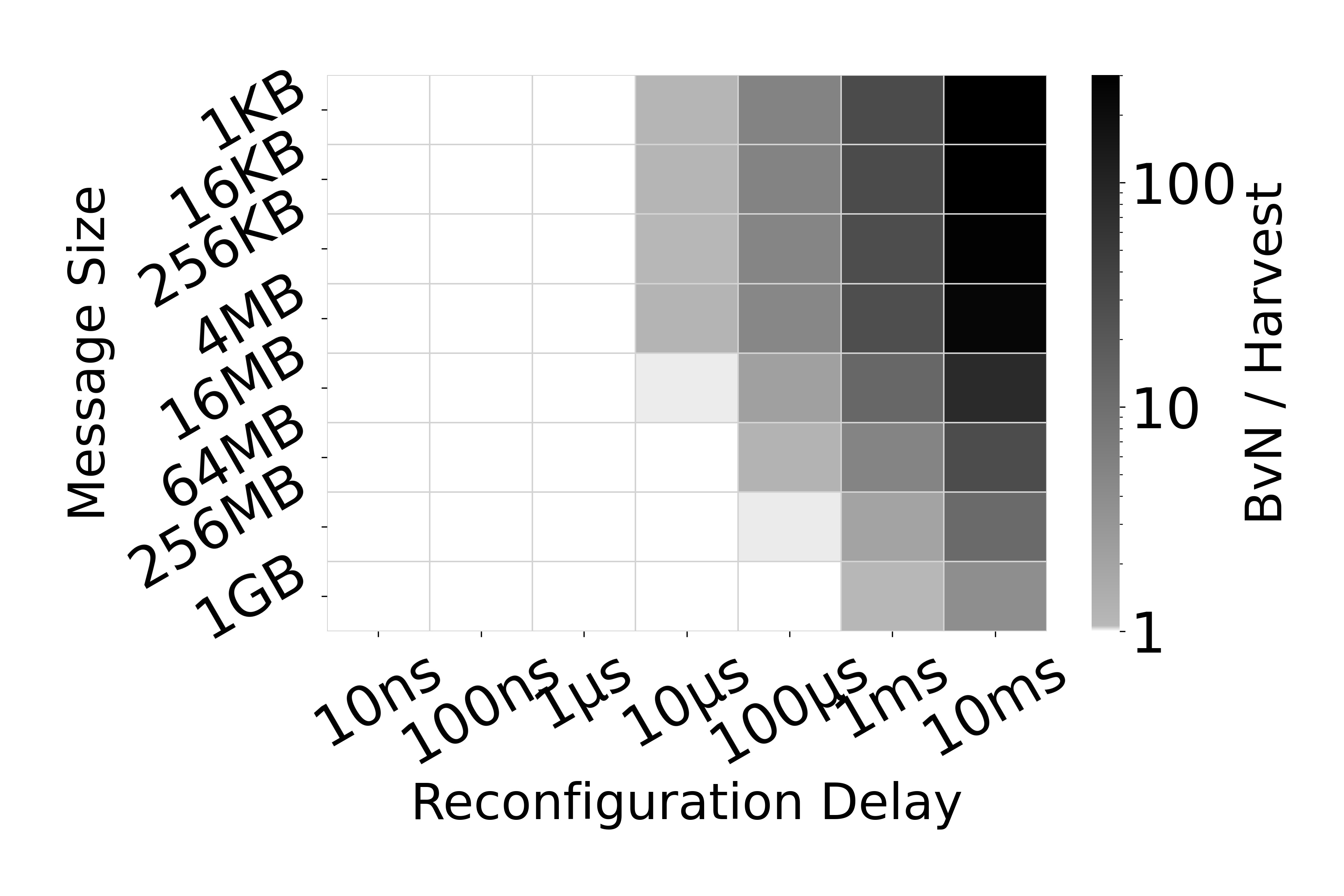}
         \subcaption{\small Direct All-to-All}
         \label{fig:sim-direct-bvn}
     \end{subfigure}
     \hfill
     \begin{subfigure}[b]{0.24\textwidth}
         \centering
         \includegraphics[width=\textwidth]{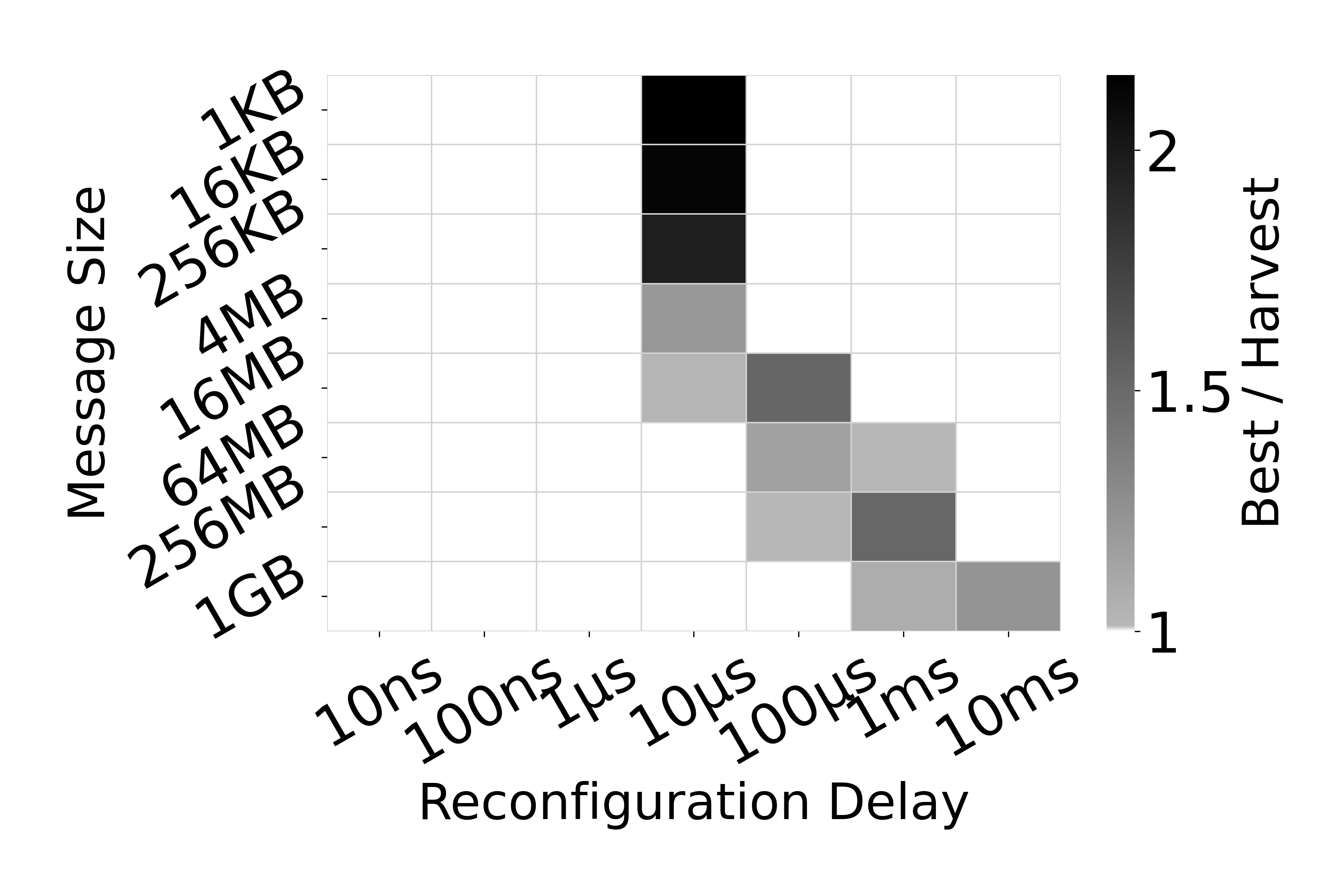}
         \subcaption{\small Recursive doubling}
         \label{fig:sim-hd-best}
     \end{subfigure}
     \caption{[Simulations] We show \name speeds up collective completion time compared to BvN-based schedules and static topologies with link bandwidth=800Gbps, $\alpha$=500ns, and $\delta$=500ns}
     %Heatmaps showing the speedup in collective completion times achieved by \name compared to BvN-based schedules and static topologies with link bandwidth=800Gbps, $\alpha$=500ns, and $\delta$=500ns}
     \label{fig:sim-results}
\end{figure*}

Next, we aim to find a single topology that minimizes the completion time for any sequence of steps $a$ through $b$ in recursive doubling. This enables us to restrict the search space to a specific class of topologies.  

\begin{graybox}
\begin{observation}[Optimal Topology]\label{obs:opt-range}
For any interval of steps $a$ through $b$ in recursive doubling, the topology that minimizes completion time is the one that establishes direct links between GPUs according to the communication pattern of step $a$.
\end{observation}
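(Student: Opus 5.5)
The plan is a per-step lower bound, argued term by term, that the direct-connect topology of step $a$ meets for every step in the interval. Denote by $H_a$ the topology with edges $u \to u+2^{a-1}$, degree $d=1$, as in the running 1-D example. The objective in Equation~\ref{eq:g-subproblem} is $\sum_{i=a}^{b}\bigl(\alpha + \delta\,\ell_i(G) + \beta m_i/\theta(G,\mathcal{M}_i)\bigr)$. The $\alpha$ terms are constant, so it suffices to show that $H_a$ simultaneously minimizes $\ell_i(G)$ and maximizes $\theta(G,\mathcal{M}_i)$ for every $i\in\{a,\dots,b\}$ over all degree-feasible $G$. A topology that minimizes every summand minimizes the sum, so there is no need to trade steps against each other.

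First I would compute the values achieved by $H_a$. By Observation~\ref{obs:connectivity}, in $H_a$ node $u$ reaches its step-$i$ partner $u+2^{i-1}$ along the path $u, u+2^{a-1}, u+2\cdot 2^{a-1},\dots$ of length $2^{i-a}$. Every one of the $n$ step-$i$ flows uses $2^{i-a}$ edges, and by symmetry each of the $n$ edges carries exactly $2^{i-a}$ flows. This gives $\ell_i(H_a)=2^{i-a}$ and $\theta(H_a,\mathcal{M}_i)=2^{-(i-a)}$.

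Second, I would prove matching lower bounds for an arbitrary feasible $G$. For throughput, use the standard volume argument. Total capacity is $nc$ (outdegree $d=1$). Routing $\theta$ of each of the $n$ unit demands consumes $\theta \sum_u \mathrm{dist}_G(u,u+2^{i-1})$ capacity, so
\[
\theta(G,\mathcal{M}_i)\le \frac{n}{\sum_u \mathrm{dist}_G(u,u+2^{i-1})}.
\]
It then remains to show $\sum_u \mathrm{dist}_G(u,u+2^{i-1}) \ge n\,2^{i-a}$, and likewise a path-length bound $\max_u \mathrm{dist}_G \ge 2^{i-a}$, for every $G$ that must also serve the earlier steps $a,\dots,i-1$ of the interval.

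This is the main obstacle. As stated, optimality per step is false: the direct topology $H_i$ achieves $\theta=1$ at step $i$. So the claim must use the interval structure, namely that $G$ has to work for step $a$ as well. A degree-1 $G$ is a disjoint union of directed cycles, i.e.\ a permutation $\pi$. I would first use the step-$a$ term to force $\pi = (u\mapsto u+2^{a-1})$. The difficulty is that I doubt the forcing holds exactly. A competitor might do slightly worse at step $a$ and better later, in which case the statement only holds under the paper's implicit $m_i$ halving with $\beta m$ terms dominating. I would handle this by a cost comparison. For any $\pi$, let $D_i=\sum_u \mathrm{dist}_\pi(u,u+2^{i-1})$. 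In the cyclic group generated by $\pi$'s cycles, every step's partner has to be reached, and cycles of a permutation make $\mathrm{dist}(u,u+2^{i-1})$ behave additively: composing $2^{i-a}$ step-$a$ hops gives the step-$i$ partner. This yields the subadditivity bound $D_i \ge$ (something like) $2^{i-a}D_a/\ldots$. Combined with $D_a\ge n$, with equality only for $H_a$, this shows $\sum_i \bigl(\delta\,\ell_i + \beta m_i D_i/n\bigr)$ is minimized at $H_a$.

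I expect this subadditivity-across-steps argument, especially for permutations that are not translations, to require the most care. If a clean argument fails, I would fall back on restricting $\mathcal{G}$ to circulant (shift) topologies $u\mapsto u+k$, where distances are explicit, and compare $k=2^{a-1}$ against the other admissible $k$ directly.
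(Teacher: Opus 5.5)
\textbf{Gap: the key lower bound is missing, and the tool you propose for it gives the wrong inequality.} Your skeleton matches the paper's: per-step lower bounds $\langle 1,\dots,2^{b-a}\rangle$ on path length and congestion, which $H_a$ attains. The paper simply asserts these minima. You are right that they can only hold relative to topologies that also serve step $a$.

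The proposal never establishes $D_i \ge n\,2^{i-a}$. Composing $2^{i-a}$ step-$a$ hops is just the triangle inequality:
\[
\mathrm{dist}_\pi(u,u+2^{i-1}) \le \sum_{j=0}^{2^{i-a}-1}\mathrm{dist}_\pi\bigl(u+j2^{a-1},\,u+(j+1)2^{a-1}\bigr),
\]
hence $D_i \le 2^{i-a}D_a$. That is an upper bound. Combined with $D_a\ge n$ it says nothing about how small $D_i$ can be, so nothing rules out a competitor that is worse at step $a$ and better later. Your fallback to circulant topologies would only prove a weaker statement.

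\textbf{The missing idea: wrap counting on each cycle.} On a directed cycle, distances are additive only modulo the cycle length, and this gives a lower bound.
\begin{enumerate}
\item Feasibility for step $a$ forces every cycle $C$ of $\pi$ to be a union of cosets of $\langle 2^{a-1}\rangle$, so $|C|\ge n/2^{a-1}$.
\item Fix $k=2^{i-1}$ with $i\ge a$, and let $p$ give positions along $C$. The map $u\mapsto u+k$ permutes $C$, so the differences $p(u+k)-p(u)$ sum to zero. Therefore $\sum_{u\in C}\mathrm{dist}_\pi(u,u+k)=|C|\cdot\#\{u\in C: p(u+k)<p(u)\}$.
\item Because $\langle 2^{i-1}\rangle\subseteq\langle 2^{a-1}\rangle$, $C$ splits into $|C|\,2^{i-1}/n$ orbits of $u\mapsto u+k$. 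Positions cannot strictly increase around a closed orbit, so each orbit contains at least one wrap. Hence
\[
D_i \;\ge\; \sum_{C}\frac{|C|^2\,2^{i-1}}{n} \;\ge\; \frac{2^{i-1}}{n}\cdot\frac{n}{2^{a-1}}\cdot\sum_C |C| \;=\; n\,2^{i-a}.
\]
\end{enumerate}

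With your volume bound this gives $\theta(G,\mathcal{M}_i)\le n/D_i\le 2^{-(i-a)}$ and $\ell_i(G)\ge D_i/n\ge 2^{i-a}$. These hold for every $i\in\{a,\dots,b\}$ and every degree-$1$ topology feasible for step $a$. So per-step optimality is true once restricted to such topologies, and $H_a$ minimizes every summand simultaneously, regardless of $m_i$, $\alpha$, $\delta$. The cross-step trade-off you worried about cannot occur, and the halving of $m_i$ plays no role.
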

\end{graybox}

We can characterize the interval of steps $a$ through $b$ as: in step $a$, each node $u$ communicates with $u+2^{a-1}$ (with steps indexed from $1$). The  sequence of minimum path lengths for steps $a$ through $b$ is $\langle 1, \ldots, 2^{b-a} \rangle$. Likewise, the minimum congestion we incur is at least $\langle 1, \ldots, 2^{b-a} \rangle$, which corresponds to each step from $a$ through $b$. The topology that establishes direct connections according to step $a$'s communication pattern, i.e., a direct link between $u$ and $u+2^{a-1}$, achieves exactly this minimum sum of path lengths and congestion. Specifically, for step $a$, the communication distance is reduced to $1$. For step $a+1$, node $u$ communicates with $u+2^{a}$, which is at distance $2$: in our chosen topology, $u$ connects to $u+2^{a-1}$, which in turn connects to $u+2^{a-1}+2^{a-1} = u+2^{a}$. The proof follows for both path lengths and congestion.

We can now express the completion time of steps $a-b$ where the communication pattern matches that of step $a$ as:
\vspace{-2mm}
\begin{equation}
t_c(a,\ b) =  \sum_{i=a}^{b} DCT(m_i \cdot \mathcal{M}_i,\ G_{a,b}) \nonumber 
\end{equation}
\vspace{-4mm}
\begin{align}
&= \alpha\cdot(b-a+1) + \delta\cdot\sum_{i=a}^{b}2^{i-a} + \beta\cdot m \cdot \sum_{i=a}^{b} \frac{1}{2^{i}}\cdot 2^{i-a} \nonumber \\
&= \alpha\cdot(b-a+1) + \delta \cdot \left(2^{\, (b-a+1)} - 1\right) + \beta\cdot m \cdot \frac{b-a+1}{2^{a}}
\end{align}
% \Fn{\Prop{$a,b$}}{
%   \Return{$\delta \cdot \sum_{i=a}^{b} 2^{\,i-a} \;=\; \delta \cdot \left(2^{\, (b-a+1)} - 1\right)$}
% }

% \Fn{\Comm{$a,b$}}{
%   \Return{$\beta\,m \cdot \sum_{i=a}^{b} \frac{1}{2^{i}}\cdot 2^{\,i-a} \;=\; \beta\,m \cdot \frac{b-a+1
%   }{2^{a}}$}

Observation~\ref{obs:connectivity} shows certain topologies can preserve connectivity without further reconfiguration. This removes forward dependencies in reconfiguration decisions. Observation~\ref{obs:opt-range} finds the optimal topology for any range of steps $a$ through $b$ in $O(1)$ time, without additional work. Together, they hugely simplify the topology search space to just $O(1)$, leaving the key question: when should we reconfigure? 

Building on Observations~\ref{obs:connectivity} and~\ref{obs:opt-range}, we synthesize optimal schedules for recursive doubling using the dynamic programming approach we described in \S\ref{sec:harvest} (Algorithm~\ref{alg:algorithm}). In particular, the function \texttt{\textcolor{takeawaycolor}{CompletionTime(a,b)}} simplifies i.e., finding $G_{a,b}$ is $O(1)$. The rest of the procedure remains the same as earlier: \first the function \texttt{\textcolor{takeawaycolor}{SynthesizeSchedule(s,k)}} synthesizes the optimal schedule corresponding to a given $k$ number of reconfigurations; \second we iterate from $0$ to $s=\log_2(n)$ number of reconfigurations, synthesize the schedule for each, and return the best global schedule. Given the logarithmic number of steps in recursive doubling AllReduce, synthesizing optimal switching schedules reduces to polylogarithmic complexity of $O((\log n)^4)$.

\begin{figure*}
\centering
\begin{subfigure}{0.24\linewidth}
\centering
\includegraphics[width=1\linewidth]{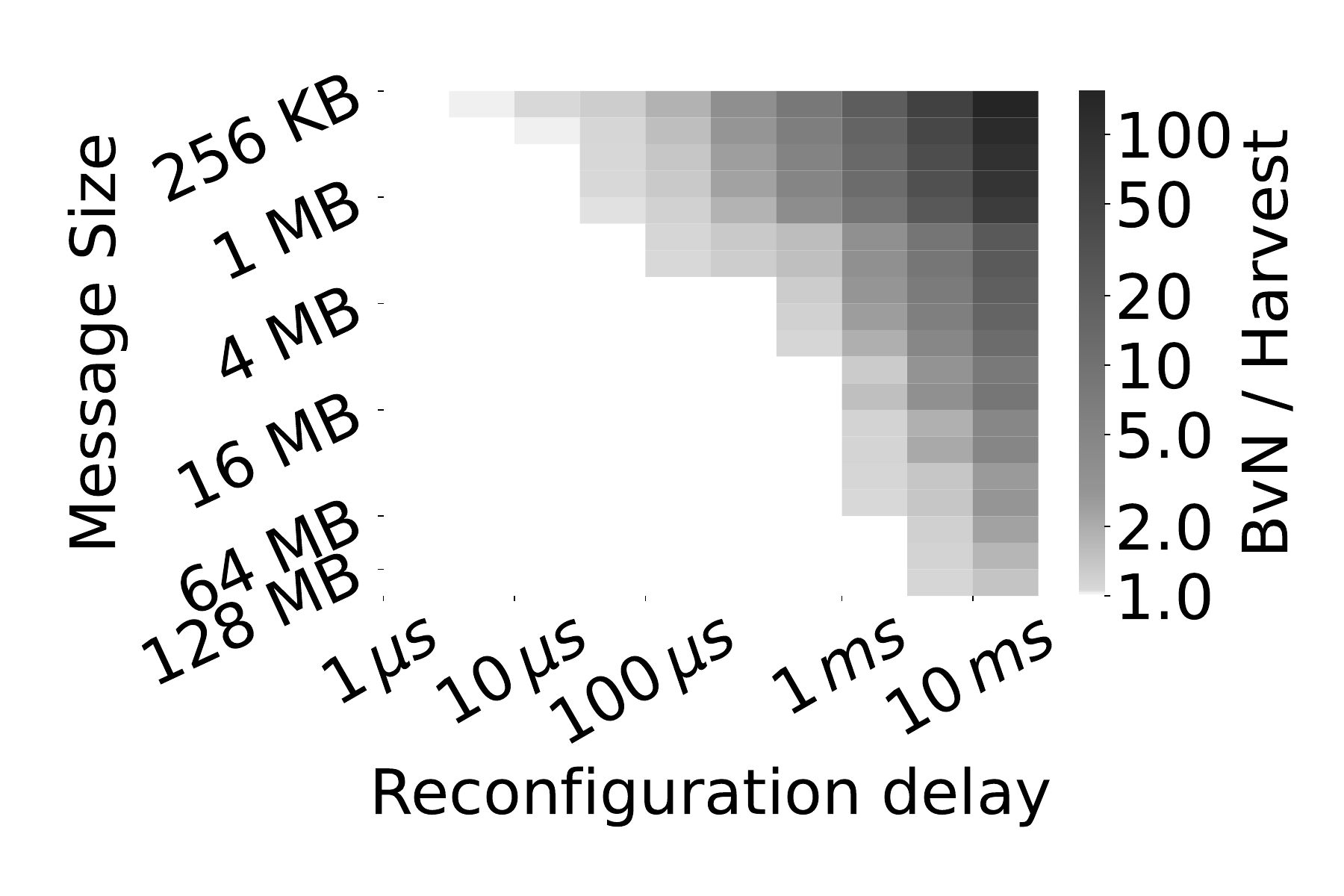}
\caption{}
\label{fig:emu-bvn}
\end{subfigure}
\begin{subfigure}{0.24\linewidth}
\centering
\includegraphics[width=1\linewidth]{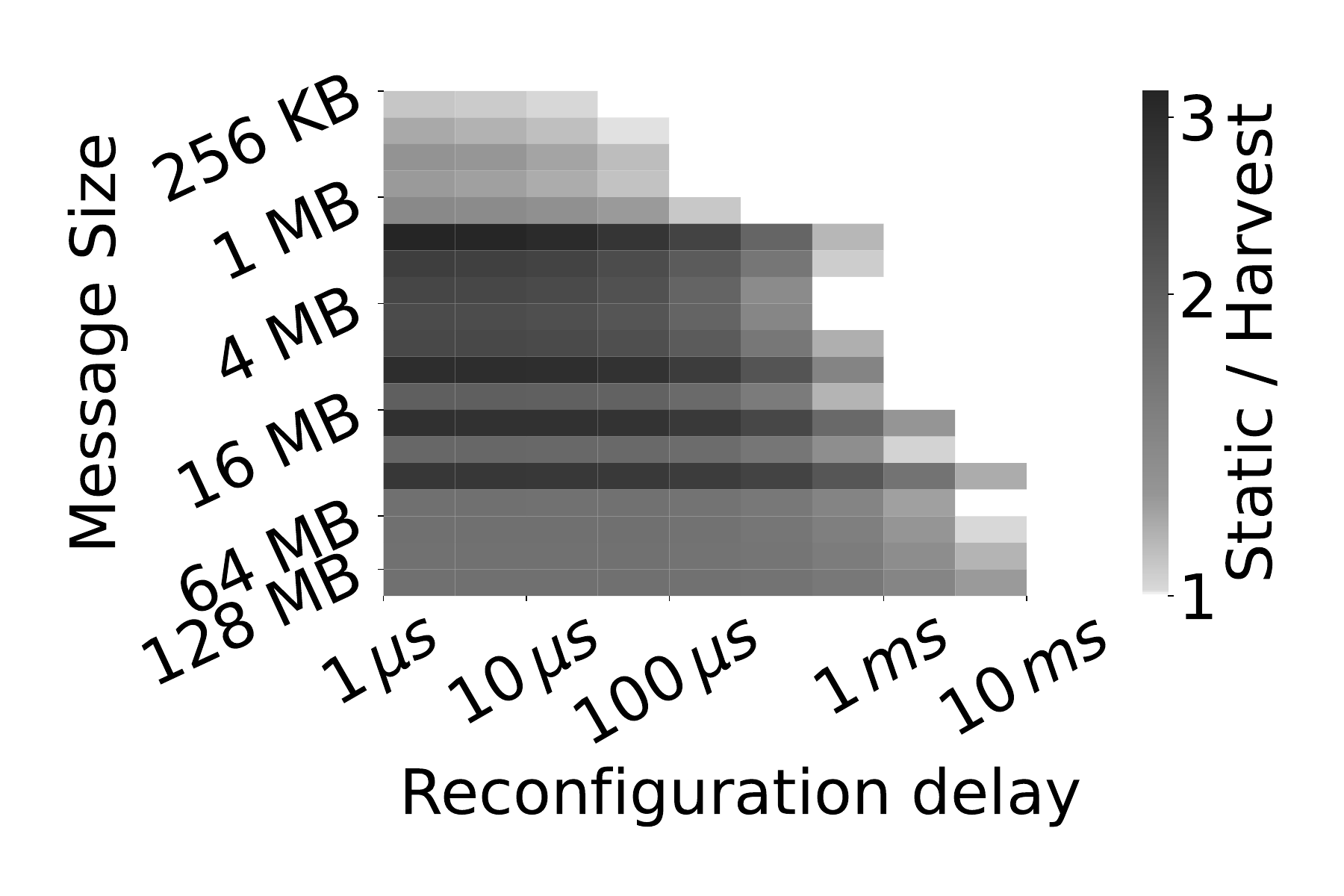}
\caption{}
\label{fig:emu-static}
\end{subfigure}
\begin{subfigure}{0.24\linewidth}
\centering
\includegraphics[width=1\linewidth]{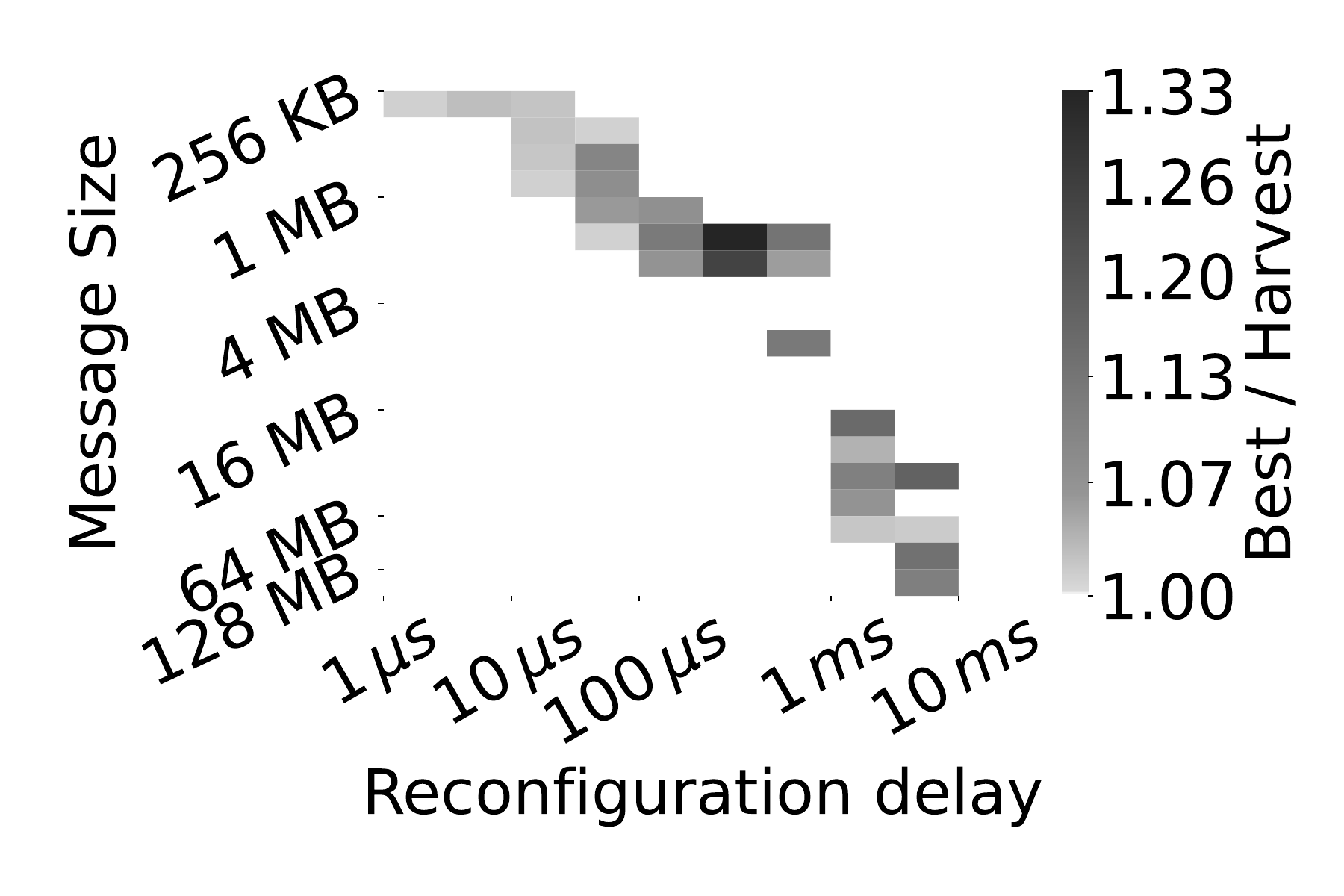}
\caption{}
\label{fig:emu-best}
\end{subfigure}
\begin{subfigure}{0.24\linewidth}
\centering
\includegraphics[width=1\linewidth]{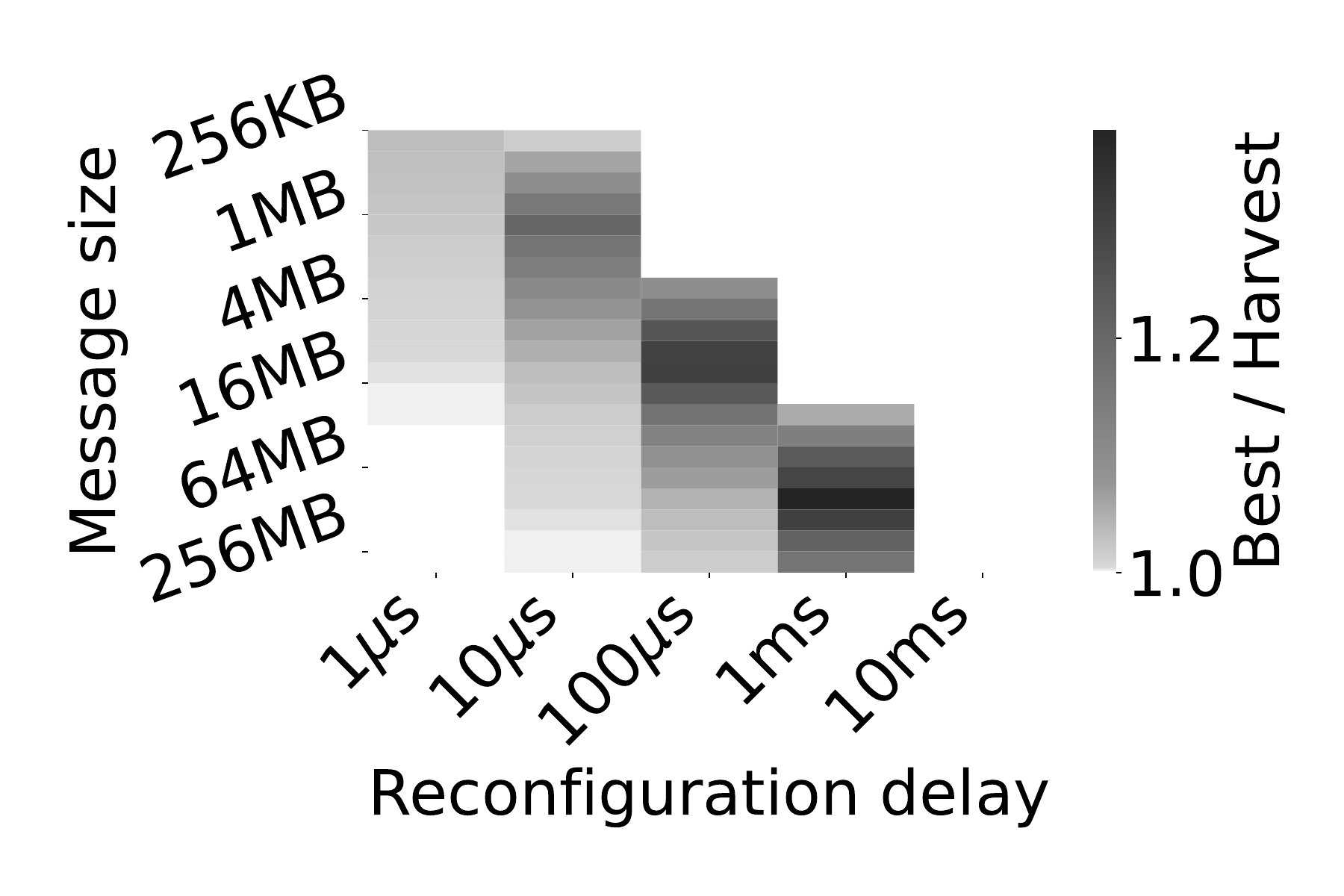}
\caption{}
\label{fig:synth-emu-params}
\end{subfigure}
\caption{[Hardware emulation] Heatmaps showing the speedup in collective completion time achieved by \name relative to (a) BvN-based schedules, (b) a static ring, (c) the best of BvN and static, and (d) the speedup estimated by synthesis, which closely matches the measured speedups observed in hardware emulation.}
\label{fig:emu}
\end{figure*}

\vspace{-3mm}
\section{Evaluation}
\label{sec:evaluation}
% We evaluate the performance of \name by measuring collective completion time for widely studied collective communication algorithms. In particular, we look at All-Reduce and All-to-All workloads on static topologies ranging from torus networks to expander graphs. We supply collective algorithms optimized for the underlying static topology to \name and observe \textit{when} and \textit{how} it smartly reconfigures these topologies based on reconfiguration delay and demand at each step. Our evaluations make use of both simulators and a real testbed to validate the performance of \name's generated schedules. 
We evaluate the schedules \name synthesizes and their completion time across several collective communication algorithms. We compare their performance against BvN schedules and static  topologies. 
Our evaluation spans simulation, hardware emulation, and numerical optimization, capturing both performance and system-level effects.

\vspace{-3mm}
\subsection{Setup}

\textbf{Network:}
We consider networks with $8$ to $64$ GPUs, representative of typical scale-up domains.
Each GPU has $d$ ports, where $d$ ranges from $2$ (e.g., a 1-D ring) to $6$ (e.g., a 3-D torus).
Unless stated otherwise, we set the per-port bandwidth to $800$Gbps for simulations and $100$Gbps for our hardware emulation experiments.
We vary the setup latency $\alpha$, link propagation delay $\delta$, and the interconnect reconfiguration delay $\alpha_r$ over a wide range, from $10$ns (e.g., tunable lasers~\cite{10.1145/3387514.3406221}) to $10$ms (e.g., 3-D MEMS~\cite{polatis}).
This range captures diverse photonic switching technologies and allows us to identify regimes in which reconfiguration is beneficial.

% For our evaluations, we sweep a range of configurations to capture diverse operating conditions. We evaluate a rich set of base topologies, including N-D tori, Kautz graphs, N-D shifted rings, expander graphs, and De Bruijn graphs, among others. Unless otherwise stated, experiments are conducted on a
% network with $N = 64$ nodes and message sizes ranging from $1{,}024$ bytes to
% $1$~GB are evaluated. We vary the propagation delay across $\{10, 500, 10{,}000\}$~ns, with
% corresponding $\alpha$ delays of $\{10{,}000, 500, 10{,}000\}$~ns to understand how the trade-off between the $\alpha, \delta, and \beta$ terms are handled by \name. Reconfiguration delays range from $0$~ns to $1$~ms, while the link bandwidth is fixed at $800$~Gbps. These parameters ensures that our analysis captures the performance trends across a wide spectrum
% of realistic scenarios.

\begin{figure}[t]
\centering
\includegraphics[width=0.9\linewidth]{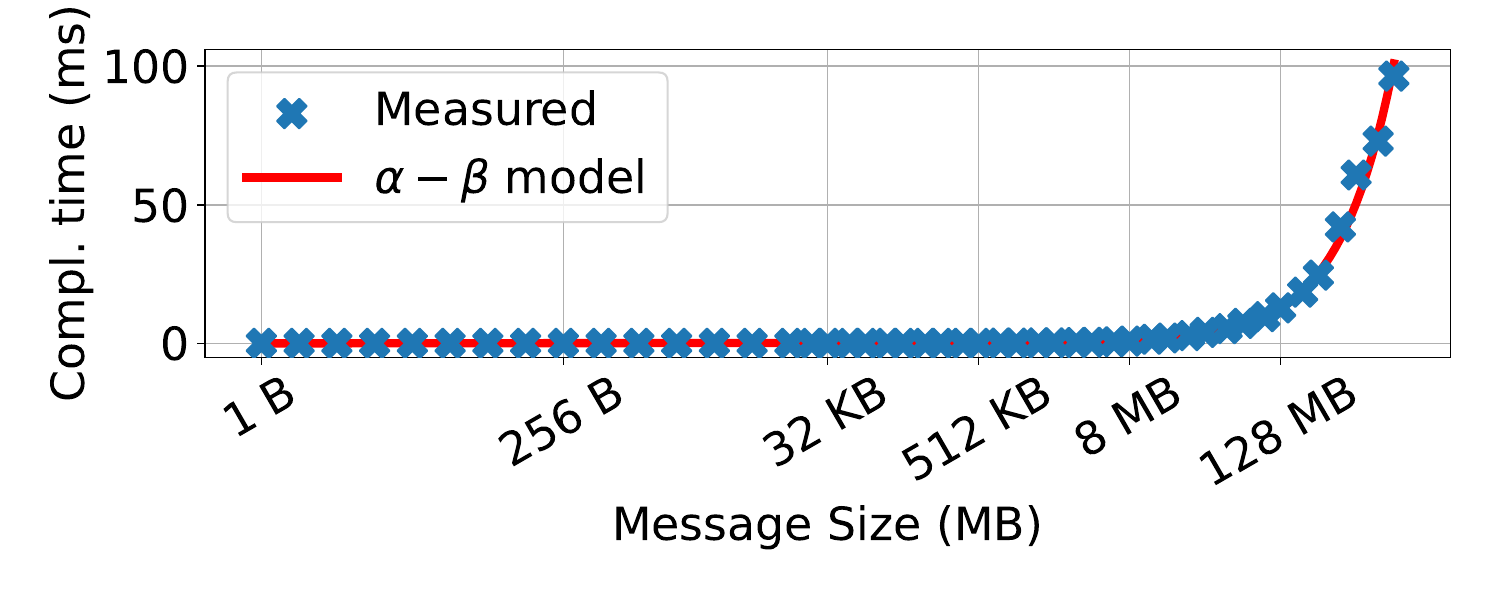}
\vspace{-4mm}
\caption{[Hardware emulation] Correlation between the $\alpha$--$\beta$ cost and measured completion times across different message sizes for one-hop communication.}
\label{fig:alpha-beta-model}
\vspace{-5mm}
\end{figure}

\myitem{Baselines:}
We compare \name against two representative baselines: \first a \emph{static} topology that remains fixed throughout the collective and \second \emph{BvN schedules} that reconfigure the topology at every communication step to directly connect the communicating GPU pairs~\cite{10.1145/3748273.3749210,10.1145/3748273.3749203}.
For static topologies, we consider rings, $2$D and $3$D tori, and generalized Kautz graphs~\cite{305352}.
Together, these baselines capture the two extremes of the design space: interconnects that never reconfigure, and those that reconfigure at every step.
% Comparing against them allows us to evaluate \name's ability to balance reconfiguration overhead against congestion and propagation delay.

% \textbf{Baselines:} There are mainly two baselines that fit our problem formulation: static interconnects and reconfigurable interconnects. In static interconnects, we have a pre-selected topology that remains fixed for the duration of the collective. For reconfigurable interconnects, we consider ones that reconfigure based on the BvN decomposition of aggregated demand matrices of a collective. In simple terms, it would reconfigure after each step of the collective to form direct connect topologies between nodes that will be communicating in the next step.

\myitem{Collective algorithms:} The input to \name and other baselines is the workload which consists of standard collective algorithms.
For AllReduce, we evaluate recursive doubling~\cite{10.1007/978-3-540-24685-5-1}, Swing~\cite{295653} (which is equivalent to Bine Butterfly~\cite{10.1145/3712285.3759835} in terms of communication pattern), and Bruck's concatenation algorithm~\cite{642949}.
For All-to-All, we evaluate the total exchange or transpose operation (direct All-to-All) as well as Bruck's All-to-All (index) algorithm~\cite{642949}.
For broadcast, we evaluate the binomial tree and binary tree algorithms.

% Once we select a topology, we run collective algorithms that are known to be optimized for the given topology. For instance, Rabenseifer's Recursive Doubling algorithm is run on a ring topology. Similarly, we evaluate Swing, Bruck's All-to-All and All-Gather, Bine, etc. \\
% \textcolor{blue}{[mr: Maybe mention search space??]}

\myitem{Simulations:}
We perform packet-level simulations using Astra-Sim~\cite{9238637,10158106}, which we extend to support circuit-switched interconnects. The simulator now accepts a topology reconfiguration schedule as input and dynamically reconfigures the topology according to it during collective operations.

% Our evaluation covers network simulations primarily based on Astra-sim, a simulator built on top of NS3 \cite{ns3}, that helps model various collective communication algorithms. We extended the existing framework to support circuit switched network topologies, where we can reconfigure after each step of the collective communication algorithm.

\myitem{Hardware emulation:} We would need specialized hardware to directly validate \name across a broad range of photonic switching technologies that have varying reconfiguration delays~---~this is costly.
%Directly validating \name across a broad range of photonic switching technologies with varying reconfiguration delays would require specialized hardware and incur substantial cost.
Instead, we emulate a reconfigurable photonic interconnect on an $8$-GPU testbed. The GPUs are connected in a ring topology via $8$ BlueField-3 NICs equipped with $100$Gbps optical transceivers.
GPU--NIC communication uses GPUDirect RDMA over PCIe, and we ensure that the available PCIe bandwidth exceeds the NIC I/O bandwidth to eliminate PCIe bottlenecks. We implement NIC--NIC communication and routing via the BlueField-3 eSwitch, with flow steering offloaded to hardware.

We use NCCL to execute collective operations step by step, and measure the runtime of each step until a reconfiguration is required.
At that point, we pause execution, update the interconnect configuration, and resume the remaining steps.
We compute the total completion time as the sum of the step-wise runtimes plus a fixed physical reconfiguration penalty. This methodology is practical, cost-effective, and yields representative results.

\myitem{Numerical evaluation:}
We implement the synthesis component of \name in C++ and use Gurobi to solve the subproblems in each dynamic program.
We describe the optimization formulation in more detail in \S~\ref{app:misocp}.
For a range of network sizes and topologies, primarily multi-dimensional topologies, we synthesize reconfiguration schedules and report the corresponding collective completion times produced by the optimization. Moreover, we compare the trends in the synthesized schedules with the results obtained from our test-bed experiments.

% For multiport and multidimensional setups, we perform flow level simulation
% to demonstrate performance of \name across various state-of-the-art algorithms applied on a set of more complex topologies like 3D Toris and Kaust Graphs.

\begin{figure*}[t]
     \centering
     % --- Row 1 ---
     \begin{subfigure}[b]{0.24\textwidth}
         \centering
         \includegraphics[width=\textwidth]{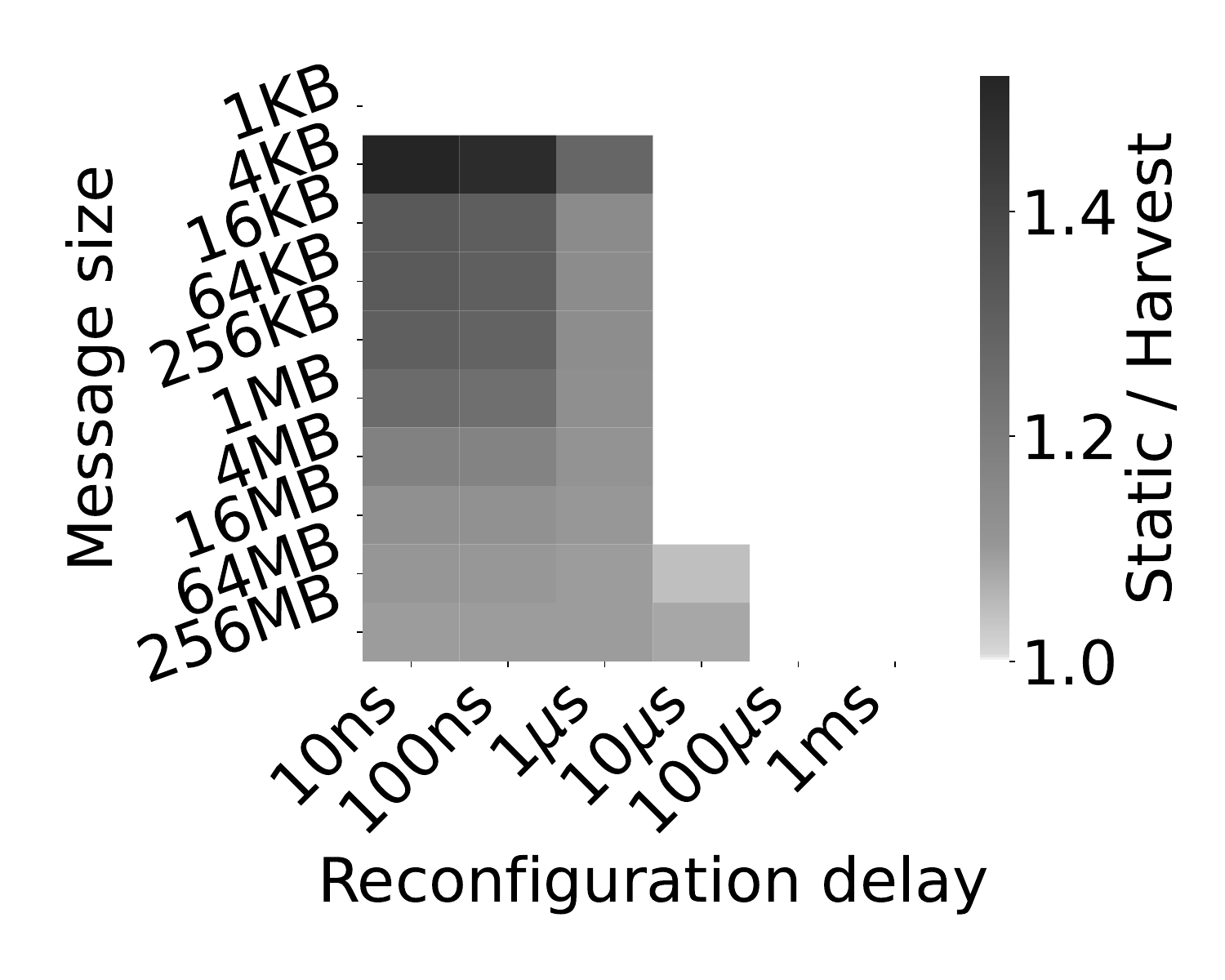}
         \caption{8x8}
         \label{fig:hd_800_row1}
     \end{subfigure}
     \hfill
     \begin{subfigure}[b]{0.24\textwidth}
         \centering
         \includegraphics[width=\textwidth]{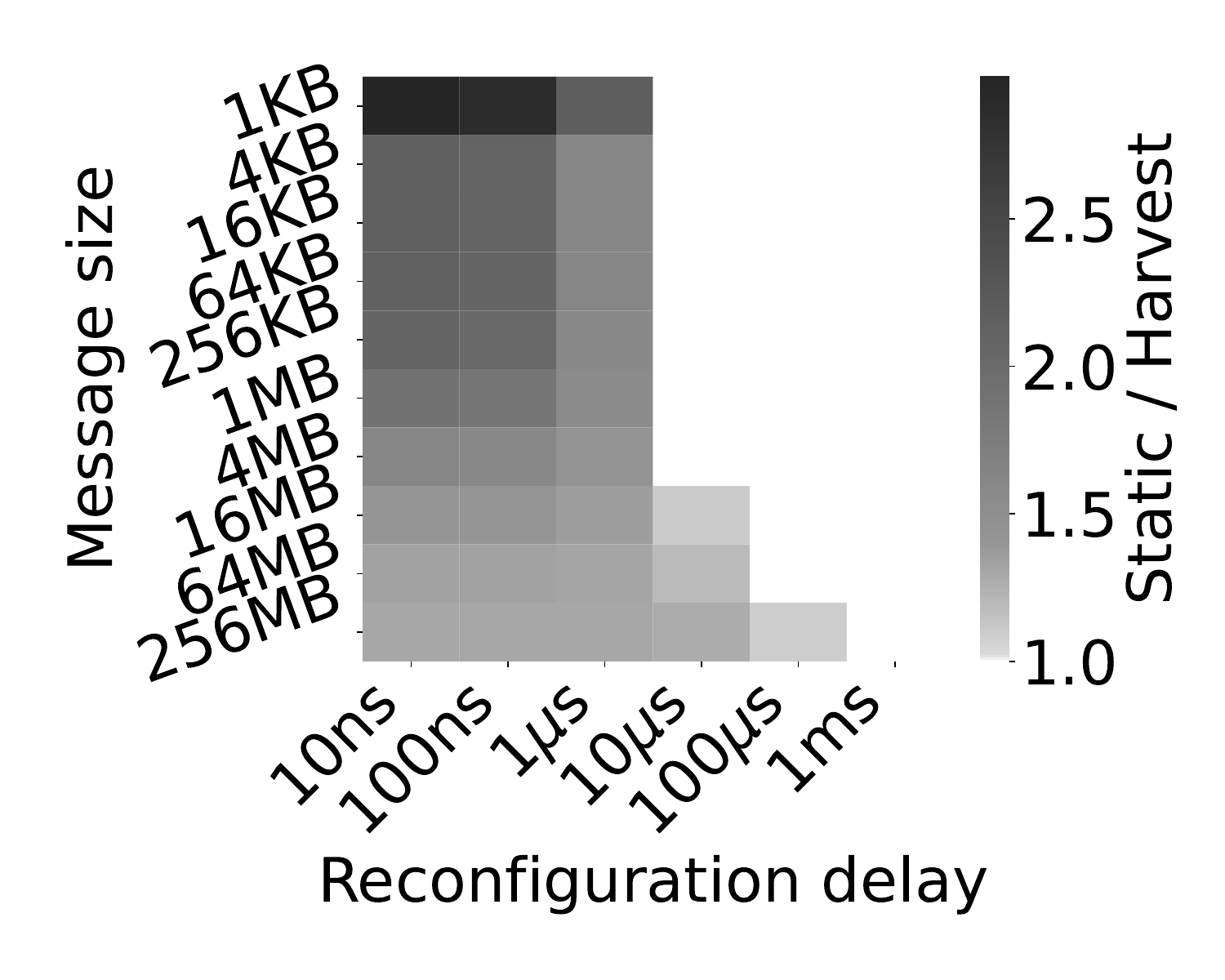}
         \caption{16x4}
         \label{fig:direct_best_row1}
         % \subcaption[]{}
     \end{subfigure}
     \hfill
     \begin{subfigure}[b]{0.24\textwidth}
         \centering
         \includegraphics[width=\textwidth]{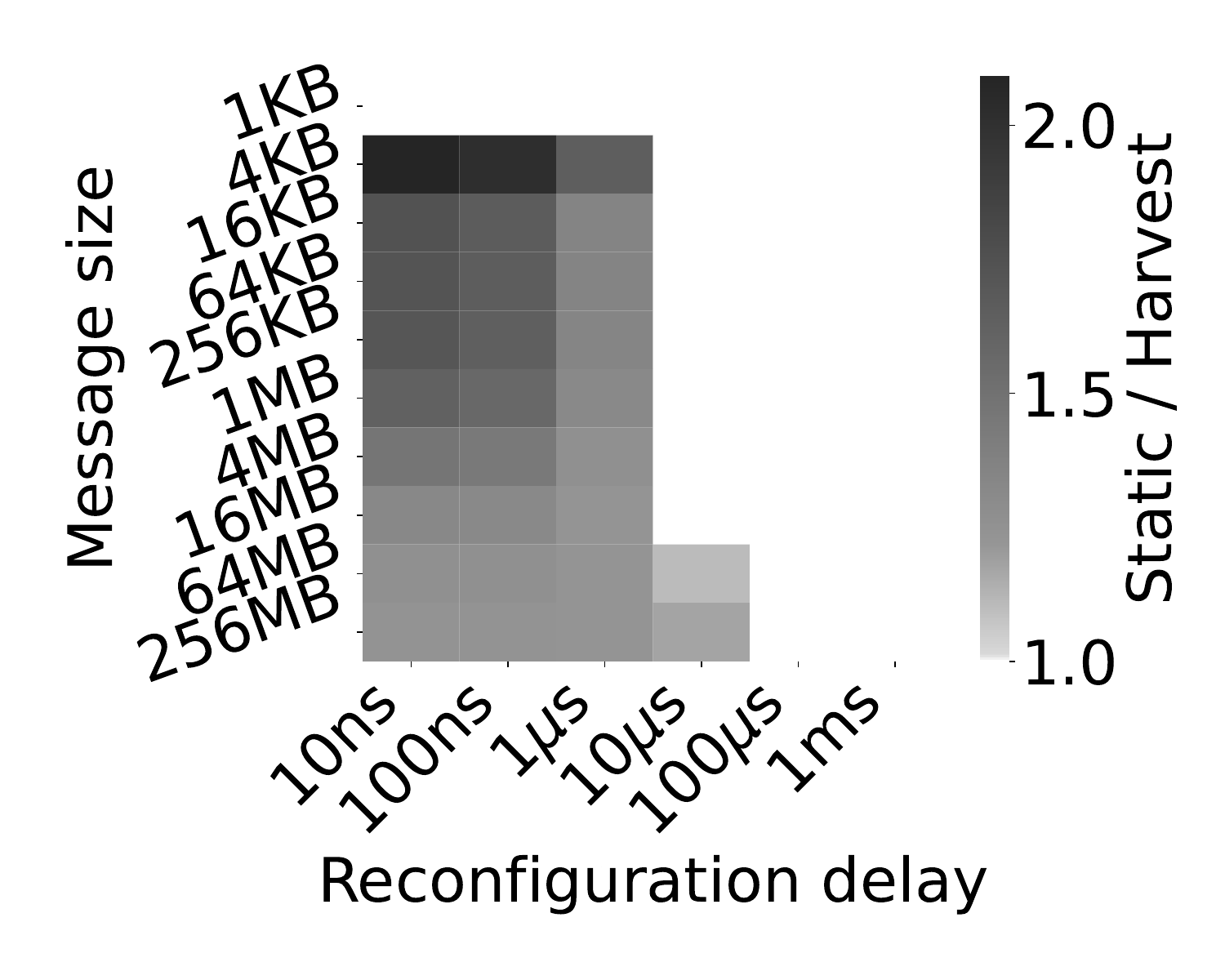}
         \caption{8x4x2}
         \label{fig:swing_800_row1}
         % \subcaption[]{}
     \end{subfigure}
     \hfill
     \begin{subfigure}[b]{0.24\textwidth}
         \centering
         \includegraphics[width=\textwidth]{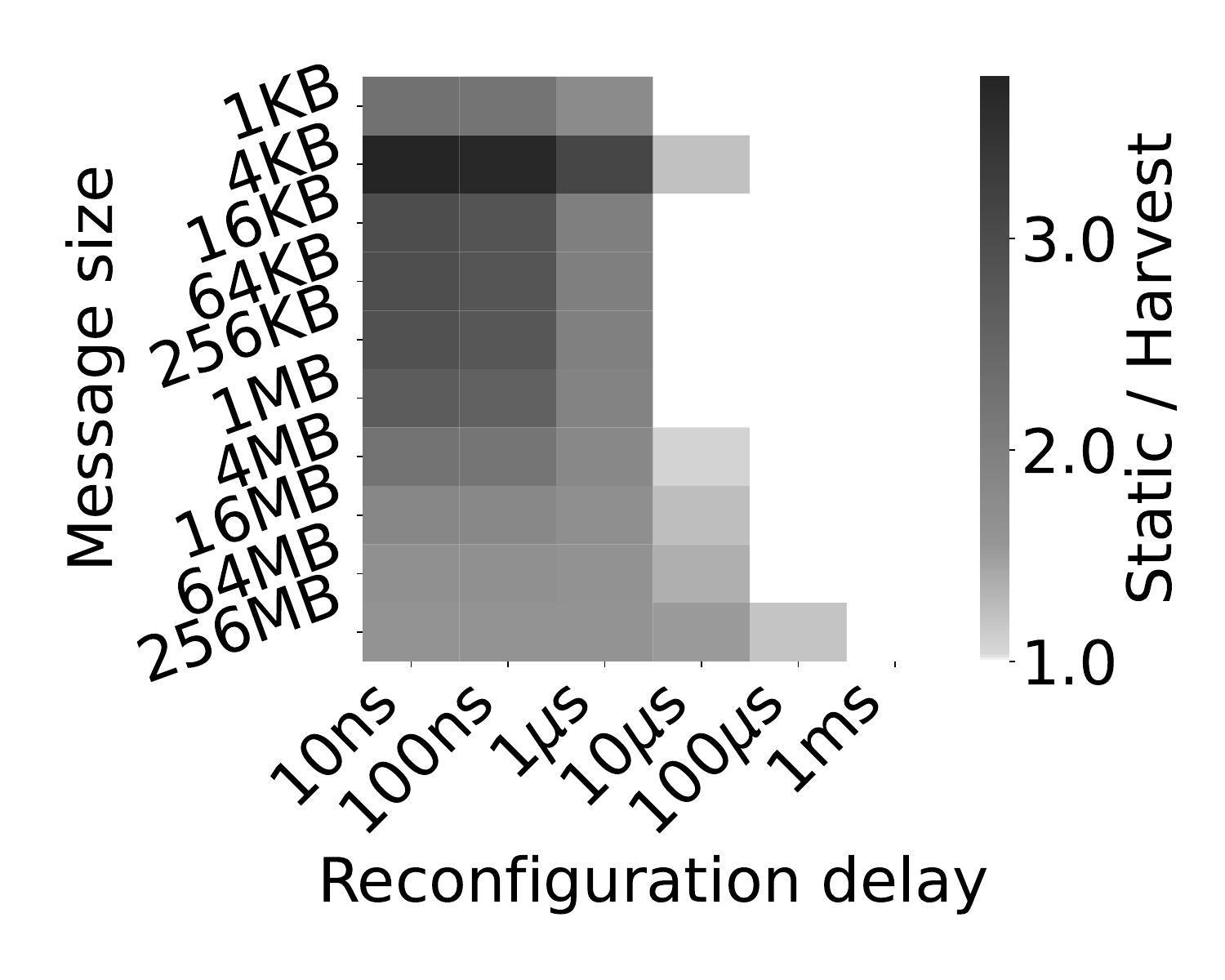}
         \caption{16x2x2}
         \label{fig:direct_800_row1}
         % \subcaption[]{}
     \end{subfigure}\hfill

     % --- Row 2 ---
     \begin{subfigure}[b]{0.24\textwidth}
         \centering
         \includegraphics[width=\textwidth]{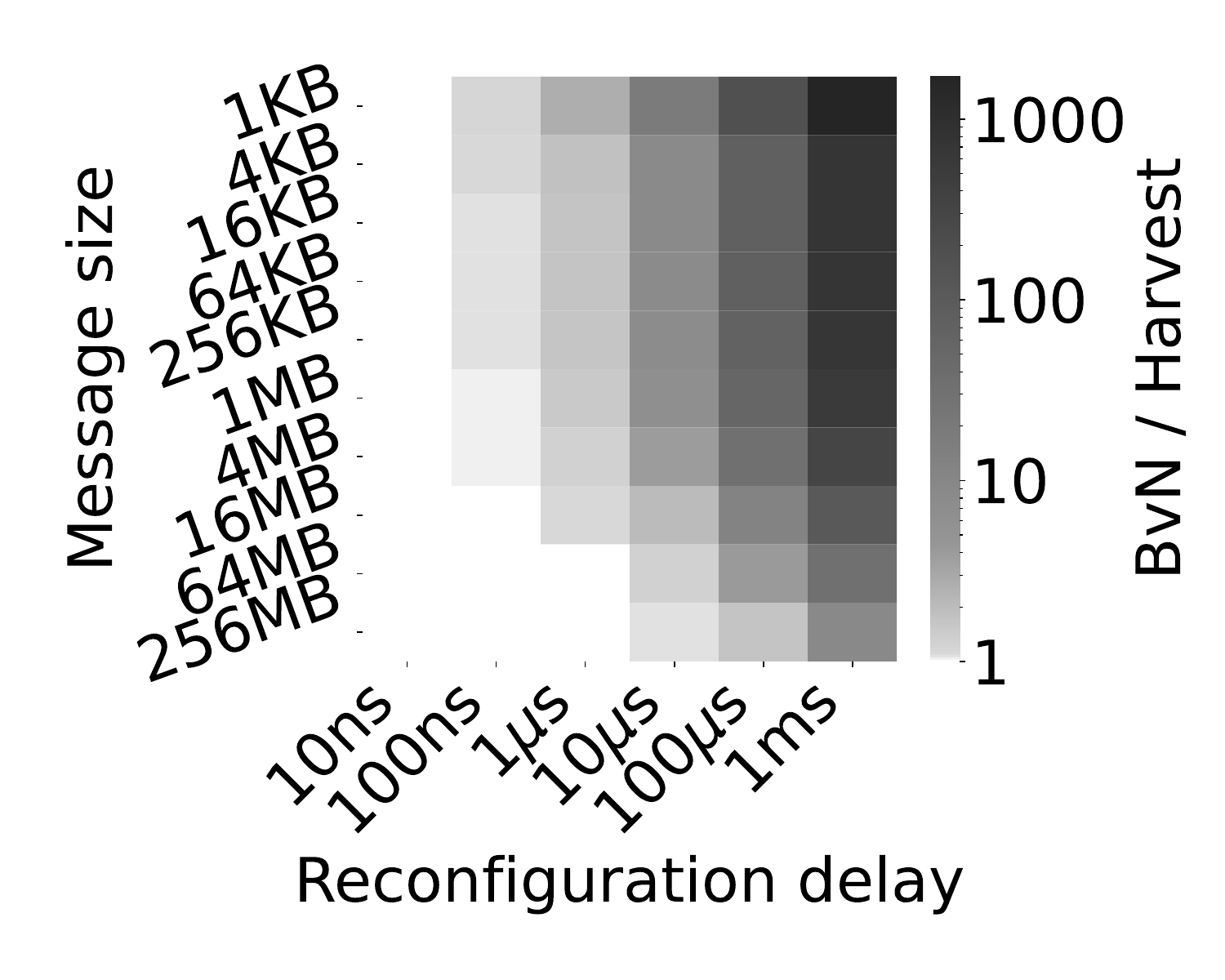}
         \caption{8x8}
         \label{fig:hd_800_row1}
     \end{subfigure}
     \hfill
     \begin{subfigure}[b]{0.24\textwidth}
         \centering
         \includegraphics[width=\textwidth]{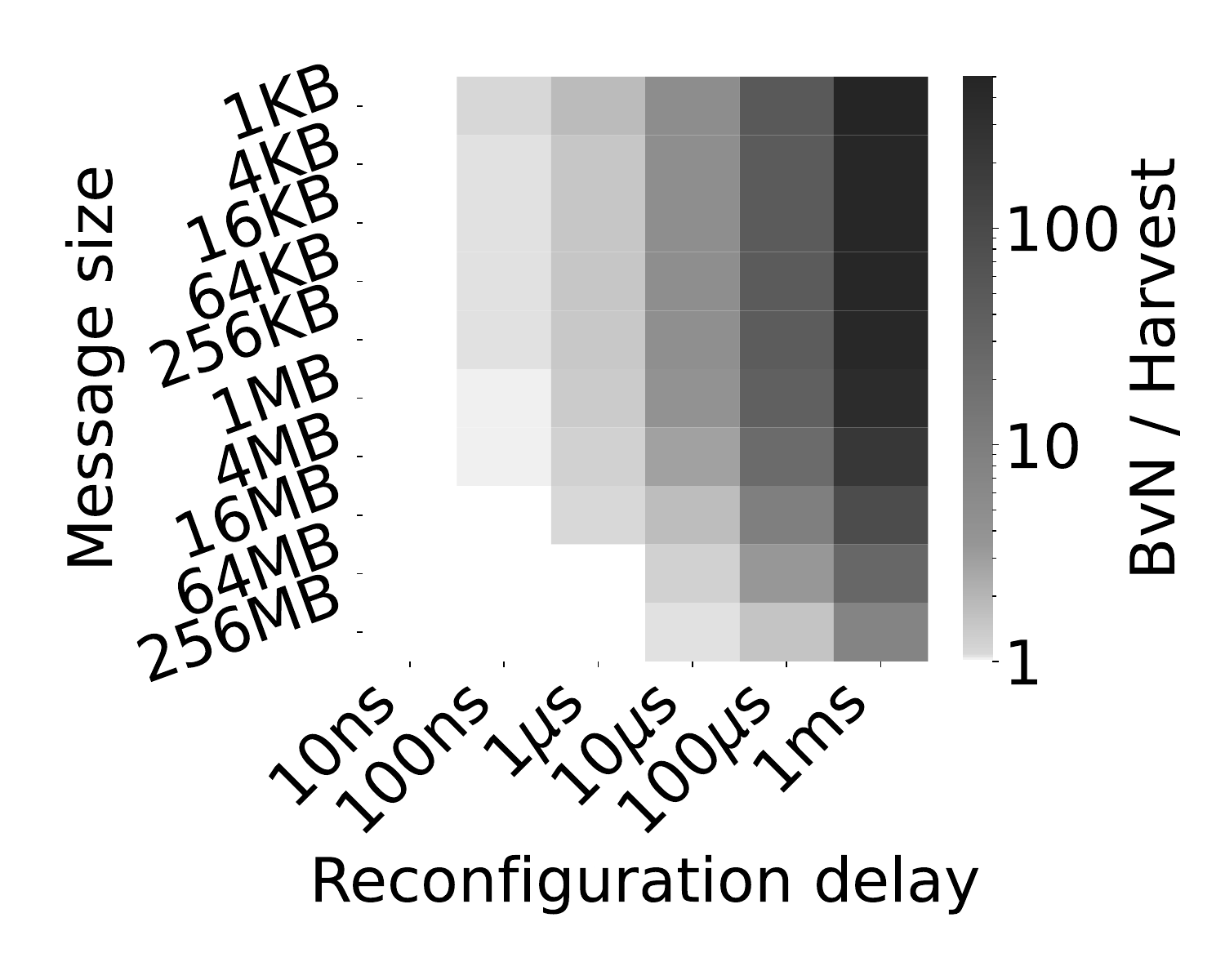}
         \caption{16x4}
         \label{fig:direct_best_row1}
         % \subcaption[]{}
     \end{subfigure}
     \hfill
     \begin{subfigure}[b]{0.24\textwidth}
         \centering
         \includegraphics[width=\textwidth]{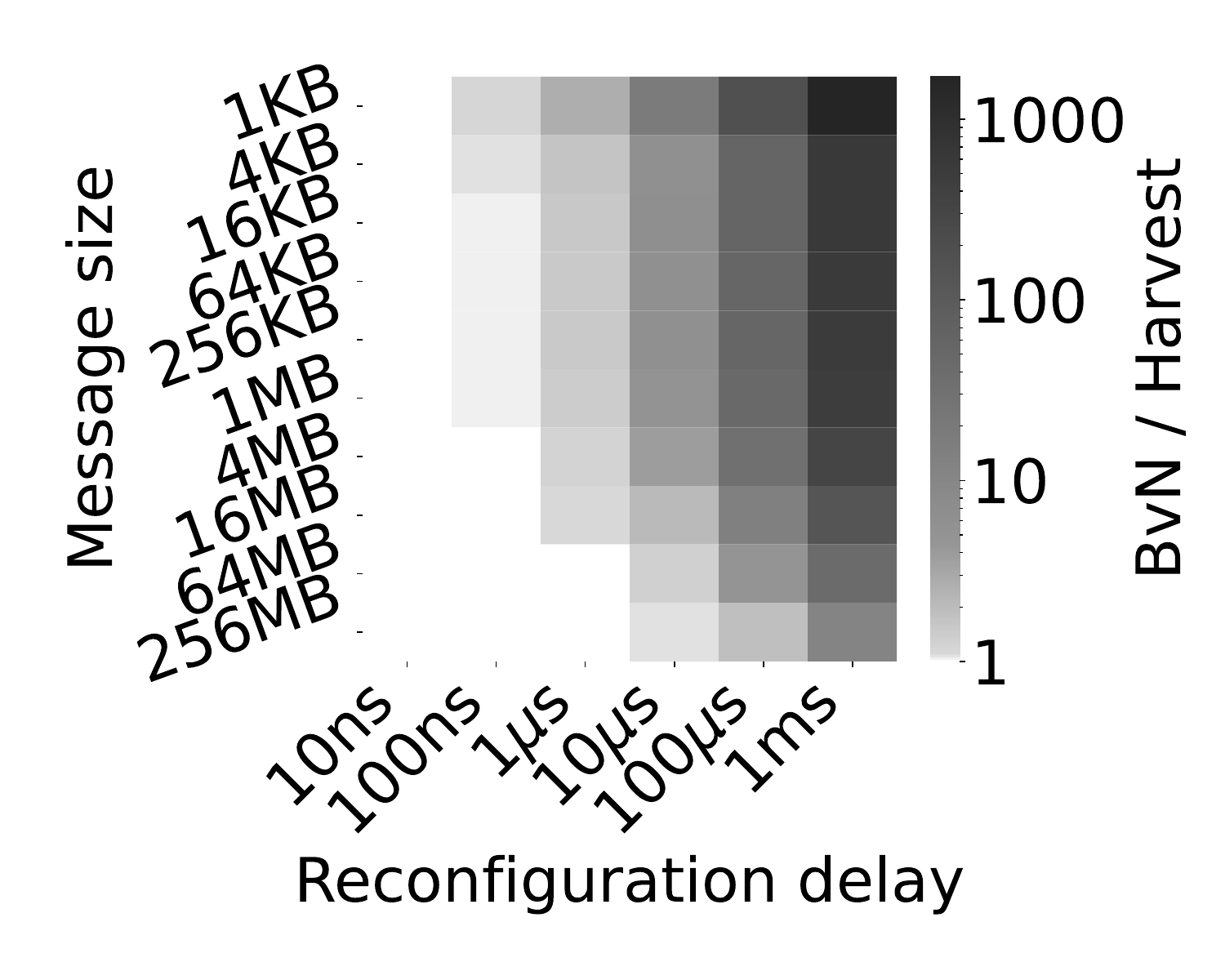}
         \caption{8x4x2}
         \label{fig:swing_800_row1}
         % \subcaption[]{}
     \end{subfigure}
     \hfill
     \begin{subfigure}[b]{0.24\textwidth}
         \centering
         \includegraphics[width=\textwidth]{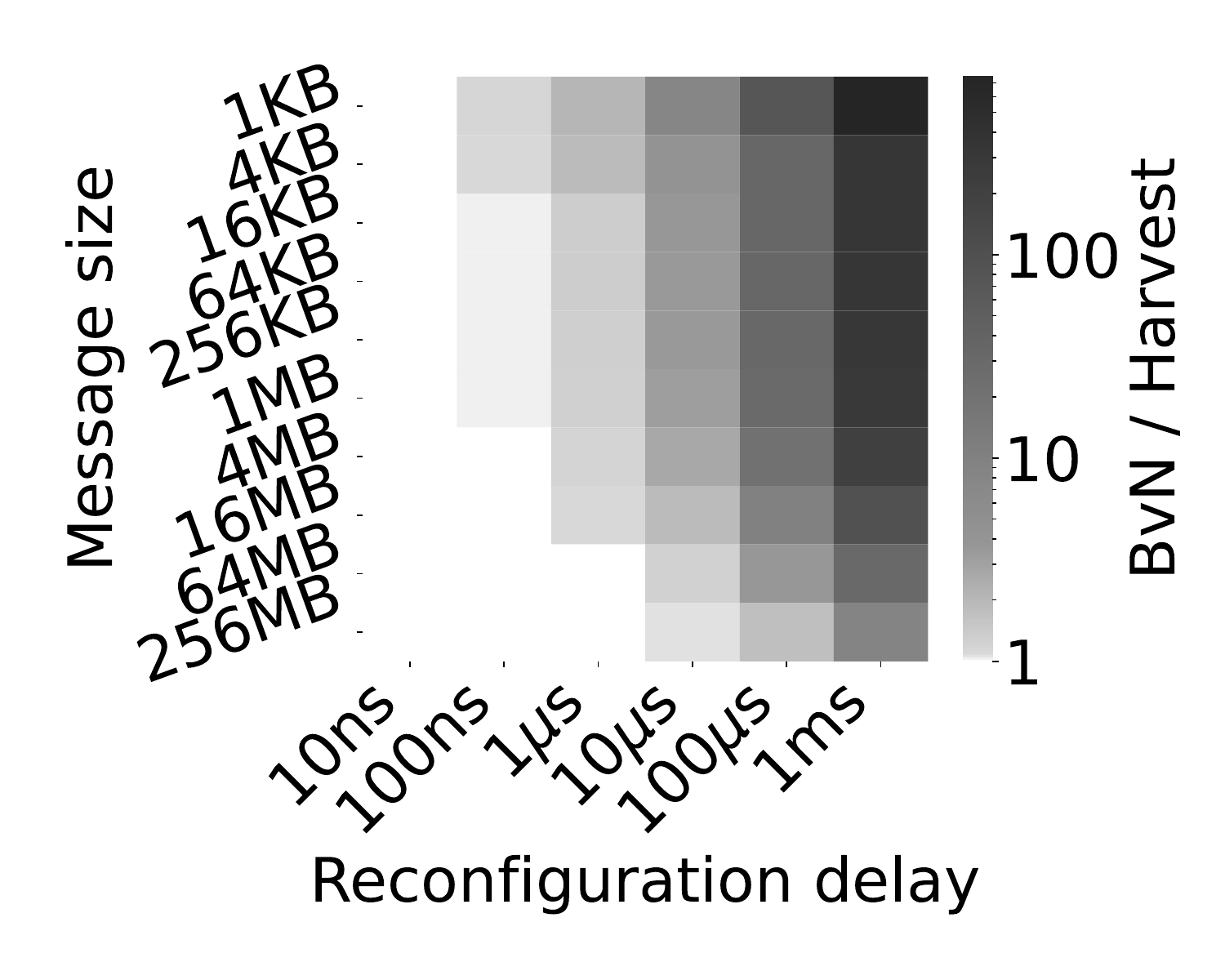}
         \caption{16x2x2}
         \label{fig:direct_800_row1}
         % \subcaption[]{}
     \end{subfigure}
      % --- Row 3 ---
     \begin{subfigure}[b]{0.24\textwidth}
         \centering
         \includegraphics[width=\textwidth]{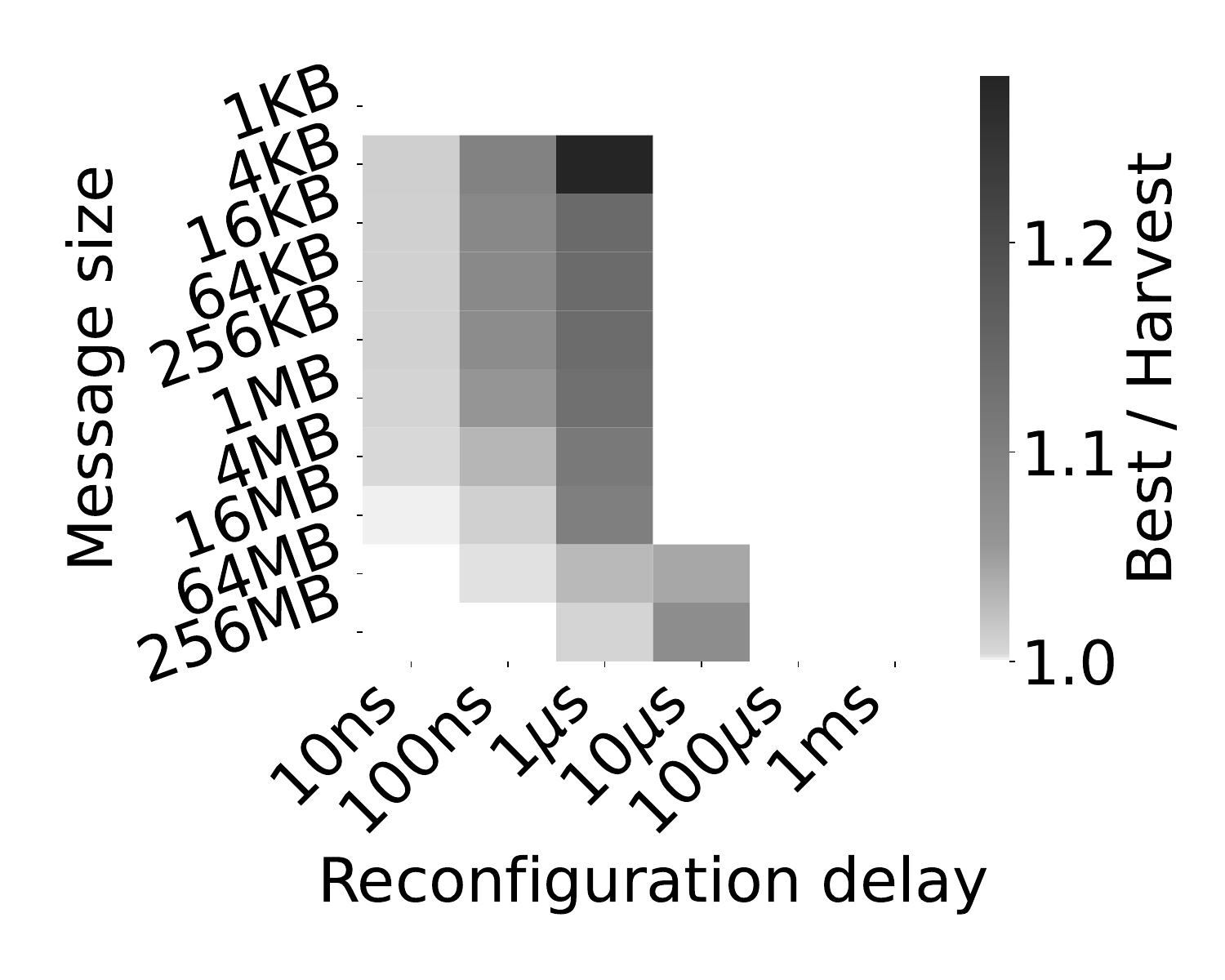}
         \caption{8x8}
         \label{fig:hd_800_row1}
     \end{subfigure}
     \hfill
     \begin{subfigure}[b]{0.24\textwidth}
         \centering
         \includegraphics[width=\textwidth]{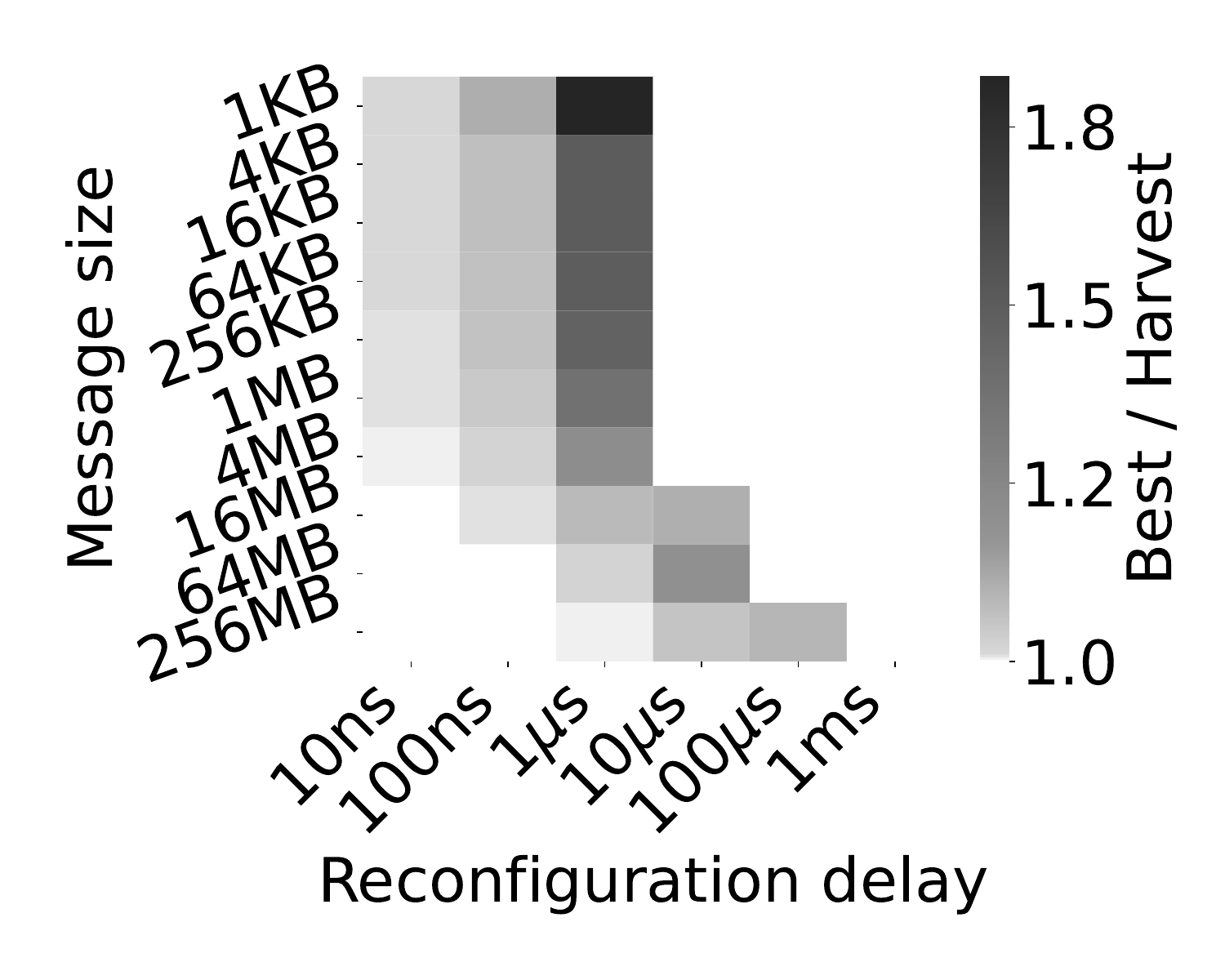}
         \caption{16x4}
         \label{fig:direct_best_row1}
         % \subcaption[]{}
     \end{subfigure}
     \hfill
     \begin{subfigure}[b]{0.24\textwidth}
         \centering
         \includegraphics[width=\textwidth]{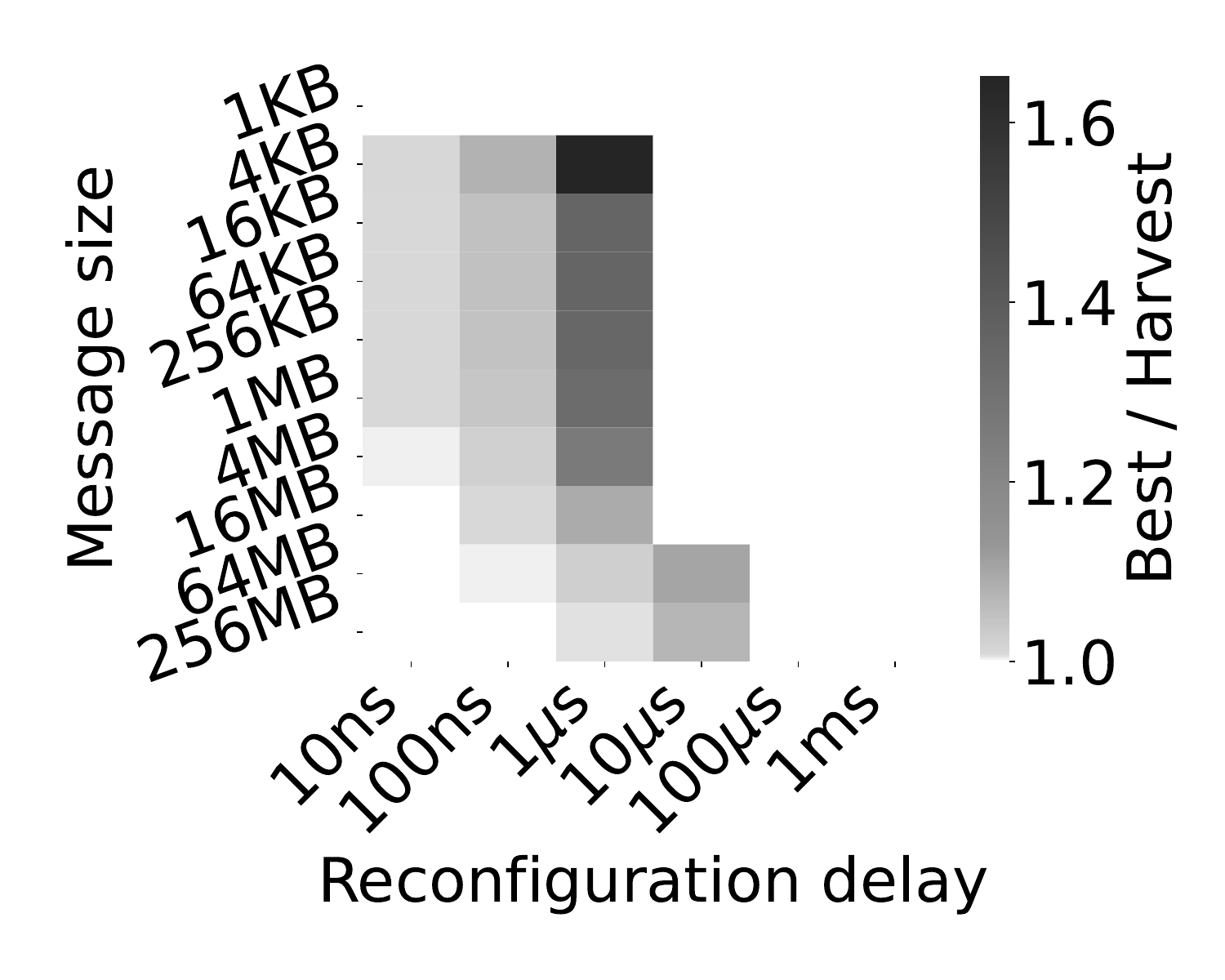}
         \caption{8x4x2}
         \label{fig:swing_800_row1}
         % \subcaption[]{}
     \end{subfigure}
     \hfill
     \begin{subfigure}[b]{0.24\textwidth}
         \centering
         \includegraphics[width=\textwidth]{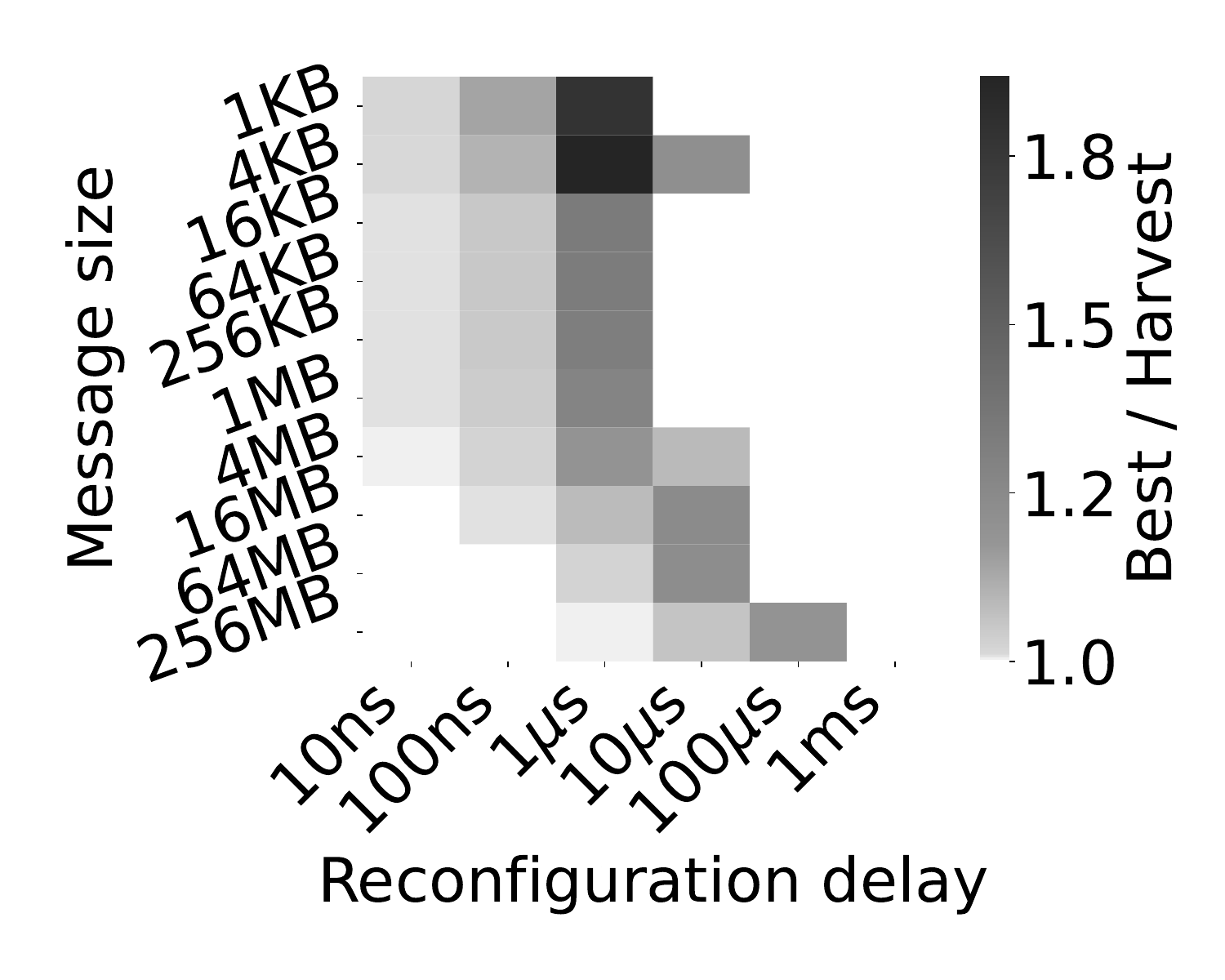}
         \caption{16x2x2}
         \label{fig:direct_800_row1}
         % \subcaption[]{}
     \end{subfigure}

     \caption{[Numerical optimization] Heatmaps showing the speedup in collective completion time achieved by \name relative to static Torus, BvN (reconfiguring at every step), and best among both, for Swing AllReduce for various Torus configurations. We use multi-port Swing, along with mirroring for all multi-dimensional topologies.}
     \label{fig:swing-multidim}
\end{figure*}

\begin{figure*}[t]
     \centering
     % --- Row 1 ---
     \begin{subfigure}[b]{0.24\textwidth}
         \centering
         \includegraphics[width=\textwidth]{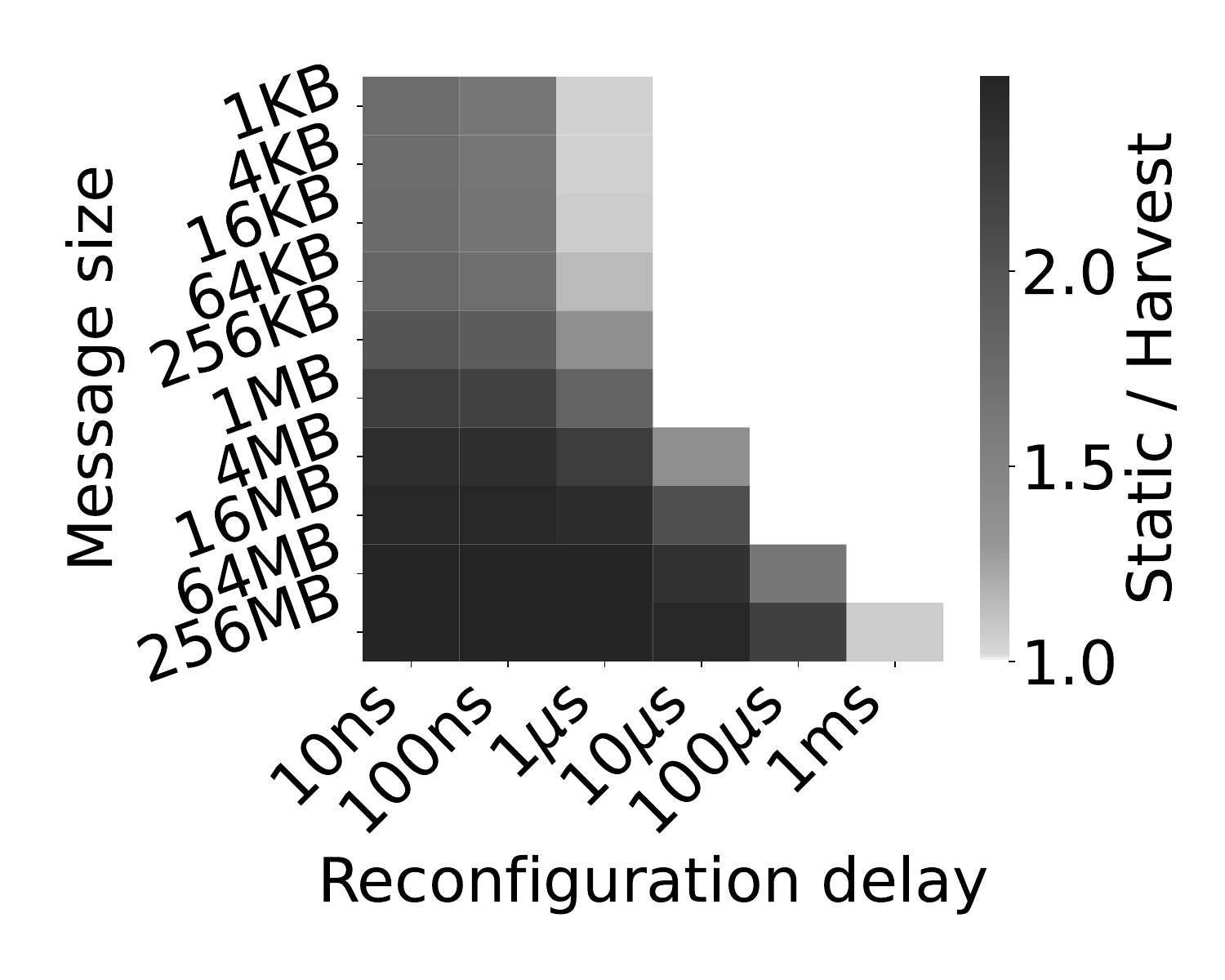}
         \caption{Bruck $r=4$ All-to-All}
         \label{fig:hd_800_row1}
     \end{subfigure}
     \hfill
     \begin{subfigure}[b]{0.24\textwidth}
         \centering
         \includegraphics[width=\textwidth]{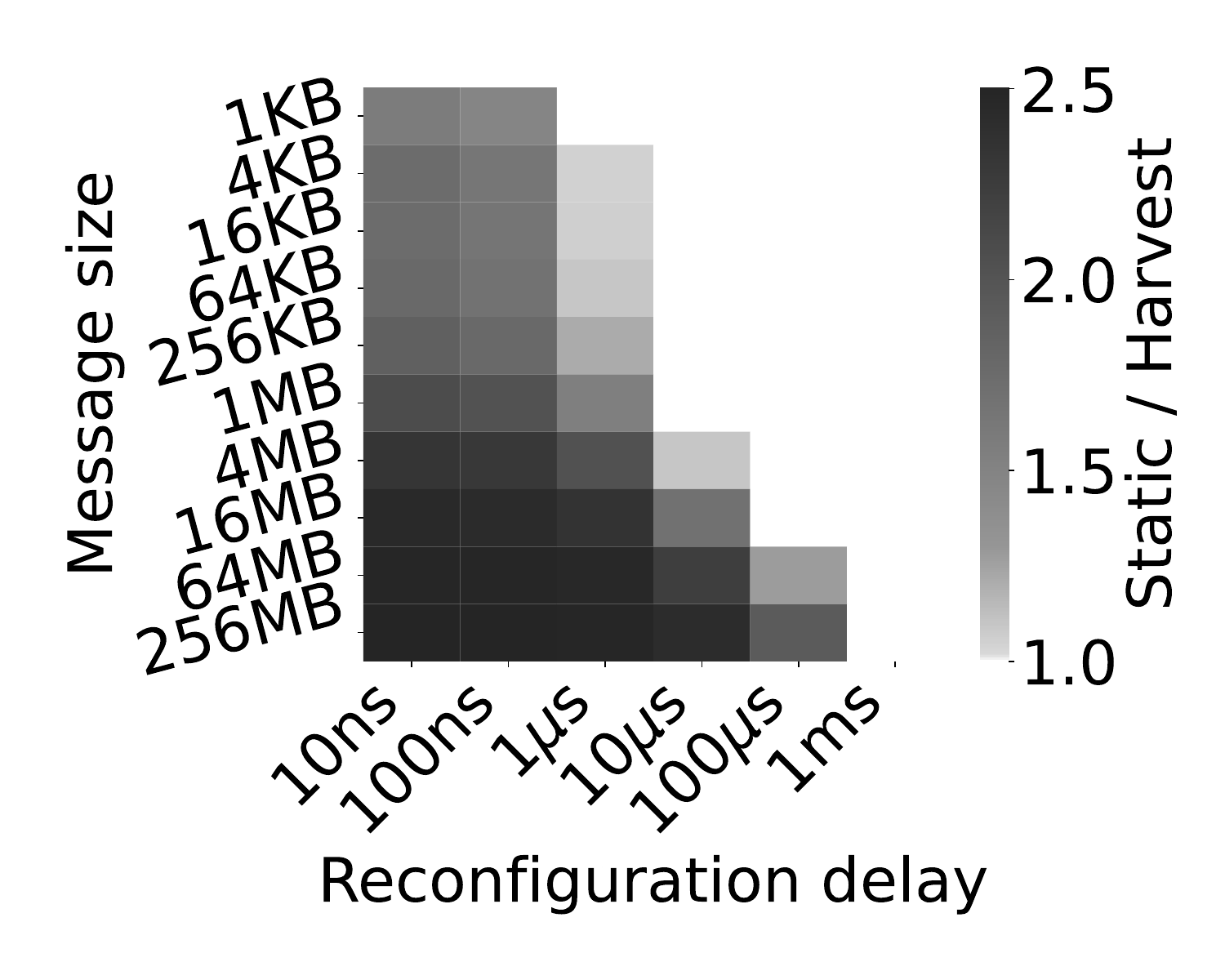}
         \caption{Bruck $r=4$ AllGather}
         \label{fig:swing_800_row1}
         % \subcaption[]{}
     \end{subfigure}
     \hfill
     \begin{subfigure}[b]{0.24\textwidth}
         \centering
         \includegraphics[width=\textwidth]{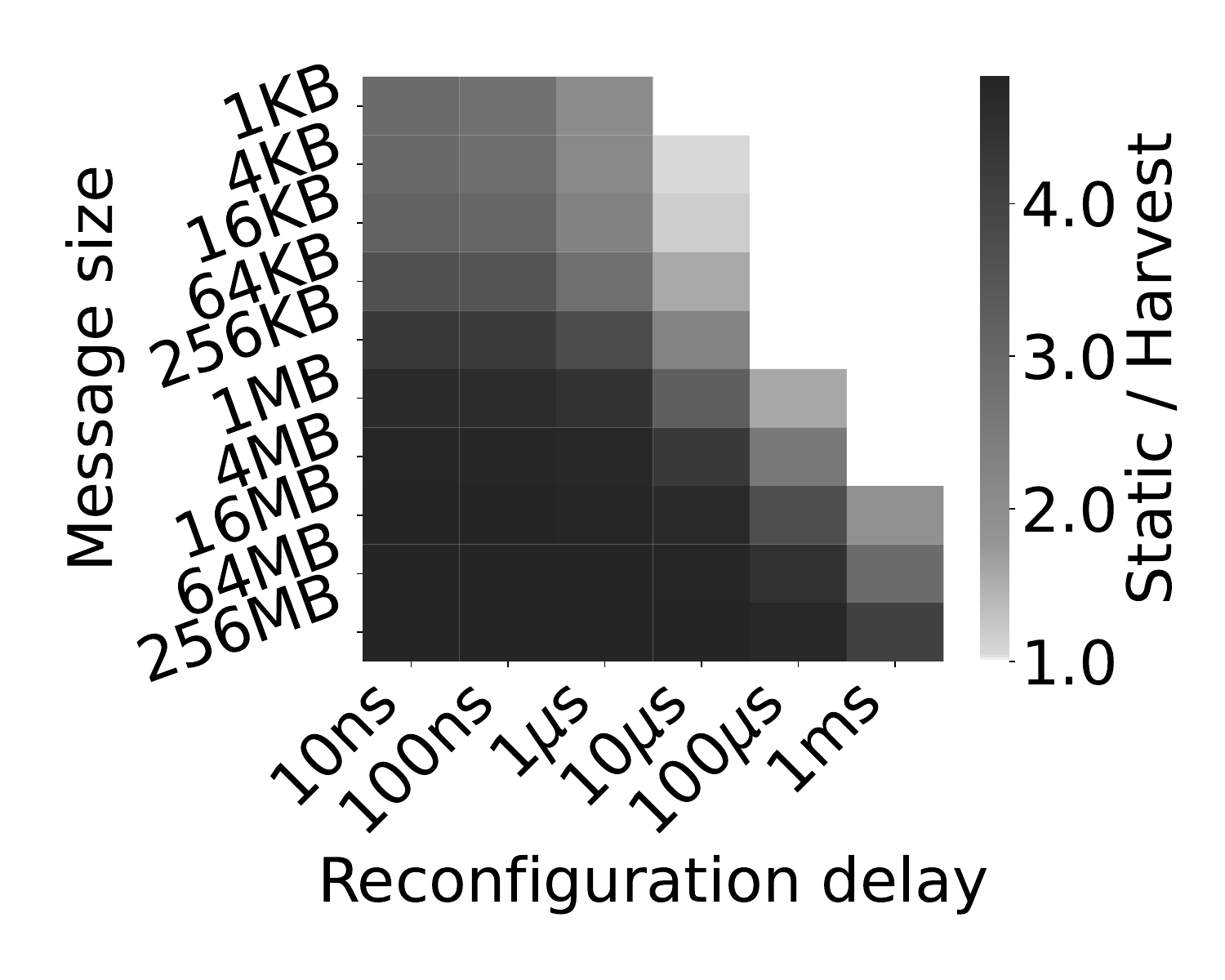}
         \caption{Binomial tree}
         \label{fig:direct_800_row1}
         % \subcaption[]{}
     \end{subfigure}\hfill
     \hfill
     \begin{subfigure}[b]{0.24\textwidth}
         \centering
         \includegraphics[width=\textwidth]{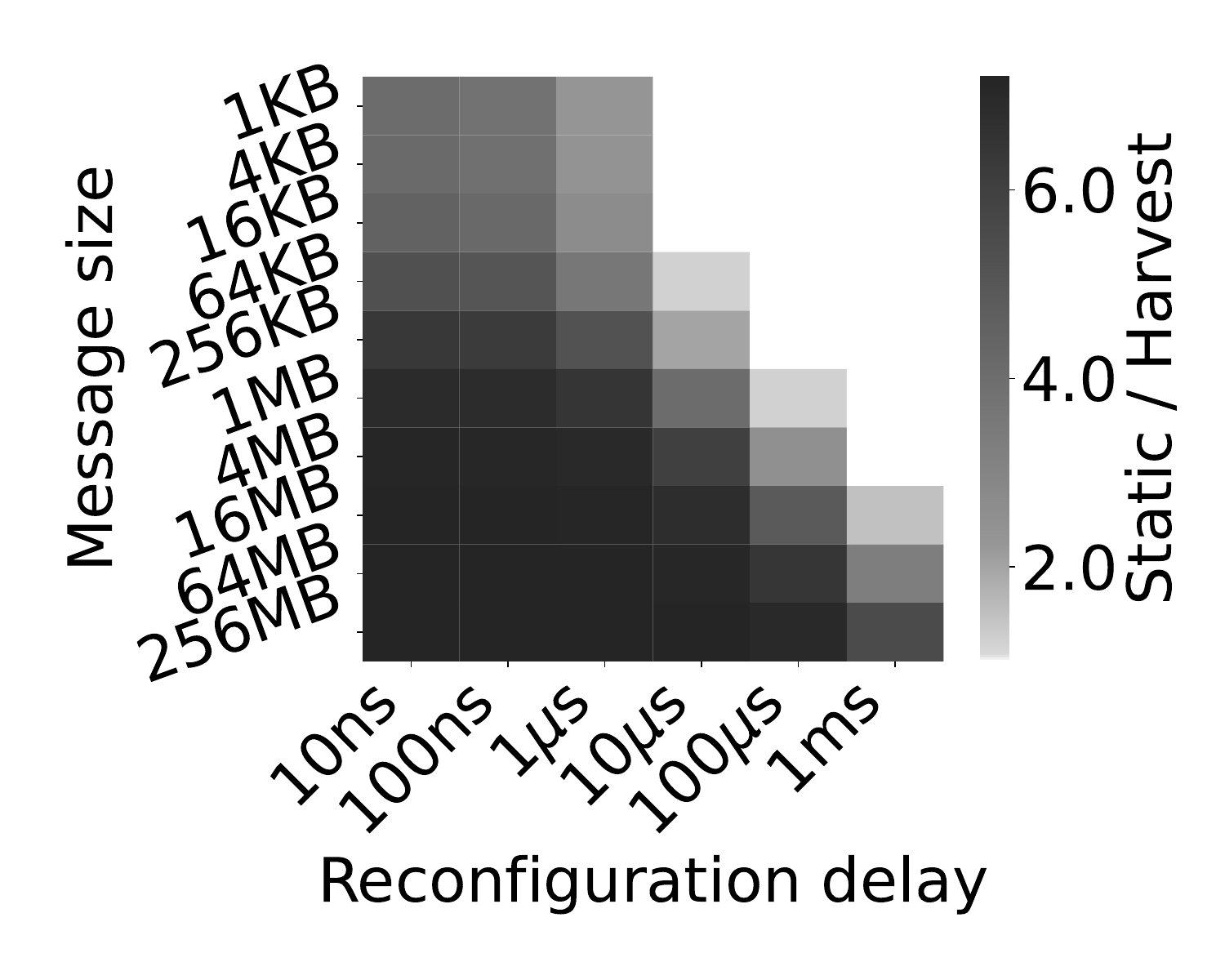}
         \caption{Binary tree}
         \label{fig:direct_best_row1}
         % \subcaption[]{}
     \end{subfigure}

     % --- Row 2 ---
     \begin{subfigure}[b]{0.24\textwidth}
         \centering
         \includegraphics[width=\textwidth]{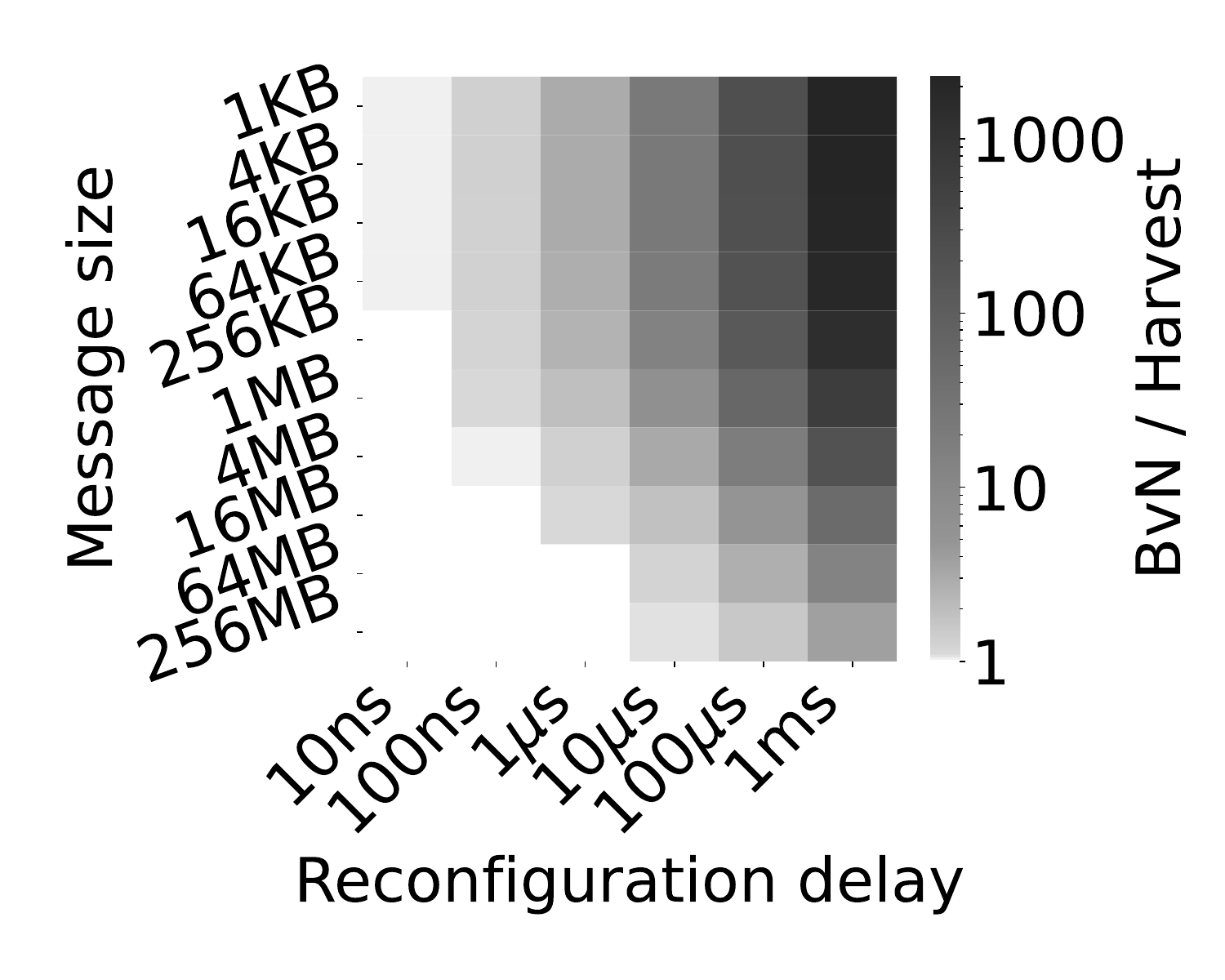}
         \caption{Bruck $r=4$ All-to-All}
         \label{fig:hd_800_row1}
     \end{subfigure}
     \hfill
     \begin{subfigure}[b]{0.24\textwidth}
         \centering
         \includegraphics[width=\textwidth]{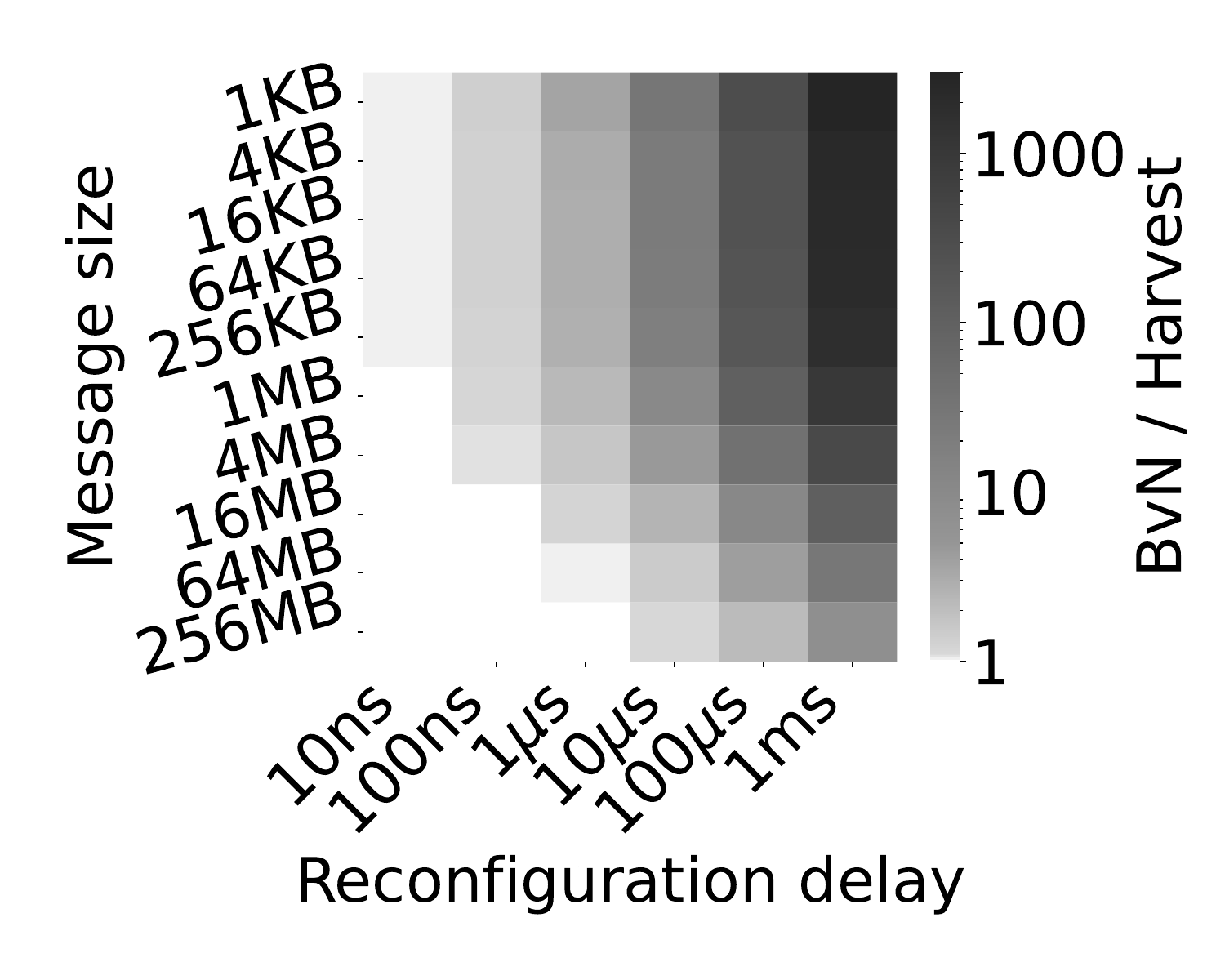}
         \caption{Bruck $r=4$ AllGather}
         \label{fig:swing_800_row1}
         % \subcaption[]{}
     \end{subfigure}
     \hfill
     \begin{subfigure}[b]{0.24\textwidth}
         \centering
         \includegraphics[width=\textwidth]{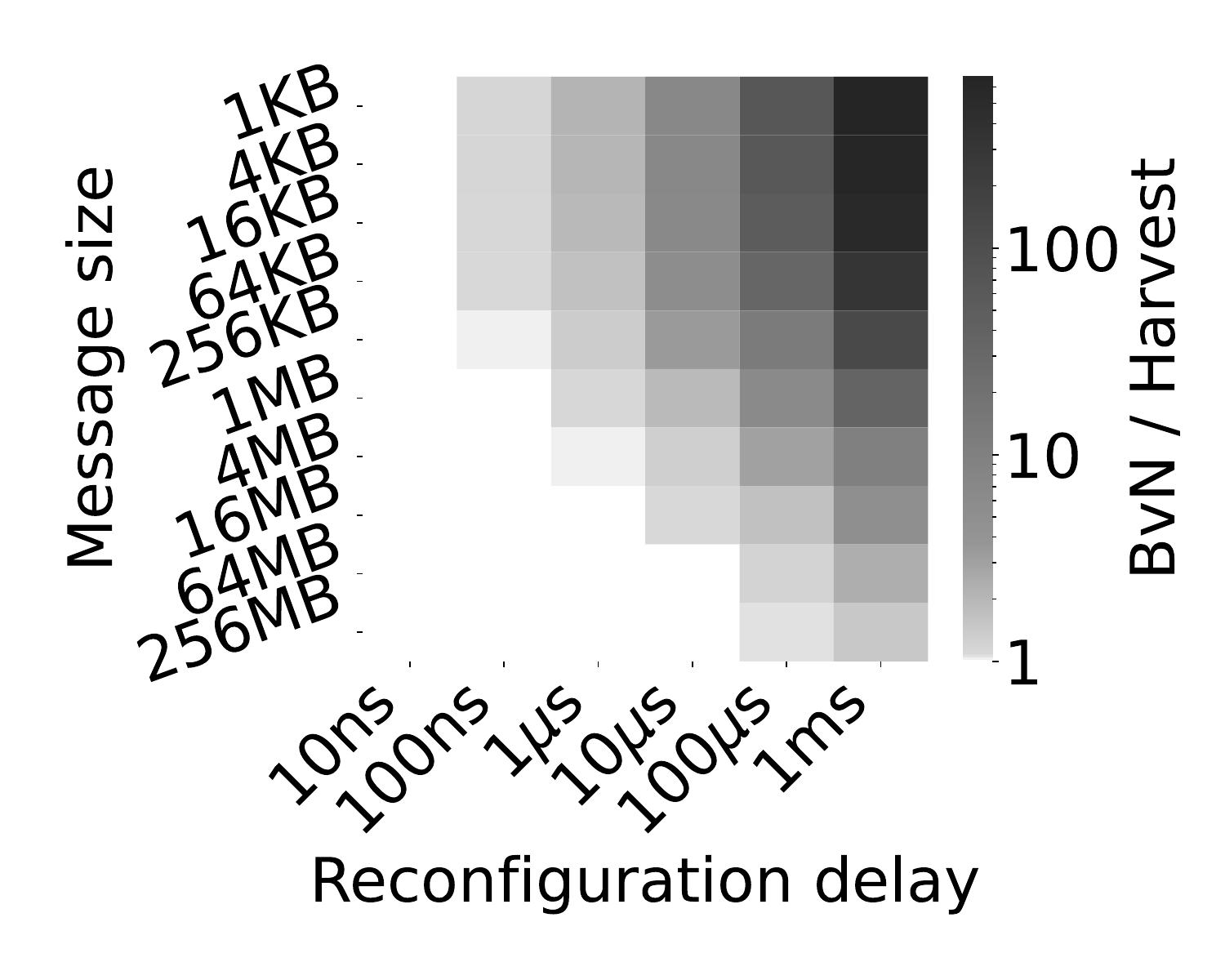}
         \caption{Binomial tree}
         \label{fig:direct_800_row1}
         % \subcaption[]{}
     \end{subfigure}\hfill
     \hfill
     \begin{subfigure}[b]{0.24\textwidth}
         \centering
         \includegraphics[width=\textwidth]{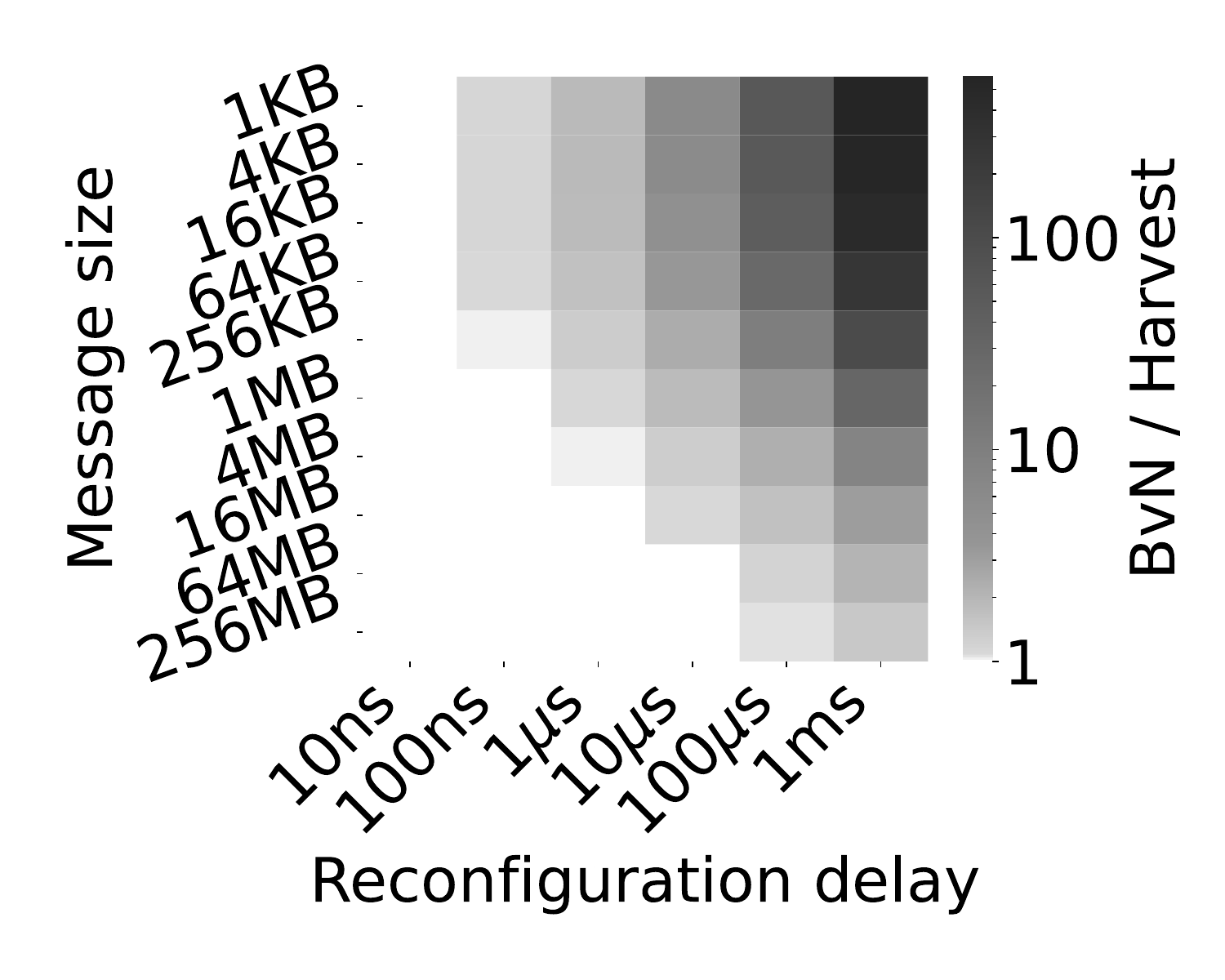}
         \caption{Binary tree}
         \label{fig:direct_best_row1}
         % \subcaption[]{}
     \end{subfigure}
      % --- Row 3 ---
     \begin{subfigure}[b]{0.24\textwidth}
         \centering
         \includegraphics[width=\textwidth]{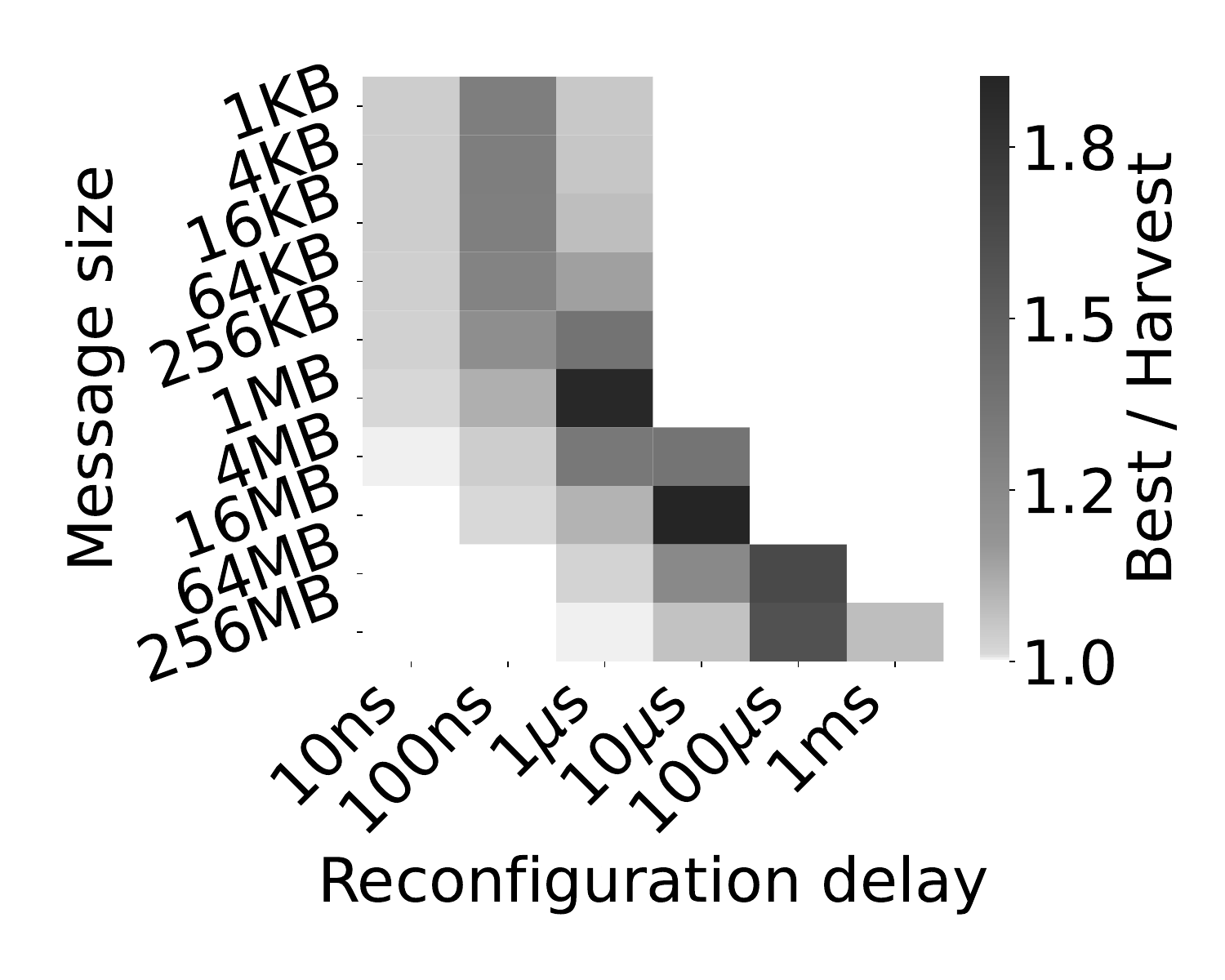}
         \caption{Bruck $r=4$ All-to-All}
         \label{fig:hd_800_row1}
     \end{subfigure}
     \hfill
     \begin{subfigure}[b]{0.24\textwidth}
         \centering
         \includegraphics[width=\textwidth]{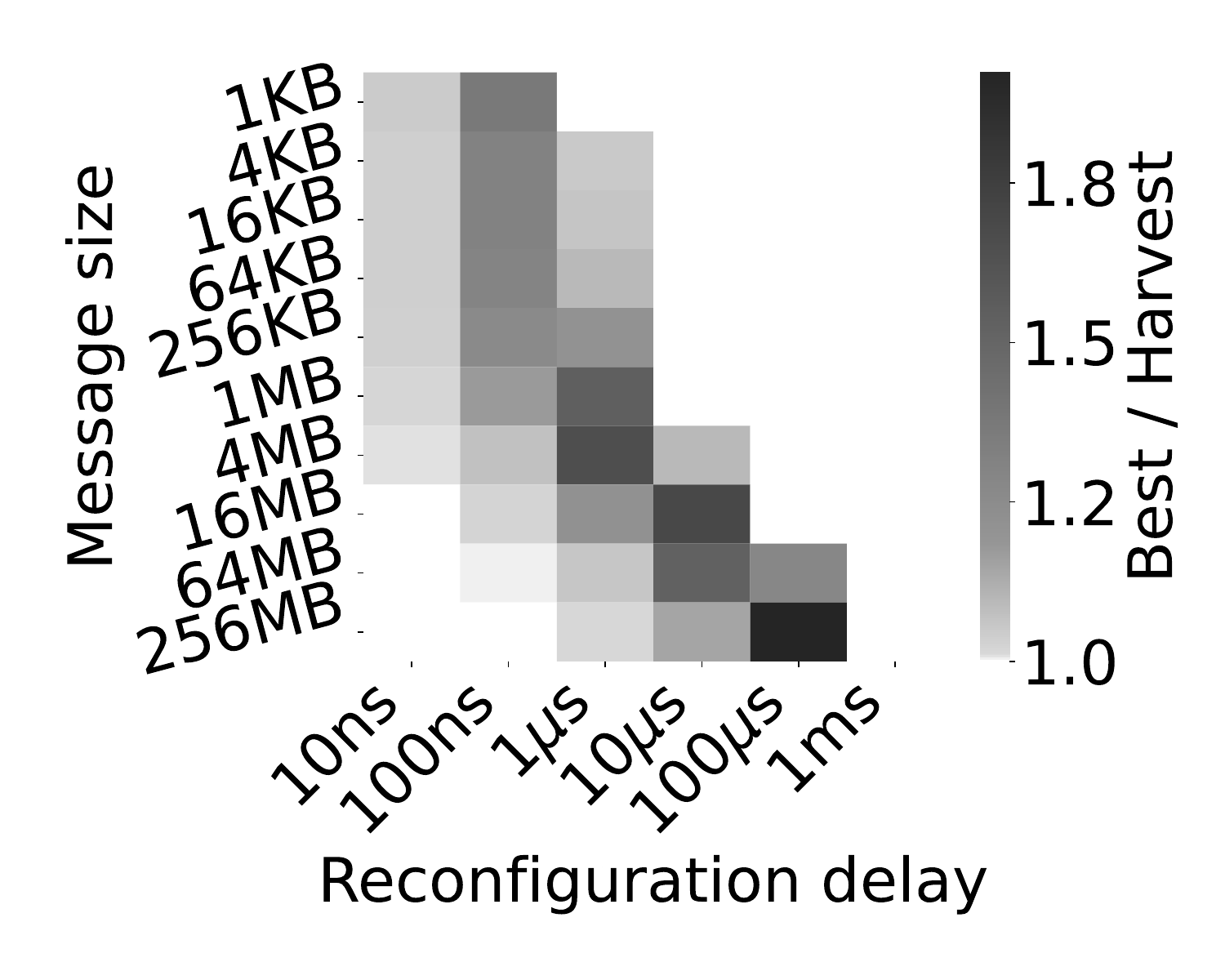}
         \caption{Bruck $r=4$ AllGather}
         \label{fig:swing_800_row1}
         % \subcaption[]{}
     \end{subfigure}
     \hfill
     \begin{subfigure}[b]{0.24\textwidth}
         \centering
         \includegraphics[width=\textwidth]{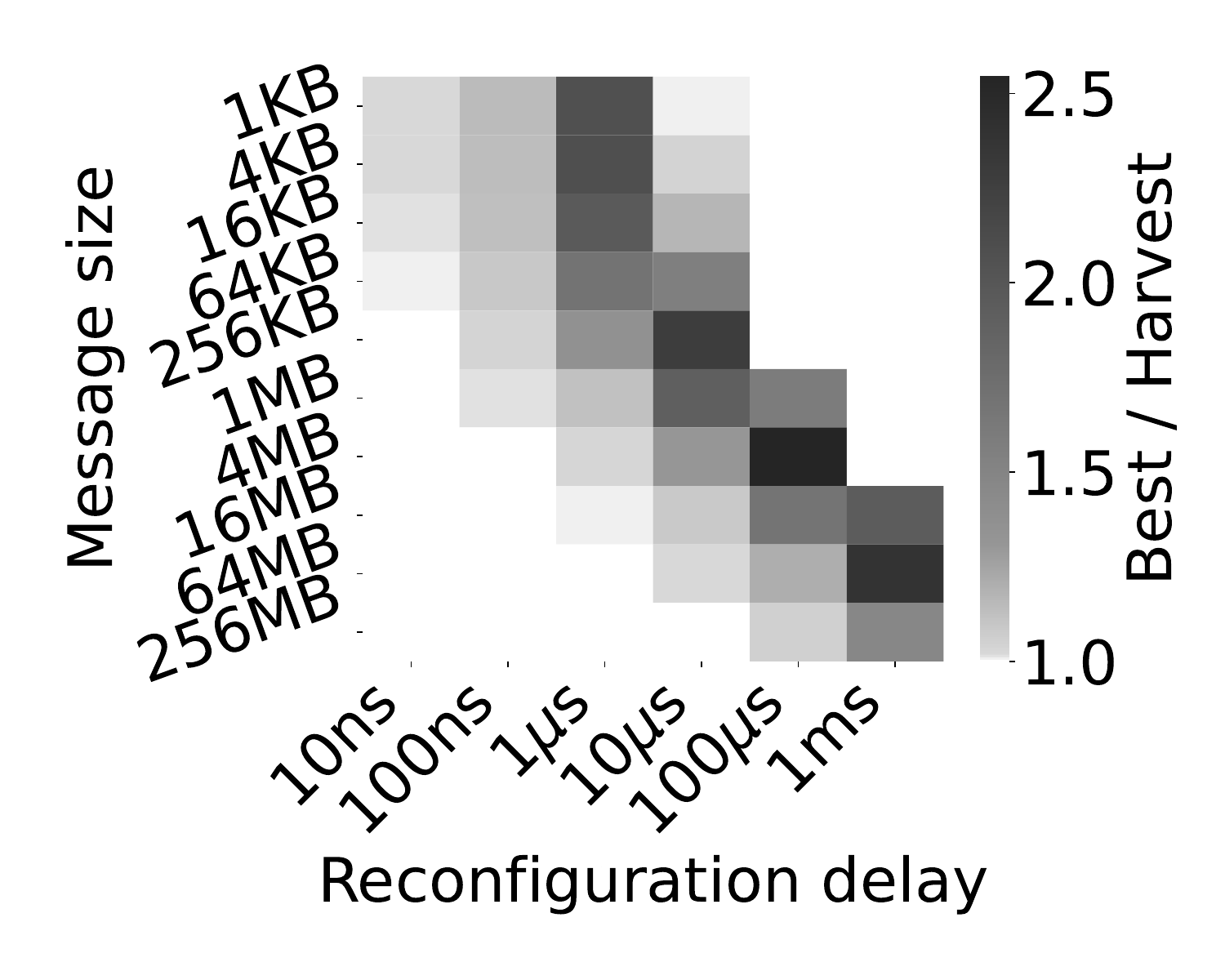}
         \caption{Binomial tree}
         \label{fig:direct_800_row1}
         % \subcaption[]{}
     \end{subfigure}\hfill
     \hfill
     \begin{subfigure}[b]{0.24\textwidth}
         \centering
         \includegraphics[width=\textwidth]{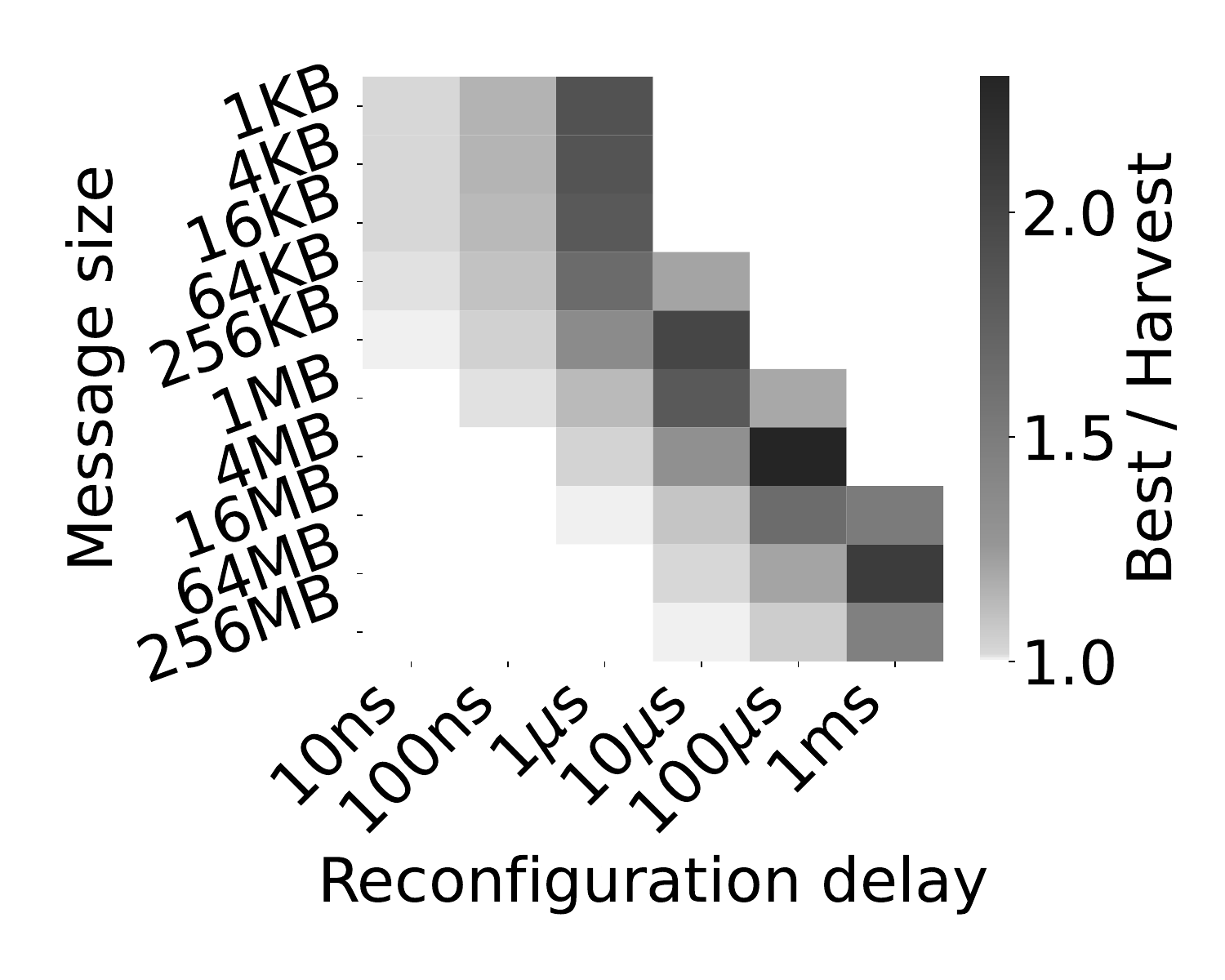}
         \caption{Binary tree}
         \label{fig:direct_best_row1}
         % \subcaption[]{}
     \end{subfigure}

	\caption{[Numerical optimization] Heatmaps showing the speedup in collective completion time achieved by \name relative to static topologies, BvN schedules (reconfiguring at every step), and the best of the two; for Bruck's algorithm for All-to-All and AllGather, as well as binomial tree and binary tree algorithms for broadcast. For Bruck's algorithm with $r=4$, each node performs four send/receive operations per step.}
     \label{fig:bruck-broadcast}
\end{figure*}

\subsection{Results}

\myitem{When does \name outperform static topologies?}

We observe \name consistently outperforms static topologies when reconfiguration delay is low (Figures~\ref{fig:sim-hd-static},~\ref{fig:sim-swing-static}, and~\ref{fig:sim-direct-static}). We use packet-level simulations and a one-dimensional topology.
%Figures~\ref{fig:sim-hd-static},~\ref{fig:sim-swing-static}, and~\ref{fig:sim-direct-static} present packet-level simulation results on a one-dimensional topology across multiple collective algorithms. We observe that \name consistently outperforms static topologies when the reconfiguration delay is low. 
\name speeds up Recursive Doubling by $6.4\times$, Swing by $4.7\times$, and All-to-All by $20\times$ for small message ($1-256$KB) and reconfiguration delays $< 1\mu$s. This is because \name reconfigures the topology to shorten the long-distance steps which in turn reduces both congestion and propagation delays (these dominate small transfers).
%For small message sizes between $1$\,KB and $256$\,KB, and reconfiguration delays below $1\,\mu$s, \name achieves average speedups of $6.4\times$ over Recursive Doubling, $4.7\times$ over Swing, and $20\times$ over direct all-to-all. 
%This improvement arises because \name reconfigures the topology to shorten long-distance communication steps, thereby reducing both congestion and propagation delays that dominate for smaller message sizes.

As reconfiguration delays increase, the benefits of reconfiguration diminish for smaller message sizes, where \name naturally falls back to static schedules. In contrast, for larger messages the gains from reconfiguration persist even at higher delays. For example, with a $1$GB message size, \name achieves up to $3.0\times$ speedup over Recursive Doubling, $3.1\times$ over Swing, and up to $30\times$ over direct All-to-All, even with a $10\mu$s reconfiguration delay. These results show that \name effectively balances reconfiguration overheads against congestion costs, selectively reconfiguring only when the performance benefits outweigh the overhead.

\myitem{When does \name outperform BvN-based schedules?}

Even though it is useful to reconfigure the topology, if we do so in each step we may inflate the completion time (when reconfiguration delays are non-negligible).~\name consistently outperforms BvN-based schedules at higher reconfiguration delays (Figures~\ref{fig:sim-hd-bvn},~\ref{fig:sim-swing-bvn}, and~\ref{fig:sim-direct-bvn}).

The performance gap is most pronounced for small message sizes, where \name strategically limits the number of reconfigurations to reduce overhead. For instance, with a $100\mu$s reconfiguration delay and message sizes between $1$KB and $256$KB, \name achieves on average around $7.3\times$ speedup over BvN schedules for Recursive Doubling, $10\times$ for Swing, and $5.3\times$ for personalized All-to-All. As message sizes increase, the benefits of reconfiguration become more pronounced, and \name adapts toward BvN-like schedules. Notably, even at moderate reconfiguration delays between $10\mu$s and $100\mu$s, \name outperforms BvN schedules by up to $3.0\times$ for $4$MB messages with Recursive Doubling, and up to $4.8\times$ with personalized All-to-All.

Overall, these results highlight the importance of carefully balancing the benefits of reconfiguration against its costs, rather than reconfiguring indiscriminately at every step.

\myitem{When does \name outperform other schedules?}

The natural trade-off between reconfiguration delay and congestion means there is an intermediate regime in which \name outperforms both static and BvN schedules.~\name achieves the best performance in a transitional regime where it selectively reconfigures only a subset of communication steps  (Figures~\ref{fig:sim-direct-best} and~\ref{fig:sim-hd-best}). For example, in personalized All-to-All (Figure~\ref{fig:sim-direct-best}), \name outperforms both static and BvN schedules for message sizes up to $16$MB when reconfiguration delays is in $[10, 100]${$\mu$}s. For larger reconfiguration delays, this regime moves towards larger messages. We observe a similar trend for Recursive Doubling. These results highlight the optimal strategy is neither to avoid reconfiguration entirely nor to reconfigure at every step, but to carefully choose a subset of reconfigurations that balances their benefits against their costs.

% Can we observe the performance trends we find through our synthesis framework match hardware?
\myitem{Does hardware reflect the performance trends revealed by our synthesis framework and cost model?}

We validate the performance trends our synthesis framework predicts with hardware emulations. The results closely match those we saw in simulations (Figures~\ref{fig:emu-bvn},~\ref{fig:emu-static}, and~\ref{fig:emu-best}).  Figure~\ref{fig:emu-bvn} reports the completion-time ratio between BvN schedules and \name for Recursive Doubling~---~it confirms~\name outperforms BvN schedules when reconfiguration delays are high.~\name also outperforms a static ring (Figure~\ref{fig:emu-static}) topology across a wide range of the space ($3\times$ speedup). We also show there  exists a transitional regime in which~\name outperforms both (Figure~\ref{fig:emu-best}).

We parameterize our cost model based on measurements from the testbed to further validate our results. We first estimate the $\alpha$--$\beta$ parameters: we measure the single-hop GPU-to-GPU communication as a function of message size (Figure~\ref{fig:alpha-beta-model}). We find $\alpha = 30.32\mu$s and $\beta = 85.11$Gbps. We then run our synthesizer to compute the completion time it estimates across message sizes and reconfiguration delays. We find the transitional regime and performance trends; as well as the speedup values closely match hardware (Figure~\ref{fig:emu-best}).

% Is \name multi-port, multi-dimension compatible?
\myitem{Is \name multi-port, multi-dimension compatible?}

We compare Swing ($2$-D and $3$-D torus topologies) with static, BvN and \name schedules (Figure~\ref{fig:swing-multidim}). We consider a multidimensional extension of Swing with mirroring, which exploits all available ports simultaneously to maximize link utilization. Numerical evaluations using our synthesizer reveal trends that are consistent with those observed in one-dimensional simulations and hardware emulation.
Across a range of $2$-D and $3$-D torus configurations with $64$ GPUs, \name consistently outperforms all schedules. The regimes in which \name provides the largest gains are dictated by the trade-off between reconfiguration delay and congestion overhead as a function of message size. We observe similar trends for other multi-port collectives, including Bruck’s index and concatenation algorithms, as well as Binomial Tree and Binary Tree broadcast (Figure~\ref{fig:bruck-broadcast}).
All-to-All communication in multi-port settings is different (Figure~\ref{fig:all2all-3d}): the performance advantage over static topologies diminishes as the number of ports increases. This is expected, since All-to-All communication benefits from low average shortest-path lengths, which decrease significantly as the topology degree increases and the network diameter shrinks, reducing the need for reconfiguration. 

\begin{figure}[h]
\centering
\includegraphics[width=0.7\linewidth]{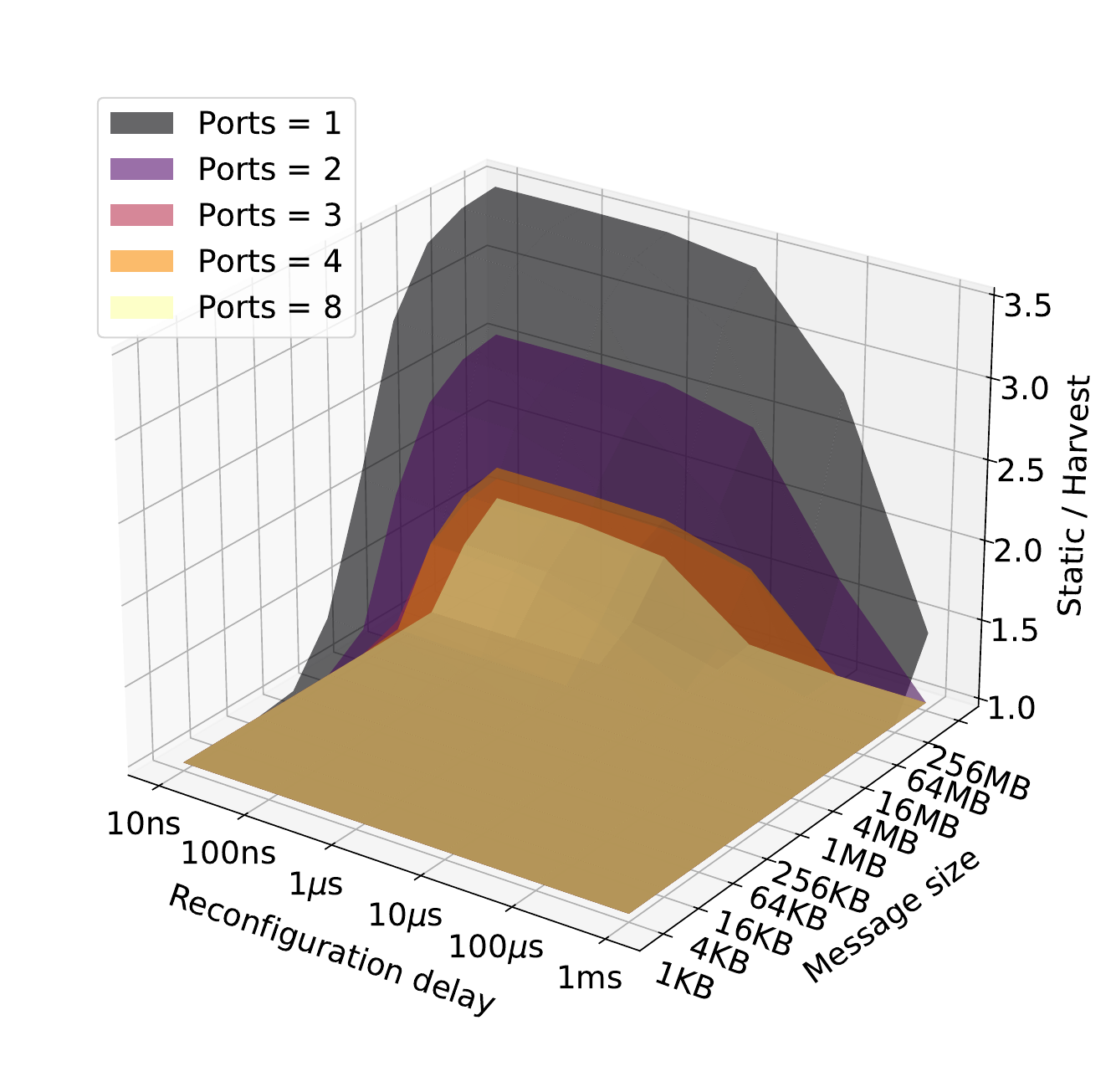}
\caption{[Numerical optimization] All-to-All with Harvest compared to static low-diameter topology with optimal flow schedule~\cite{305352,642949}.}
\label{fig:all2all-3d}
\end{figure}

\begin{figure}[t]
    \centering
    \includegraphics[width=0.8\linewidth]{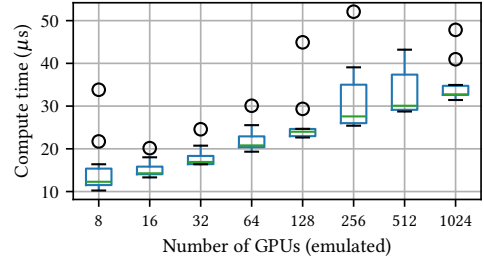}
    \caption{[Hardware emulation] DP Compute time}
    \label{fig:dp-compute}
\end{figure}

\begin{figure}[t]
    \centering
    \includegraphics[width=0.8\linewidth]{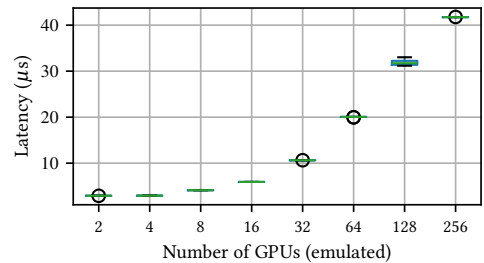}
    \caption{[Hardware emulation] Synchronization time}
    \label{fig:sync-lat}
\end{figure}

\myitem{Is \name practical?}
%Are the overheads of \name practical for real-world use?}

While we do not focus on a full system implementation in this paper, we quantify the practical compute and synchronization overhead introduced by \name. Figure~\ref{fig:dp-compute} reports the time required to solve the dynamic program for Recursive Doubling across a range of network sizes. For typical scale-up deployments of up to $64$ GPUs, the solver completes within $20\mu$s. Even for larger configurations of up to $1024$ nodes, the average runtime remains under $35\mu$s. These schedules can also be computed and cached for future use.

Furthermore, Figure~\ref{fig:sync-lat} shows the synchronization overhead measured among GPUs accessing an array in a shared memory space. Each GPU is programmed to flip a bit at its assigned index in the array and we measure the time it takes for all GPUS to complete this operation. To emulate larger systems, we increase the size of the array and initialize multiple threads on remote GPUs that each flip corresponding entries in the shared array. For an $8$ GPU network, the average synchronization latency is approximately $4\mu$s while a $256$ one shows under $45\mu s$. 

Our results indicate that both the compute and synchronization overheads of the synthesis framework are modest relative to the performance gains it enables.
These overheads are not fundamental~---~we can reduce them further through targeted hardware support e.g., through on-chip schedule synthesis, system-level optimizations where we overlap reconfiguration with computation when the step-wise communication is known \emph{a priori}.

\section{Related Work}
We briefly discuss the significant research efforts in the past in improving collective communication performance.
% We briefly discuss prior work on collective communication and reconfigurable interconnects.

\myitem{Topology-aware collectives:}
A substantial body of work designs collective algorithms specialized for fixed network topologies~\cite{10.1145/3437801.3441620,285084,10.1145/3651890.3672249}.
Algorithms such as Bruck~\cite{642949}, Sack and Gropp~\cite{10.1145/2686882}, Swing~\cite{295653}, and BineTrees~\cite{10.1145/3712285.3759835} optimize communication for multiported static interconnects.
Bruck's algorithm further generalizes AllReduce to $\log_d(n)$ steps for $d$-port networks, but does not model network congestion.
More broadly, collective synthesis approaches~\cite{10.1145/3437801.3441620,285084,10.1145/3651890.3672249,305352} assume a static topology throughout execution.
In contrast, our work focuses on \emph{topology synthesis}, enabling dynamic reconfiguration during a collective while explicitly balancing reconfiguration delay and congestion.

\myitem{Reconfiguration-aware circuit-switching:}
Reconfigurable circuit-switched network topologies have been widely studied in datacenter settings~\cite{10.1145/3387514.3406221,10.1145/3098822.3098838,10.1145/3651890.3672273,10.1145/3579312,10.1145/3651890.3672248,10.1145/3519935.3520020,10.1145/2619239.2626328,10.1145/2896377.2901479,10.1145/2486001.2486007,1230204,10.1145/3409964.3461786,10.1145/2934872.2934911,10.1145/3651890.3672222,10.1145/2716281.2836126}.
Early systems often assumed negligible reconfiguration delays~\cite{10.1145/3387514.3406221,10.1145/3098822.3098838}, while later work incorporated reconfiguration cost into optimization objectives~\cite{10.1145/2716281.2836126,1230204,10.1145/2896377.2901479}.
Many of these approaches operate over discrete topology choices and model reconfiguration as a switching cost~\cite{10.1145/2896377.2901479,10.1145/2716281.2836126}, which limits their ability to capture congestion and routing flexibility.
Opera~\cite{opera} allows multi-hop forwarding within a topology but does not adapt to reconfiguration delays.
Our approach bridges reconfiguration cost and network throughput via maximum concurrent flow, enabling multi-hop routing while explicitly accounting for reconfiguration delay.

\myitem{Circuit-switching for collectives:}
Recent work has explored reconfigurable interconnects tailored to collective communication~\cite{10.1145/3748273.3749203,10.1145/3696348.3696856}.
Chronos~\cite{10.1145/3748273.3749210} preschedules circuits using step-wise collective structure, but does not explicitly consider reconfiguration delays.
Actina~\cite{10.1145/3712285.3759842} supports dynamic reconfiguration for ML workloads, but the topology remains static within collective execution.

\myitem{High-throughput topologies:}
Datacenter network topologies have been extensively studied~\cite{10.1145/1402958.1402967,10.1145/1592568.1592576,10.1145/2999572.2999580,180604,7013016,227667}.
Clos-based networks achieve full throughput at high cost~\cite{10.1145/3651890.3672265,10.1145/3651890.3672233,10.1145/1402958.1402967}.
Expander-based topologies reduce hardware cost but sacrifice throughput~\cite{10.1145/2999572.2999580,10.1145/3452296.3472913}.
Torus-based networks align well with modern parallel workloads despite low bisection bandwidth~\cite{10.1145/3579371.3589350,295551}, but can incur substantial congestion.
Our work targets such scale-up domains, where adaptive reconfiguration can reduce congestion and improve collective performance.

\section{Conclusion}

\name is a systematic approach that optimizes reconfigurable topologies for collective communication. It explicitly balances propagation delay, congestion, and reconfiguration overhead. \name synthesizes reconfiguration schedules. We show it is general and finds provably optimal schedules for recursive doubling AllReduce. Our schedules significantly reduce collective completion time. We show the benefits of reconfiguration depend critically on when and how we apply it.

Our framework opens several directions for future work which include joint synthesis of collective communication and interconnect topologies. This work lays the foundation for adaptive photonic scale-up domains where collectives and topologies co-evolve.

\label{bodyLastPage}

\bibliographystyle{plainurl}
\bibliography{references}

\appendix

\section{MISOCP Formulation for the Subproblem}\label{app:misocp}

We formulate a Mixed-Integer Second-Order Conic Program (MISOCP) to compute
the optimal topology $G_{a,b}$ that minimizes the completion time of steps
$a$ through $b$ without reconfiguration. Recall that this minimization objective is our subproblem in the schedule synthesis~\S\ref{sec:subproblem}.

\paragraph{Variables:}
\begin{itemize}
    \item $x_{u,v} \in \mathbb{Z}_{\ge 0}$: number of directed links from node $u$ to node $v$.
    \item $f^{(i)}_{s,t}(u,v) \ge 0$: flow routed on directed edge $(u,v)$ for demand $(s,t)$
    in step $i$.
    \item $T_i \ge 0$: transmission time of step $i$.
    \item $\theta_i \ge 0$: throughput scaling factor for step $i$.
\end{itemize}

\paragraph{Parameters:}
\begin{itemize}
    \item $d$: maximum in-degree and out-degree per node.
    \item $c$: capacity of a single directed edge.
    \item $D^{(i)}_{s,t} = m_i \cdot \mathcal{M}_i(s,t)$: demand from node $s$ to $t$ in step $i$.
    \item $\beta$: inverse of the link bandwidth.
\end{itemize}

Our goal is to minimize the sum of tranmission times $T_i$ for the sequence of steps $a$ through $b$ i.e., minimizing the total transmission time for these steps together. 

\paragraph{Objective:}
\begin{align}
\min \quad \sum_{i=a}^{b} T_i
\end{align}

Edges in the topology are binary variables in our formulation. To capture the maximum number of links available at each node, we impose node degree constraints as follows.

\paragraph{Degree constraints:}

\begin{align}
\sum_{v \in V} x_{u,v} &\le d \quad &&\forall u \in V \label{eq:deg-out} \\
\sum_{u \in V} x_{u,v} &\le d \quad &&\forall v \in V \label{eq:deg-in}
\end{align}

The rest of the formulation follows standard maximum concurrent flow formulation. In particular, we consider flow variables $f_{s,t}^(i)(u,v)$ sent on edge $(u,v)$, corresponding to the demand between $(s,t)$ in step $i$. To satisfy flow conservation, we incorporate the following constraints at source, destination, and intermediate nodes.

\paragraph{Flow conservation for each step:}
\begin{align}
\sum_{v} f^{(i)}_{s,t}(u,v) - \sum_{v} f^{(i)}_{s,t}(v,u)
=
\begin{cases}
\theta_i \cdot D^{(i)}_{s,t}, & u=s \\
- \theta_i \cdot D^{(i)}_{s,t}, & u=t \\
0, & \text{otherwise}
\end{cases}
\quad
\forall i,s,t,u
\label{eq:flow-cons}
\end{align}

Further, the total flow between $(u,v)$ must satisfy the available capacity between the two nodes given by $c\cdot x_{u,v}$, where $x_{u,v}$ is the variable indicating the number of edges between $u,v$.

\paragraph{Edge capacity constraints:}
\begin{align}
\sum_{s,t} f^{(i)}_{s,t}(u,v)
\;\le\;
c \cdot x_{u,v}
\quad
\forall i,(u,v)
\label{eq:capacity}
\end{align}

Finally, since our objective is to minimize transmission time, we express the following constraint as an inequality.
\paragraph{Transmission time:}
\begin{align}
T_i \;\ge\; \frac{\beta\, m_i}{\theta_i},
\quad \theta_i \ge 0,\; T_i \ge 0
\qquad \forall i .
\label{eq:tx-time}
\end{align}
The transmission time for step $i$ is exactly $\frac{\beta m_i}{ \theta_i}$. However, writing this as an inequality does not relax the solution: the constraint is tight at optimality, since any strictly larger value of $T_i$ would increase the objective.

The above constraint is equivalent to the bilinear constraint
\begin{align}
\theta_i \cdot T_i \;\ge\; \beta\, m_i,
\quad \forall i.
\label{eq:tx-time-bilinear}
\end{align}

\noindent
Constraint~\eqref{eq:tx-time-bilinear} admits a second-order conic
representation. In particular, it is equivalent to the following
second-order cone (SOC) constraint:
\begin{align}
\left\|
\begin{bmatrix}
2\sqrt{\beta m_i} \\
\theta_i - T_i
\end{bmatrix}
\right\|_2
\;\le\;
\theta_i + T_i,
\quad \forall i.
\label{eq:tx-time-soc}
\end{align}

\paragraph{Variable type:}
\begin{align*}
f^{(i)}_{s,t}(u,v) &\ge 0 \\
x_{u,v} &\in \mathbb{Z}_{\ge 0} \\
\theta_i & \in (0, 1] \\
T_i &\in \mathbb{R}_{\ge 0}
\end{align*}

The solution to this formulation yields the optimal topology $G_{a,b}$, which is constructed from the decision variables $x_{u,v}$ indicating the number of directed edges between every node pairs. The formulation also returns the per-step completion times $T_i$, and the total completion time $\sum_{i=a}^b T_i$, which are used by the outer dynamic program.

We restrict the search space in our evaluations to a collection of topologies: \first shifted rings, \second an expander, \third shifted torus topologies, \fourth topologies that match the communication pattern for each step of the collective, and the synthesizer selects the best topology for each step of the collective. This restriction converts the MISOCP into SOCP without integer variables for finding optimal topologies. As discussed in \S\ref{sec:subproblem}, the sequential dependencies between steps induce the conic structure of the formulation. In contrast, a standard maximum concurrent flow formulation assumes that all demands are available for transmission simultaneously and does not capture such dependencies, resulting in a simple linear program.

\label{LastPage}

\end{document}